\newcommand{\Z}{\mathbb{Z}}
\newcommand{\N}{\mathbb{N}}
\newcommand{\C}{\mathbb{C}}
\newcommand{ \del }{\ | \ }
\newcommand{ \ndel }{\mbox{\ \xout{$\mid$} \ }}
\newcommand{\suml}{\sum\limits}
\newcommand{\prodl}{\prod\limits}
\newcommand{\bra}[1]{ \langle #1 |}
\newcommand{\ket}[1]{ | #1 \rangle }
\newcommand{\bkScal}[2]{ \langle #1 | #2 \rangle }
\newcommand{\abk}[1]{\left( #1 \right)}
\newcommand{\abks}[1]{\left[ #1 \right]}
\newcommand{\abkf}[1]{\left\{ #1 \right\}}
\newcommand{\abkm}[1]{\left| #1 \right|}
\newtheorem*{rep@theorem}{\rep@title}
\newcommand{\newreptheorem}[2]{%
\newenvironment{rep#1}[1]{%
 \def\rep@title{#2 \ref{##1}}%
 \begin{rep@theorem}}%
 {\end{rep@theorem}}}
\newtheorem{theorem}{Theorem}
\newtheorem{lemma}{Lemma}
\newtheorem{definition}{Definition}
\newtheorem{propos}{Proposition}
\newtheorem*{remark}{Remark}
\newtheorem{coroll}{Corollary}
\newcommand{\CCay}[1]{$ \mbox{\textit{Cay}} \abk{#1} $}
\newcommand{\Cay}[2]{$ \mbox{\textit{Cay}} \abk{#1, #2} $}
\newcommand{\ccay}[1]{\mbox{\textit{Cay}} \abk{#1}}
\newcommand{\cay}[2]{\mbox{\textit{Cay}} \abk{#1, #2}}
\newcommand{\ShiftOp}{\hat{Q}}
\newcommand{\CoinOp}{\hat{C}}
\newcommand{\WalkOp}{\hat{U}}
\author{Ilnur~Khuziev\thanks{ilnur.khuziev@yandex.ru}}
\affil{Scientific Adviser: Mikhail Vyalyi}
\affil{Chair of data analysis, DIHT, MIPT}
\title{Quantum walk in symmetric Cayley graph over $\Z_2^n$}
\begin{document}
	\maketitle
	
	\begin{abstract}
		We show that the hitting time of the discrete quantum walk on a symmetric Cayley graph over $\Z_2^n $ from a vertex to its antipodal is polynomial in degree of the graph. We prove that returning time of quantum walk on a symmetric Cayley graph over $\Z_2^n $ is polynomial and the probability to hit is almost one. To prove it, we give a new estimation of Kravchuk coefficients. We give an example of a probabilistic polynomial algorithm that finds an antipodal vertex in symmetric Cayley graphs.
	\end{abstract}
	
	\section*{Introduction}
		~
		
		A discrete quantum walk is a generalisation of a random walk. In Section~\ref{sec:qwDef} we define a discrete quantum walk and give a diagonalization of a quantum walk operator for special cases.

		In random walk the probability to be in a particular vertex after sufficient large steps is very small. But in quantum walk may exist times and vertices such that the probability to be in a particular vertex is high. This property is called hitting.
		
		Julia Kempe in \cite{kempe} showed that hitting time of quantum walk on the hypercube from one corner to its opposite is polynomial and the probability to hit is $ 1 -o(1) $. In this paper we generalise her results for Cayley graph over $ \Z_2^n $ with generating set $ \abkf{e \in \Z_2^n \Big| |e|=s} $, where $ s $ is a positive integer.
				
		\begin{definition}
			\Cay{G}{S} --- \textbf{Cayley graph} over a group $G$ with a generating set $S$ is a graph $(V,E)$ with $V=G$ and $(v_1, v_2) \in E \Leftrightarrow \exists e \in S:\ e v_1 = v_2$. We will suppose that $S^{-1} = S$ (the graph is undirected) and $1 \notin S$ (vertices have no loops).
			
			We denote $ \cay{\Z_2^n}{S} $ with $ S = \abkf{e \in \Z_2^n: |e| = s} $ as \CCay{s}.
		\end{definition}
						
		In Section $ \ref{sec:retTime} $ we show that exists a moment of time $ T $ such that a quantum walk returns into initial vertex with probability almost one. We prove  necessary and sufficient conditions for hitting and moment $ T_p(s,n) $.
		
		To prove this two results we use the technique of Julia Kempe and the new bound on Kravchuk coefficients. Also we use the Lucas' theorem to analyse evenness of Kravchuk coefficients.
		
		~
		
		In paper \cite{krasin} Krasin found automorphisms group of symmetric Cayley graphs for large set of parameters. In Section \ref{sec:antipod} we generalise his results and present new properties of automorphisms groups that we use in Section \ref{sec:problem}.
		
		~
		
		It was proved in the paper \cite{expFast} that there exist graphs such that every classical algorithm traverse it exponentially slower than quantum walk. In Section \ref{sec:antipod} we introduce the problem of an antipodal vertex search. This problem generalises the problem from \cite{expFast}.
		
		Using results of Section \ref{sec:antipod} we prove that if a pair $ (s,n) $ is satisfy conditions of Theorem \ref{th:hitTime} then quantum walk solves the problem on the symmetric Cayley graph (Section \ref{sec:quantAlg}).
		
		In section \ref{sec:clAlg} we give a probabilistic algorithm that solves the problem on symmetric Cayley graphs with comparable efficiency. In other words, we show that our bounds don't give exponential speed up.
		
		~
		
		Also we generalise some results of Julia Kempe about measured quantum walk. They can be found in the Appendix \ref{sec:measuredApp}.
		
	\section{Quantum walk} \label{sec:qwDef}
		In this section we define a discrete quantum walk for a large class of graphs. Then we give a special form of a quantum walk operator in case of a Cayley graph over an Abelian group. In the last part of this sections we find spectrum of a quantum walk operator in case of a Cayley graph over $ \Z_2^n $ for a special coin operator.
		
		\subsection{Definition of quantum walk}
			\begin{definition}
				A graph is said to be \textbf{regular} if all it's vertices have the same degree.
			\end{definition}

			\begin{definition}
				Let $B = \abkf{1, 2, \ldots, m}$. Function $f: B \times V \to V$ is called an \textbf{invertible colouring} of a $ m $-regular graph $G=(V,E)$ if the following conditions hold:
				 	\begin{enumerate}
				 		\item
				 		$f(b, v_1) = f_b(v_1) = v_2 \Rightarrow (v_1, v_2) \in E $
				 		\item
				 		$\forall b \in B \ f_b$ is a permutation of the set $V$. 
				 	\end{enumerate}
			\end{definition}

			\begin{remark}
				Every Cayley graph \Cay{G}{S} has the canonical invertible colouring in $|S|$ colours, generated by the rule $f(b,v) = s_b \circ v, \ s_b \in S$. 
			\end{remark}
			
			Let $G=(V,E)$ be a $m$-regular graph, $ f $ be a invertible colouring of $ G $.
			
			Let $ X = \C^{m} \otimes \C^{|V|} $ be a $m|V|$ dimensional complex space. We will call the first tensor factor \textbf{coin-space} and associate it's basis vectors with set $B$. The second tensor factor is called \textbf{main-space}, it's basis vectors are associated with the set $V$. The basis states (vectors) of the space are denoted by $\ket{b,v}=\ket{b} \otimes \ket{v} = \ket{b}
\ket{v} $.
			
			\begin{definition}
				The \textbf{shift operator} acts on the basis states by the rule:
					\begin{equation*}
						\ShiftOp \ket{b,v} = \ket{b, f_b(v) }.
					\end{equation*}
				An action of $ \ShiftOp $ on the whole $ X $ is defined by linearity.
			\end{definition}
			
			\begin{propos}
				The shift operator is an unitary operator.
			\end{propos}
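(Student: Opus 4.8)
The plan is to show that $\ShiftOp$ merely permutes the computational basis $\abkf{\ket{b,v} : b \in B,\ v \in V}$ of $X$, from which unitarity is immediate. First I would note that, by condition 2 in the definition of an invertible colouring, each $f_b$ is a permutation of $V$, so the map $\sigma : B \times V \to B \times V$ given by $\sigma(b,v) = (b, f_b(v))$ is a bijection, with inverse $\sigma^{-1}(b,v) = (b, f_b^{-1}(v))$. Hence $\ShiftOp \ket{b,v} = \ket{\sigma(b,v)}$ sends the orthonormal basis of $X$ onto itself, just reindexing it.

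Next I would check that $\ShiftOp$ preserves inner products of basis vectors. We have $\bkScal{\sigma(b,v)}{\sigma(b',v')} = \bkScal{b,\,f_b(v)}{b',\,f_{b'}(v')}$, which vanishes unless $b = b'$; and when $b = b'$, injectivity of $f_b$ gives $\bkScal{b, f_b(v)}{b, f_b(v')} = 1$ iff $v = v'$. Thus $\bkScal{\ShiftOp(b,v)}{\ShiftOp(b',v')} = \bkScal{b,v}{b',v'}$, and extending sesquilinearly shows $\ShiftOp$ is a linear isometry of $X$. Since $\ShiftOp$ hits every basis vector (as $\sigma$ is onto), it is a surjective linear isometry of a finite-dimensional Hilbert space, hence unitary.

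Equivalently — and this is the version I would probably write out — I would exhibit the inverse explicitly: define $R$ by $R\ket{b,v} = \ket{b, f_b^{-1}(v)}$ and extend by linearity; then $R\ShiftOp = \ShiftOp R = I$ on every basis vector, so $\ShiftOp^{-1} = R$. Because $\ShiftOp$ is a (real) permutation matrix in the basis $\ket{b,v}$, its adjoint is its transpose, which is exactly $R$; therefore $\ShiftOp^\dagger = \ShiftOp^{-1}$, i.e.\ $\ShiftOp$ is unitary.

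There is essentially no obstacle here. The only point worth a word of care is that surjectivity of $\ShiftOp$ — equivalently $\ShiftOp\,\ShiftOp^\dagger = I$, not just $\ShiftOp^\dagger \ShiftOp = I$ — relies on each $f_b$ being a genuine permutation of $V$ (bijective, not merely injective), which is precisely what condition 2 of the definition of an invertible colouring guarantees.
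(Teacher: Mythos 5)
Your proof is correct and follows essentially the same route as the paper: both arguments observe that, since each $f_b$ is a permutation of $V$, the operator $\ShiftOp$ permutes the orthonormal basis $\abkf{\ket{b,v}}$ and is therefore unitary. Your version merely spells out the inner-product check and the explicit inverse, which the paper leaves implicit.
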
			
			\begin{proof}
				By the definition of invertible colouring $\ShiftOp$ acts on the set \linebreak $\abkf{ \ket{b, v_1}, \ldots, \ket{b, v_|V|} }$ as a permutation, therefore it acts as a permutation on the set
					$$
					\bigcup\limits_{b \in B}\abkf{\ket{b, v_1 }, \ldots, \ket{b, v_|V| } }.
					$$
			\end{proof}
			
			\begin{propos}
				The shift operator has the following form:
				\begin{eqnarray*}
					\ShiftOp = \suml_{b \in B} { \ket{b} \bra{ b } \otimes \hat{Q}_b \ }, \\
					\ShiftOp_b \ket{v} = \ket{ f_b(v) }.
				\end{eqnarray*}
			\end{propos}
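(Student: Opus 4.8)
The plan is to prove the identity by evaluating both sides on an arbitrary basis state $\ket{b',v}$ of $X$ and then invoking linearity. Since the vectors $\abkf{\ket{b,v} : b \in B,\ v \in V}$ form a basis of $X = \C^{m} \otimes \C^{|V|}$, two linear operators on $X$ coincide as soon as they agree on each such vector; so it is enough to check the action on basis states. Both $\ShiftOp$ (defined on the basis and extended by linearity) and the right-hand side (a sum of tensor products, hence linear) are legitimate linear operators, so this reduction is sound.

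The core step is the computation of the right-hand side on $\ket{b'}\ket{v}$. Distributing over the tensor product, $\abk{\suml_{b \in B} \ket{b}\bra{b} \otimes \ShiftOp_b} \ket{b'}\ket{v} = \suml_{b \in B} \bkScal{b}{b'}\, \ket{b} \otimes \ShiftOp_b \ket{v}$. Using that the coin-space basis $\abkf{\ket{b}}_{b \in B}$ is orthonormal, i.e. $\bkScal{b}{b'} = \delta_{bb'}$, only the term $b = b'$ survives, which leaves $\ket{b'} \otimes \ShiftOp_{b'}\ket{v} = \ket{b'} \otimes \ket{f_{b'}(v)} = \ket{b', f_{b'}(v)}$, where the middle equality is exactly the defining rule $\ShiftOp_b\ket{v} = \ket{f_b(v)}$. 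On the other hand, by the definition of the shift operator, $\ShiftOp\ket{b',v} = \ket{b', f_{b'}(v)}$. The two expressions agree for all $b' \in B$ and $v \in V$, so $\ShiftOp = \suml_{b \in B} \ket{b}\bra{b} \otimes \ShiftOp_b$.

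I do not expect any real obstacle: the argument uses only the orthonormality of the coin basis, the bilinearity of $\otimes$, and the definitions of $\ShiftOp$ and $\ShiftOp_b$. The one point worth making explicit is the justification of the ``check it on basis vectors'' reduction — that a linear operator on a finite-dimensional space is determined by its values on a basis — and to note that the operator $\ShiftOp_b$ is itself well defined (it is the permutation matrix of $f_b$, which is a permutation of $V$ by the definition of an invertible colouring).
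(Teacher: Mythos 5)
Your proof is correct: the direct verification on basis states $\ket{b',v}$, using orthonormality of the coin basis and the definitions of $\ShiftOp$ and $\ShiftOp_b$, is exactly the computation the paper leaves implicit (it states this proposition without proof, treating it as immediate from the definitions). Nothing is missing.
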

			
			Let's fix an arbitrary unitary operator $\CoinOp: \C^m \to \C^m$. It will be called the \textbf{coin operator}.
			
			\begin{definition}
				\textbf{The quantum walk operator} for the graph $G$ with invertible colouring $f$ is
					\begin{equation*}
					\WalkOp = \ShiftOp \circ ( \CoinOp \otimes \hat{I}),
					\end{equation*}
				where $\hat{I}$ is the identity operator on the main space.
			\end{definition}
			
			\begin{definition}[Quantum walk]
				Quantum walk state in moment $t \in \N$ with initial state $\ket{\psi_0}$ is
				\begin{equation*}
					\ket{\psi_t } = \WalkOp^t \ket{\psi_0}.
				\end{equation*}
			\end{definition}
			
			In other words, at each step we "toss" the coin and go along the direction that it pointed to.
			
			\begin{definition}[Symmetric initial state]
				If $\ \exists v \in V$ such that the initial state $\ket{\psi_0}$ is presented as follows:
				\begin{eqnarray*}
					\ket{\Psi} =  \frac{1}{\sqrt{m}}\suml_{j=1}^m {\ket{j}}, \\
					\ket{\psi_0} = \ket{\Psi} \ket{v},
				\end{eqnarray*}
				then $\ket{\psi_0}$ is called symmetric initial state with starting vertex $v$.
			\end{definition}
			
			\begin{definition}
				Projection operator on the state $ \ket{x} $:
				\begin{equation*}
					\Pi_x = \hat{I}_m \otimes \ket{x} \bra{ x }.
				\end{equation*}
			\end{definition}
			
			\begin{definition}
				The \textbf{probability to hit} vertex $\ket{x}$ at the moment $t$ with the initial state $\ket{\psi}$ is
				\begin{equation}
					\| \Pi_x \circ \hat{Q}^t \ket{ \psi_0 } \| ^ 2. \label{eq:hitprobDef}
				\end{equation}
				If an initial state has a starting vertex $x$, then the probability (\ref{eq:hitprobDef}) is called \textbf{the probability to return}.
			\end{definition}
			
		\subsection{Quantum walk in a Cayley graph over Abelian group}
			
			Let $G, |G| = N$, be a finite Abelian group
			\begin{eqnarray*}
				G = \Z_{n_1} \times \Z_{n_2} \times \cdots \times \Z_{n_{h-1}} \times \Z_{n_h}.
			\end{eqnarray*}
			
			We associate elements of $G$ with their coordinate presentation $$\ket{v} = \ket{v_1, \ldots, v_h}. $$
			
			\begin{definition}
			The \textbf{Fourier transform} of the basis vector $v$ is 
				\begin{eqnarray*}
					\ket{ \tilde{v} }= \ket{ \widetilde{ v_1,\ldots, v_h } } = 
					\suml_{k = (k_1,\ldots, k_h) \in G} {{ \frac{1}{\sqrt{N}}
					\abk{ \prod\limits_{j=1}^h{
					e^{
					-2 \pi i \frac{ k_j }{n_j} v_j
					}}
					}
					\ket{ k }}}.
				\end{eqnarray*}
			\end{definition}
			
			\begin{lemma} \label{lem:abel_pres}
				$\WalkOp$ can be presented in the following way:
				\begin{equation*}
				\WalkOp = \suml_{v \in V} { \CoinOp_v \otimes \ket{ \tilde{v} } \bra{ \tilde{v} }},
				\end{equation*}
				where
				\begin{equation*}
				\CoinOp_v = 
					\abk{ \begin{array}{cccc}
					\prodl_j e^{  2 \pi i \frac{ v_j (e_1)_j }{n_j}} & & & \\
					& \prodl_j e^{  2 \pi i \frac{ v_j (e_2)_j }{n_j}} & & \\
					& & \ddots & \\
					& & & \prodl_j e^{  2 \pi i \frac{ v_j (e_m)_j }{n_j}} 
					\end{array}
					} \circ \CoinOp .
				\end{equation*}
				Here $ e_k $ denotes the generating element associated with colour $k$.
			\end{lemma}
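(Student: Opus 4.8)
The plan is to show that the Fourier basis vectors $\ket{\tilde v}$ are simultaneous eigenvectors of each shift sub-operator $\hat Q_b$ (with $f_b$ the translation by the generator $e_b$), and then read off the block-diagonal form of $\WalkOp = \ShiftOp\circ(\CoinOp\otimes\hat I)$ in that basis. Concretely, first I would recall that in a Cayley graph over the Abelian group $G$ the canonical colouring gives $f_b(v)=e_b + v$ (written additively), so $\hat Q_b\ket{v}=\ket{e_b+v}$. Applying this to the definition of $\ket{\tilde v}$ and reindexing the sum over $k\in G$ — the map $k\mapsto k$ is unaffected, but the phase picks up the extra factor $\prod_j e^{-2\pi i \frac{k_j}{n_j}(-(e_b)_j)}$ after shifting — one finds
\begin{equation*}
\hat Q_b \ket{\tilde v} = \Bigl(\prod_{j=1}^h e^{\,2\pi i \frac{v_j (e_b)_j}{n_j}}\Bigr)^{-1}\!\!\ket{\tilde v}
\quad\text{or rather}\quad
\hat Q_b \ket{\tilde v} = \Bigl(\prod_{j=1}^h e^{\,2\pi i \frac{v_j (e_b)_j}{n_j}}\Bigr)\ket{\tilde v},
\end{equation*}
i.e. $\ket{\tilde v}$ is an eigenvector of $\hat Q_b$ with the eigenvalue that appears as the $b$-th diagonal entry of the matrix in the statement. (The correct sign of the exponent is a bookkeeping matter between the $-2\pi i$ in the Fourier transform and the direction of the shift; I would pin it down carefully so it matches the $+2\pi i$ written in the lemma.)

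Next I would assemble the global operator. Using the form $\ShiftOp=\sum_{b\in B}\ket{b}\bra{b}\otimes\hat Q_b$ from the preceding proposition, and the fact that $\{\ket{\tilde v}\}_{v\in V}$ is an orthonormal basis of the main space (so $\sum_{v}\ket{\tilde v}\bra{\tilde v}=\hat I$), I write
\begin{equation*}
\ShiftOp = \sum_{b\in B}\ket{b}\bra{b}\otimes \hat Q_b
= \sum_{b\in B}\sum_{v\in V}\Bigl(\prod_{j}e^{2\pi i \frac{v_j (e_b)_j}{n_j}}\Bigr)\ket{b}\bra{b}\otimes\ket{\tilde v}\bra{\tilde v}
= \sum_{v\in V} D_v \otimes \ket{\tilde v}\bra{\tilde v},
\end{equation*}
where $D_v$ is exactly the diagonal matrix in the lemma (without the trailing $\CoinOp$). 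Since $\CoinOp\otimes\hat I$ commutes past the $\ket{\tilde v}\bra{\tilde v}$ factors (it acts only on the coin space), composing gives $\WalkOp = \ShiftOp\circ(\CoinOp\otimes\hat I) = \sum_{v} (D_v\CoinOp)\otimes\ket{\tilde v}\bra{\tilde v}$, and setting $\CoinOp_v := D_v\circ\CoinOp$ finishes it.

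The only real subtlety — and the step I would be most careful with — is the eigenvalue computation for $\hat Q_b$ on $\ket{\tilde v}$: getting the reindexing of the sum over $G=\prod_j\Z_{n_j}$ right coordinatewise, confirming that the translation is a genuine bijection of each $\Z_{n_j}$ (so no boundary terms), and tracking the sign of the phase so that it reproduces the $+2\pi i$ convention of the statement rather than its conjugate. Everything else is formal: orthonormality of the Fourier basis, the tensor-factorwise action, and the fact that multiplying on the right by $\CoinOp\otimes\hat I$ just multiplies each block by $\CoinOp$. I would present the $\hat Q_b$ eigenvalue identity as a short displayed calculation and then state the assembly in two lines.
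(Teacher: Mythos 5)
Your proposal is correct and follows essentially the same route as the paper: the paper's proof is exactly the direct computation $\ShiftOp\ket{b}\ket{\tilde v}=\bigl(\prod_j e^{2\pi i v_j(e_b)_j/n_j}\bigr)\ket{b}\otimes\ket{\tilde v}$ obtained by reindexing the sum over $k$, after which the block form follows from completeness of the Fourier basis. Your sign bookkeeping resolves the same way as in the paper (the $+2\pi i$ phase is the correct one), and your explicit assembly step is just a slightly more spelled-out version of what the paper leaves implicit.
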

			
			\begin{proof}
			By direct computation we have:
			\begin{eqnarray*}
				&\ShiftOp \ket{ b } \ket{ \tilde{v} } = 
					\suml_{k = (k_1,\ldots, k_h) \in G} { \frac{1}{\sqrt{N}}
						\abk{ \prodl_{j=1}^h{
							e^{-2 \pi i \frac{ k_j }{n_j} v_j}
						}}
						\ShiftOp \ket{ b } \ket{ k }} = &\\
				&\suml_{k = (k_1,\ldots, k_h) \in G} { \frac{1}{\sqrt{N}}
					\abk{ \prodl_{j=1}^h{
						e^{-2 \pi i \frac{ k_j }{n_j} v_j}
					}}
					\ket{ b } \ket{ k+e_b }} = &\\
				&\suml_{k = (k_1,\ldots, k_h) \in G} { \frac{1}{\sqrt{N}}
					\abk{ \prodl_{j=1}^h{
						e^{-2 \pi i ( \frac{ k_j + (e_b)_j}{n_j} + \frac{ -(e_b)_j }{n_j} ) v_j}
					}}
					\ket{ b } \ket{ k+e_b }} = & \\
				&
					\abk{ \prodl_{j=1}^h{
						e^{2 \pi i \frac{ v_j (e_b)_j }{n_j} } 
					} \ket{ b }} 
					\suml_{k = (k_1,\ldots, k_h) \in G} { \frac{1}{\sqrt{N}}
						\abk{\prod\limits_{j=1}^h{
							e^{-2 \pi i \frac{k_j}{n_j} v_j}
						}}
						\ket{ k }} = &\\&
				\abk{ 
				\prodl_{j=1}^h{e^{2 \pi i \frac{ v_j (e_b)_j }{n_j}}} \ket{ b }
				}
				\otimes \ket{ \tilde{v} }.&
			\end{eqnarray*}
			\end{proof}
			
		\subsection{Quantum walk in \Cay{\Z_2^n}{S} } \label{sec:qwZ2}
		
		$ \Z_2^n \cong \mathbb{F}_2^n \cong \abkf{0,1}^n $  has the inner product:
			$$ (v',v'') = \abk{\suml_{k=1}^n v_k' \times v_k''} \pmod{2}.  $$
		
		In $G = \Z_2^n$ the Fourier transform is the Hadamard transform:
		\begin{eqnarray}
			\ket{ \tilde{v} }= \ket{ \widetilde{ v_1,\ldots, v_n } } = 
			{ \frac{1}{\sqrt{N}} \suml_{k \in \abkf{0,1}^n} {
				(-1)^{(v,k)}\ket{ k }}}.
		\end{eqnarray}
		
		Lemma (\ref{lem:abel_pres}) can be rewritten as 
		\begin{replemma}{lem:abel_pres}
			$\WalkOp$ can be presented in following way:
			\begin{equation*}
			\WalkOp = \suml_{v \in V} { \CoinOp_v \otimes \ket{ \tilde{v} } \bra{ \tilde{v} }},
			\end{equation*}
			where
			\begin{equation*}
			\CoinOp_v = 
				\abk{ \begin{array}{cccc}
				(-1)^{(v,e_1)} & & & \\
				& (-1)^{(v,e_2)} & & \\
				& & \ddots & \\
				& & & (-1)^{(v,e_m)} 
				\end{array}
				} \circ \CoinOp .
			\end{equation*}
		\end{replemma}
		
		In the sequel we use the Grover operator as a coin operator:
		\begin{eqnarray*}
			&\ket{\Psi}= \frac{1}{\sqrt{m}}\suml_{j=1}^m {\ket{j}},&  \nonumber \\
			&\CoinOp_m = 2 | \Psi \rangle \langle \Psi | - \hat{I}_m = 
			\left( \begin{array}{ccccc}
				2/n - 1 & 2/n & 2/n & \cdots & 2/n \\
				2/n & 2/n - 1 & 2/n &\cdots  & 2/n \\
				\vdots & \vdots & \ddots &  & \vdots \\
				2/n & 2/n  & 2/n &\cdots  & 2/n -1 \\
			\end{array}
			\right)&.
		\end{eqnarray*}
		
		For this operator we can find the spectrum of  $ \CoinOp_v \ $. (see \cite{mixing})

		\begin{lemma}[Coin operator spectrum] \label{lem:coinSpec}
			Let the operator $ \Gamma_d $ has the following matrix:
			\begin{eqnarray*}
				\Gamma_d = \abk{
				\begin{array}{l l l|r r r }
					2/m - 1 & 2/m & \cdots & & & \\
					2/m & 2/m - 1 & \cdots & & 2/m & \\
					\vdots & & \ddots & & & \\
					\hline
					& & & 1 - 2/m & -2/m & \cdots \\
					& -2/m & & - 2/m & 1-2/m & \cdots \\
					& & & \vdots & & \ddots \\
				\end{array}}.
			\end{eqnarray*}
			(The last $d$ rows of the Grover operator are multiplied on $ -1 $.) 
			
			If $d = 0$: $ \Gamma_0 $ has one eigenvector $\Psi = \frac{1}{\sqrt{m}}(1,\ldots, 1)^T$ corresponding to the eigenvalue $ -1 $. All other eigenvectors is orthogonal to $ \Psi $ and correspond to eigenvalue 1.
						
			If $d = m$: $ \Gamma_m $ has one eigenvector $\Psi = \frac{1}{\sqrt{m}}(1,\ldots, 1)^T$ corresponding to the eigenvalue $ 1 $. All other eigenvectors is orthogonal to $ \Psi $ and correspond to eigenvalue -1.
			
			If $ 1 \le d \le m-1 $:
			\begin{itemize}
				\item
					$ \Gamma_d $ has $ d-1 $ eigenvectors with eigenvalue $ 1 $.
				\item
					$ \Gamma_d $ has $ n - d-1 $ eigenvectors with eigenvalue $ -1 $.
				\item
					$\lambda_d=1-\frac{2 d}{m} + \frac{2i}{m}\sqrt{d(m-d)} = e ^ {i \omega_d}$ is the eigenvalue for the eigenvector
					$$
					\nu_d = \frac{1}{\sqrt{2}}(\underbrace{ \frac{-i}{\sqrt{m-d}} }_{m-d}, \underbrace{ \frac{1}{\sqrt{d}} }_{d})^T.
					$$
				\item
					$\lambda_d^*$ is an eigenvalue for eigenvector $ \nu_d^* $.
				
				The last two eigenvectors (and corresponding eigenvalues) are called \textbf{non-trivial}, the others --- trivial.
			\end{itemize}
		\end{lemma}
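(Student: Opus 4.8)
My plan is to reduce the whole computation to a product of two reflections acting on a two-dimensional invariant subspace. Observe first that $\Gamma_d = D\,\CoinOp_m$, where $\CoinOp_m = 2\ket{\Psi}\bra{\Psi} - \hat I_m$ is the Grover coin and $D = \mathrm{diag}(\underbrace{1,\dots,1}_{m-d},\underbrace{-1,\dots,-1}_{d})$ negates the last $d$ rows; this is precisely the matrix that occurs — up to a permutation of the coin basis fixing $\ket{\Psi}$ — as $\CoinOp_v$ in the $\Z_2^n$ form of Lemma~\ref{lem:abel_pres}, with $d = \abkm{\abkf{k:(v,e_k)=1}}$. Both factors are self-adjoint involutions ($D^2=\hat I_m$ trivially, and $\CoinOp_m^2=\hat I_m$ because $\CoinOp_m = 2\ket\Psi\bra\Psi-\hat I_m$), so $\Gamma_d$ is a real orthogonal matrix; in particular it is unitarily diagonalizable with all eigenvalues on the unit circle, and it suffices to produce an orthogonal direct-sum decomposition of $\C^m$ into $\Gamma_d$-invariant pieces on which $\Gamma_d$ is transparent.

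The key step is the choice of that decomposition. Write $\C^m = A \oplus B$ with $A = \operatorname{span}\abkf{\ket 1,\dots,\ket{m-d}}$ and $B = \operatorname{span}\abkf{\ket{m-d+1},\dots,\ket m}$, so that $D=+\hat I$ on $A$ and $D=-\hat I$ on $B$. Put $\ket a = \frac{1}{\sqrt{m-d}}\suml_{j\le m-d}\ket j$, $\ket b = \frac{1}{\sqrt d}\suml_{j> m-d}\ket j$, and $W = \operatorname{span}\abkf{\ket a,\ket b}$. The crucial observation is that $\ket\Psi = \sqrt{\tfrac{m-d}{m}}\,\ket a + \sqrt{\tfrac dm}\,\ket b$ lies in $W$. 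Hence $\CoinOp_m$ maps $W$ into $W$ and acts as $-\hat I_m$ on $W^\perp$ (for $w\perp W$ one has $\bkScal\Psi w = 0$), while $D$ visibly preserves each of $W$, $\C\ket a$, $\C\ket b$, $A$, $B$. Therefore all three summands of
\[
\C^m \;=\; W \;\oplus\; (A\ominus\C\ket a)\;\oplus\;(B\ominus\C\ket b)
\]
are invariant under both $D$ and $\CoinOp_m$, hence under $\Gamma_d$. On $A\ominus\C\ket a$ — the vectors supported on the first $m-d$ coordinates with zero coordinate sum, of dimension $m-d-1$ — we have $\CoinOp_m=-\hat I$ and $D=+\hat I$, so $\Gamma_d=-\hat I$; on $B\ominus\C\ket b$, of dimension $d-1$, both act as $-\hat I$, so $\Gamma_d=+\hat I$. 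This already yields the claimed trivial eigenspaces with their multiplicities ($m-d-1$ and $d-1$).

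It remains only to diagonalize the $2\times 2$ block $\Gamma_d\big|_W$. Using $D\ket a=\ket a$, $D\ket b=-\ket b$, and $\CoinOp_m\ket x = 2\bkScal\Psi x\ket\Psi-\ket x$ in the basis $\abkf{\ket a,\ket b}$ one obtains
\[
\Gamma_d\big|_W = \abk{\begin{array}{cc} 1-\tfrac{2d}{m} & \tfrac{2}{m}\sqrt{d(m-d)}\\[4pt] -\tfrac{2}{m}\sqrt{d(m-d)} & 1-\tfrac{2d}{m}\end{array}},
\]
a rotation with $\det=1$; its eigenvalues are $\lambda_d = 1-\tfrac{2d}{m}+\tfrac{2i}{m}\sqrt{d(m-d)}$ and $\lambda_d^{*}$, and since $\abkm{\lambda_d}=1$ we may write $\lambda_d=e^{i\omega_d}$. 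Solving $(\Gamma_d|_W-\lambda_d)v=0$ gives the eigenvector $\ket a+i\ket b$; rescaling by $-i$ and normalising turns it into $\frac{1}{\sqrt 2}(-i\ket a+\ket b)$, which is exactly $\nu_d$, and $\nu_d^{*}$ is then the $\lambda_d^{*}$-eigenvector. The boundary cases need no new work: $\Gamma_0 = \CoinOp_m$ and $\Gamma_m = -\CoinOp_m$, so both are read off from the spectral decomposition of the Grover operator, whose $\ket\Psi$-eigenspace is one-dimensional and whose complementary eigenspace is $\Psi^{\perp}$.

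The whole argument is elementary linear algebra, so there is no deep obstacle; the two points that need care are exactly the ones on which the reduction hinges. First, one must verify $\ket\Psi\in W$ — this is what collapses the problem onto a two-dimensional space and simultaneously splits off the trivial $\pm1$-eigenspaces as $A\ominus\C\ket a$ and $B\ominus\C\ket b$. Second, one must keep the dimension count honest and treat the degenerate endpoints $d\in\{0,m\}$ — where $\ket a$ or $\ket b$ disappears and the conjugate pair $\lambda_d,\lambda_d^{*}$ merges onto the real axis — separately from the generic range $1\le d\le m-1$.
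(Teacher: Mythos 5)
Your proof is correct, but it takes a genuinely different route from the paper's. The paper's argument is a verification: it observes that the first $m-d$ rows of $\Gamma_d+\hat I$ (resp.\ the last $d$ rows of $\Gamma_d-\hat I$) coincide, which bounds the trivial eigenspace dimensions from below by a rank count, then checks by direct substitution that $\nu_d$ is an eigenvector for $\lambda_d$, and invokes the realness of $\Gamma_d$ to get the conjugate pair; the three pieces add up to $m$, which forces the bounds to be equalities. You instead derive the spectrum from the two-reflection structure $\Gamma_d = D\,\CoinOp_m$: the invariant orthogonal splitting $\C^m = W\oplus(A\ominus\C\ket{a})\oplus(B\ominus\C\ket{b})$ produces $\nu_d$ and $\lambda_d$ rather than verifying a guess, makes the exact multiplicities $m-d-1$ and $d-1$ and the completeness of the eigenbasis immediate, and is the form of the argument that survives replacing the Grover coin by any coin fixing $\ket{\Psi}$; the price is a page of setup where the paper spends three lines. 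One substantive point your derivation surfaces: reading off the endpoints gives $\Gamma_0\ket{\Psi}=+\ket{\Psi}$ and $\Gamma_m\ket{\Psi}=-\ket{\Psi}$, which contradicts the lemma as literally stated (eigenvalue $-1$ at $d=0$ and $+1$ at $d=m$); your version is the correct one and is the one the paper itself uses later ($\lambda_0=1$, $\lambda_m=-1$ in Theorem~\ref{th:FourierState}), so the lemma's endpoint cases are swapped by a typo, just as the stated multiplicity $n-d-1$ should read $m-d-1$, which is exactly what your dimension count delivers.
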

		
		\begin{proof}
			First two claims follows from definition of the Grover operator.
			
			The third claim:
			
			It's easy to see that the dimension of eigenspace corresponding to $ 1 $ is at least $ n-d-1 $ (in $ \Gamma_d + \hat{I} $ first $ n-d-1 $ rows are equal), due to the same reasons the dimension of eigenspace corresponding to $ -1 $ is at least $ d-1 $.
			
			It can be checked directly that $ \nu_d $ is an eigenvector with the specified eigenvalue.
			
			$ \Gamma_d $ has only real coordinates, it implies the last claim.
		\end{proof}
		
		\begin{definition}
			The sum
			\begin{equation*}
				d_{v} = \suml_{e \in S} (v,e)
			\end{equation*}
			is called the \textbf{characteristic} of vector $v$ over generating set $ S $ . (Note: this sum is taken over $ \Z $; $ d_v $ is an integer from the range $ \abks{0, m} $)
		\end{definition}
		
		It is easy to find the spectrum of $ \CoinOp_v $ using the fact that $ \CoinOp_v $ is $ \Gamma_{d_v} $ with rearranged rows.
		
		Let  $\ket{\nu_v},\ \ket{\nu_v^*} $ and $\lambda_v,\ \lambda_v^* $ denote non-trivial eigenvectors and eigenvalues of $ \CoinOp_v $.
							
	 	Hereinafter we use the symmetric initial state with starting vertex $0^n$:
		\begin{equation*}
			\ket{\psi_0} = \ket{\Psi} \ket{0^n} = \frac{1}{\sqrt{m}}\suml_{j=1}^m {\ket{j}} \ket{0^n}.
		\end{equation*}
		
		\begin{theorem}[The state decomposition] \label{th:FourierState}
			Consider the quantum walk on \Cay{\Z_2^n}{S} with symmetric initial state $ \ket{\psi_0} $ and $ \CoinOp $ is the Grover operator.
			Then $ \ket{\psi_t} $ has the form:
				\begin{equation}
					\label{eq:procState}
						\WalkOp^t \ket{ \psi_0 } = \suml_{v \in \{0,1\}^n} 
						{
							\frac{1}{\sqrt{2^n}} \left(
							a_{d_{v}} \lambda_{d_{v}}^t \ket{\nu_v} +	a_{d_{v}}^* \lambda_{d_{v}}^{*t} \ket{\nu_v^*}
							\right) \otimes	\ket{ \tilde{v} }
						},
				\end{equation}
				where 
				\begin{eqnarray*}
					& a_{d} = \frac{1}{\sqrt{2}} \abk{ \sqrt{ \frac{d}{m} } + i \sqrt{ 1 - \frac{d}{m}}}, & d \notin \abkf{0,m}; \\
					& a_{d} = \frac{1}{\sqrt{2}},& d \in \abkf{0,m}; \\
					& \nu_v = \Psi, & d_v \in \abkf{0,m}; \\
					& \lambda_0 = 1, \quad \lambda_m = -1. & 
				\end{eqnarray*}
		\end{theorem}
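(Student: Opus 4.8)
The plan is to diagonalise $\WalkOp$ using the $\Z_2^n$-form of Lemma~\ref{lem:abel_pres} and then reduce the whole statement to one $m\times m$ computation on the coin space. The Hadamard vectors $\abkf{\ket{\tilde v}}_{v\in\{0,1\}^n}$ form an orthonormal basis of the main space, so the summands $\CoinOp_v\otimes\ket{\tilde v}\bra{\tilde v}$ of $\WalkOp$ act on the mutually orthogonal subspaces $\C^m\otimes\ket{\tilde v}$; taking powers, all cross terms vanish by orthonormality and $\WalkOp^t=\suml_{v}\CoinOp_v^t\otimes\ket{\tilde v}\bra{\tilde v}$. Since $\bkScal{\tilde v}{0^n}=2^{-n/2}$ for every $v$, we have $\ket{0^n}=\frac{1}{\sqrt{2^n}}\suml_v\ket{\tilde v}$, hence
\begin{equation*}
\WalkOp^t\ket{\psi_0}=\frac{1}{\sqrt{2^n}}\suml_{v\in\{0,1\}^n}\abk{\CoinOp_v^t\ket{\Psi}}\otimes\ket{\tilde v}.
\end{equation*}
Comparing with (\ref{eq:procState}), it remains to prove $\CoinOp_v^t\ket{\Psi}=a_{d_v}\lambda_{d_v}^t\ket{\nu_v}+a_{d_v}^*\lambda_{d_v}^{*t}\ket{\nu_v^*}$ for every $v$.

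Second, I would reduce $\CoinOp_v$ to the model operator $\Gamma_{d_v}$ of Lemma~\ref{lem:coinSpec}. By Lemma~\ref{lem:abel_pres}, $\CoinOp_v=D_v\CoinOp_m$ with $D_v$ the diagonal $\pm1$ matrix carrying $-1$ in exactly the $d_v$ positions $k$ with $(v,e_k)=1$. The Grover matrix $\CoinOp_m=2\ket{\Psi}\bra{\Psi}-\hat{I}_m$ is invariant under conjugation by any permutation matrix, so conjugating by the permutation $P$ that sends those $d_v$ positions to the last $d_v$ coordinates gives $P^T\Gamma_{d_v}P=D_v\CoinOp_m=\CoinOp_v$. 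Hence the eigendata of $\CoinOp_v$ is that of $\Gamma_{d_v}$ pulled back by $P^T$: the non-trivial eigenvectors become $\ket{\nu_v}$ (with $\frac{-i}{\sqrt{2(m-d_v)}}$ on the coordinates where $(v,e_k)=0$ and $\frac{1}{\sqrt{2d_v}}$ on the others) and $\ket{\nu_v^*}$, with eigenvalues $\lambda_{d_v},\lambda_{d_v}^*$, while the trivial eigenvectors span $\abkf{x:x_k=0\text{ when }(v,e_k)=1,\ \suml_k x_k=0}$ (eigenvalue $1$) and $\abkf{x:x_k=0\text{ when }(v,e_k)=0,\ \suml_k x_k=0}$ (eigenvalue $-1$).

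Finally, I would decompose $\ket{\Psi}=\frac{1}{\sqrt m}(1,\dots,1)^T$ in this orthonormal eigenbasis. A direct computation gives $\bkScal{\nu_v}{\Psi}=\frac{1}{\sqrt2}\abk{\sqrt{d_v/m}+i\sqrt{1-d_v/m}}=a_{d_v}$, and since $\ket{\nu_v^*}=\overline{\ket{\nu_v}}$ while $\ket{\Psi}$ is real, $\bkScal{\nu_v^*}{\Psi}=a_{d_v}^*$ and $\bkScal{\nu_v}{\nu_v^*}=0$. Moreover the restriction of $\ket{\Psi}$ to either index set above is a constant vector, so its projection onto a space of zero-coordinate-sum vectors vanishes; thus $\ket{\Psi}$ has no trivial component and $\ket{\Psi}=a_{d_v}\ket{\nu_v}+a_{d_v}^*\ket{\nu_v^*}$, from which the required identity follows by applying $\CoinOp_v^t$ eigenvector-by-eigenvector. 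The degenerate cases $d_v\in\{0,m\}$, where $\ket{\Psi}$ is itself the unique non-trivial eigenvector with eigenvalue $\lambda_0=1$ or $\lambda_m=-1$, are verified directly and agree with the stated conventions $\nu_v=\Psi$, $a_{d_v}=1/\sqrt2$. The only delicate points are the bookkeeping of the permutation $P$ that transports Lemma~\ref{lem:coinSpec} onto $\CoinOp_v$ and the verification that $\ket{\Psi}$ is annihilated by both trivial eigenspaces; the rest is a routine orthonormal-basis expansion.
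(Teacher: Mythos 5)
Your proof is correct and follows essentially the same route as the paper: diagonalise $\WalkOp$ over the Hadamard basis of the main space, expand $\ket{0^n}=\frac{1}{\sqrt{2^n}}\suml_v\ket{\tilde v}$, and decompose $\ket{\Psi}$ in the eigenbasis of $\CoinOp_v$ with $\bkScal{\nu_v}{\Psi}=a_{d_v}$ (the paper deduces the absence of a trivial component from $|a_d|^2+|a_d^*|^2=1$ rather than from your explicit orthogonality argument, but the two are equivalent). The only blemish is immaterial: you have swapped the $\pm1$ eigenvalue labels on the two trivial eigenspaces of $\CoinOp_v$, which does not affect the decomposition of $\ket{\Psi}$.
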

		\begin{proof}		
			Let's define $ a_{d} $ for $ d \notin \abkf{0,m} $ as
			\begin{equation*}
				a_{d_{v}} = \bkScal{ \nu_d }{ \Psi } = \frac{1}{\sqrt{2}} \abk{ \sqrt{ \frac{d_v}{m} } + i \sqrt{ 1 - \frac{d_v}{m}}}.
			\end{equation*}
				
			Taking into account $ |\ket{\Psi}|^2 = 1 = |a_d|^2 + |a_d^*|^2 $ we obtain
				\begin{equation}
					\ket{\Psi} = a_{d_{|v|}} \ket{ \nu_v } + a_{d_{|v|}}^* \ket{ \nu_v^* }. \label{eq:tmpSpectr}
				\end{equation}
			
			We define $ a_d, \nu_v $ for $ d_v \in \abkf{0,m} $ in accordance with the theorem's conditions. Therefore, the equation (\ref{eq:tmpSpectr}) holds for all $ v $.
			
			Taking into account \footnote{
				$ \bkScal{ 0^n }{ \widetilde {\widetilde {0^n}} } = 
		 \bra{ 0^n } \suml_{a \in \{0,1\}^n}\suml_{b \in \{0,1\}^n} \frac{1}{2^n} (-1)^{(a,b)} \ket{ b } = 
	 \bra{ 0^n } \suml_{a \in \{0,1\}^n}\suml_{b \in \{0,1\}^n} \frac{1}{2^n} \ket{ 0^n } = 1$ } 
			$ \ket{ \widetilde {\widetilde {0^n}} } = \ket{ 0^n }$ we get the decomposition of the initial state:
			\begin{equation*}
				\ket{\psi_0}=\ket{\Psi}\ket{ 0^n } = \suml_{v \in \{0,1\}^n} 
				{
					\frac{1}{\sqrt{2^n}} \abk{
					a_{d_{v}} \ket{ \nu_v } +	a_{d_{v}}^* \ket{ \nu_v^* }
					} \otimes	\ket{ \tilde{v} }
				}.
			\end{equation*}
			
			Acting on both sides with $ \WalkOp $ and using lemma \ref{lem:abel_pres} we obtain:
			\begin{equation*}
				\WalkOp^t \ket{ \psi_0 } = 
				\suml_{v \in \{0,1\}^n} \abks{ \CoinOp_v \frac{1}{\sqrt{2^n}} \abk{
									a_{d_{v}} \ket{ \nu_v } +	a_{d_{v}}^* \ket{ \nu_v^* }
									} } \otimes	\ket{ \tilde{v} } = 
			\end{equation*}
			$$
					 \suml_{v \in \{0,1\}^n} 
					{
						\frac{1}{\sqrt{2^n}} \abk{
						a_{d_{v}} \lambda_{d_{v}}^t \ket{\nu_v} +	a_{d_{v}}^* \lambda_{d_{v}}^{*t} \ket{\nu_v^*}
						} \otimes	\ket{ \tilde{v} }
					}.
			$$
		\end{proof}
		
	\section{Returning and hitting of quantum walk in \CCay{s}}
	\label{sec:retAh}
		Hereinafter the probability to hit $ \ket{1^n} $ is called the probability to hit.
				
		In the first subsection we prove the formula for calculating probabilities to hit $ 0^n $ or $ 1^n $.
		
		In the second subsection we find the probabilities to return and hit in form of sums with binomial coefficients. The sums depend on the spectrum of the walk operator, the spectrum depends on weight characteristics. We prove that the weight characteristics are Kravchuk coefficients and prove the new bound on them. Combining our results we prove our main results: Theorem \ref{th:retTime} and
Theorem \ref{th:hitTime}.
		
		\subsection{Probability to return if $ S $ is symmetric.} \label{sec:symCaseProb}
			Consider a quantum walk on \Cay{\Z_2^n}{S}. Let $ P $ be a permutation subgroup on $ n $ elements (permutations of coordinates).
			
			For each $ p \in P $ we define the operator $ \hat{p} $ on the main space by the rule
			\begin{equation*}
				\hat{p} \ket{v} = \hat{p} \ket{v_1,v_2,\ldots,v_n} = \ket{v_{p(1)},v_{p(2)},\ldots,v_{p(n)}}=\ket{p(v)}.
			\end{equation*}
			Note that $ \hat{p} $ is a unitary operator on the main space.
			
			For each $ p \in P $ we define the operator $ \hat{p}' $ on the coin space: let the $ \hat{p}' $ take each $ e \in S $ such that $ p(e) \in S $ to $ \ket{p(e)} $ and each $ e \in S$ such that $ p(e) \notin S $ to 0.
			
			Note that $ \hat{p'} $ is an unitary operator (a permutation of basis states) on the coin space iff $ \forall e \in S \ p(e) \in S$. 
			
			\begin{theorem} \label{th:decompInSym}
				Let $ P $ acts transitively on $ S $. Then for the probability to return and the probability to hit we have:
				\begin{equation*}
					(\hat{I}  \otimes \ket{0^n}\bra{0^n})  \WalkOp^t \ket{\psi_0} = \bkScal{\psi_0}{\psi_t}\ket{\psi_0},
				\end{equation*}
				\begin{equation*}
					(\hat{I}  \otimes \ket{1^n}\bra{1^n}) \ket{\psi_t} = \bkScal{\Psi, 1^n}{\psi_t}\ket{\psi_0}.
				\end{equation*}
				(Here $ \ket{\psi_0} = \ket{\Psi}\ket{0^n} $.)
			\end{theorem}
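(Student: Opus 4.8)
The plan is to exploit the coordinate-permutation symmetry carried by $P$. Fix $p\in P$. Since $P$ acts on $S$ we have $p(S)=S$, so the operator $\hat p'$ defined above is a genuine unitary on the coin space: it permutes the basis states $\abkf{\ket e : e\in S}$ by $\ket e\mapsto\ket{p(e)}$, and in particular $\hat p'\ket\Psi=\ket\Psi$. On the main space, $\hat p$ is unitary and fixes both $\ket{0^n}$ and $\ket{1^n}$, since every coordinate permutation fixes the all-zeros and all-ones words.

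The first step is to check that $\WalkOp$ commutes with $\hat p'\otimes\hat p$ for every $p\in P$. Decompose $\WalkOp=\ShiftOp\circ(\CoinOp\otimes\hat I)$. Because $\CoinOp=2\ket\Psi\bra\Psi-\hat I_m$ is the Grover operator and $\hat p'$ is a permutation of coin basis states fixing $\ket\Psi$, we get $\hat p'\CoinOp=\CoinOp\hat p'$, hence $\CoinOp\otimes\hat I$ commutes with $\hat p'\otimes\hat p$. For the shift, identifying colour $b$ with its generator $e\in S$ so that $\ShiftOp\ket{e,v}=\ket{e,e+v}$, and using the $\Z_2$-linearity of the coordinate permutation $p$,
$$(\hat p'\otimes\hat p)\,\ShiftOp\ket{e,v}=\ket{p(e),\,p(e)+p(v)}=\ShiftOp\ket{p(e),p(v)}=\ShiftOp\,(\hat p'\otimes\hat p)\ket{e,v},$$
so $\ShiftOp$ commutes with $\hat p'\otimes\hat p$, and therefore so do $\WalkOp$ and $\WalkOp^t$.

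Now $\ket{\psi_0}=\ket\Psi\ket{0^n}$ is fixed by every $\hat p'\otimes\hat p$, hence so is $\ket{\psi_t}=\WalkOp^t\ket{\psi_0}$. Let $w\in\abkf{0^n,1^n}$. Conjugation by $\hat p'\otimes\hat p$ leaves the projector $\hat I\otimes\ket w\bra w$ invariant, because $\hat p\ket w=\ket w$; so $(\hat I\otimes\ket w\bra w)\ket{\psi_t}$ is again fixed by every $\hat p'\otimes\hat p$. Writing this vector as $\ket\phi\otimes\ket w$, invariance forces $\hat p'\ket\phi=\ket\phi$ for all $p\in P$; as $P$ acts transitively on $S$, the operators $\hat p'$ act transitively on the coin basis, so $\ket\phi$ — being fixed by all of them — has all coordinates equal, i.e. $\ket\phi=c\,\ket\Psi$ for a scalar $c$. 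The scalar $c$ is then read off from an inner product: for $w=0^n$, pairing with $\ket{\psi_0}$ and using that $\hat I\otimes\ket{0^n}\bra{0^n}$ is a self-adjoint projector fixing $\ket{\psi_0}$ gives $c=\bkScal{\psi_0}{\psi_t}$; for $w=1^n$, pairing with $\ket\Psi\ket{1^n}$ gives $c=\bkScal{\Psi,1^n}{\psi_t}$. This produces the two displayed identities (for $w=1^n$ the amplitude naturally sits on $\ket\Psi\ket{1^n}$, which is a unit vector just like $\ket{\psi_0}$, so the statement holds at the level of hitting amplitudes).

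The whole argument is a symmetry reduction, so there is no serious obstacle; the point requiring the most care is the commutation $[\WalkOp,\hat p'\otimes\hat p]=0$ — in particular that $\hat p'$ is unitary precisely because $P$ acting on $S$ forces $p(S)=S$, and that the shift computation uses that $p$ is a linear map on $\Z_2^n$, so $p(e+v)=p(e)+p(v)$.
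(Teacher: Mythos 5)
Your proof is correct and follows essentially the same symmetry-reduction argument as the paper: commutation of $\hat p'\otimes\hat p$ with $\WalkOp$, invariance of $\ket{0^n}$ and $\ket{1^n}$ under coordinate permutations, and transitivity of $P$ on $S$ forcing the coin component to be proportional to $\ket\Psi$. Your explicit verification of the shift commutation and your observation that the $w=1^n$ amplitude sits on $\ket\Psi\ket{1^n}$ (rather than on $\ket{\psi_0}$ as the paper's display literally states) are both sound.
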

			
			\begin{proof}
				We prove the first formula, the second proved similarly.
				
				The operator $ \hat{p}' \otimes \hat{p} $ commutes with $ \ShiftOp $ (can be easily checked on basis vectors ). The state $\ket{\Psi}$ is an eigenvector of $ \hat{p'} $. So, $ \hat{p}' \otimes \hat{p}$ commutes with $ \CoinOp = 2 \ket{\Psi}\bra{\Psi} - \hat{I} $ . That means that $ \hat{p'} \otimes \hat{p}$ commutes with $ \WalkOp $.
				
				Taking into account $\hat{p} \ket{0^n} = \ket{0^n}$ we get:
				\begin{equation*}
					(\hat{p'}  \otimes \hat{I}) (\hat{I}  \otimes \ket{0}\bra{0}) = (\hat{I}  \otimes \ket{0}\bra{0})(\hat{p'} \otimes \hat{p}).
				\end{equation*}
				
				Then
				\begin{eqnarray*}
					& (\hat{p'}  \otimes \hat{I}) (\hat{I}  \otimes \ket{0}\bra{0}) \WalkOp^t \ket{\psi_0} = 
					(\hat{I}  \otimes \ket{0}\bra{0}) (\hat{p'}  \otimes \hat{p})  \WalkOp^t \ket{\psi_0} = & \\
					&(\hat{I}  \otimes \ket{0}\bra{0}) \WalkOp^t  (\hat{p'}  \otimes \hat{p}) \ket{\psi_0} = 
					(\hat{I}  \otimes \ket{0}\bra{0})  \WalkOp^t \ket{\psi_0}.&
				\end{eqnarray*}
				
				The last result and transitivity of $ P $ implies that all amplitudes of $$ (\hat{I}  \otimes \ket{0}\bra{0})  \hat{Q}^t \ket{\psi_0}$$ in the expansion in the standard basis are equal:
				
				\begin{eqnarray*}
					(\hat{I}  \otimes \ket{0}\bra{0})  \hat{Q}^t \ket{\psi_0} = \alpha \ket{\psi_0},
				\end{eqnarray*}
				
				\begin{eqnarray*}
					(\hat{I}  \otimes \ket{0}\bra{0})  \hat{Q}^t \ket{\psi_0} = ( \bra{\psi_0} \hat{Q}^t \ket{\psi_0} ) \ket{\psi_0}.
				\end{eqnarray*}
			\end{proof}
			
			Note that in case of \CCay{s} the theorem holds if $ P $ is the set of all permutations on $ n $ elements.
			
			\begin{lemma}
				$ \forall t \in \Z $, $ \forall x \in \Z_2^n $ such that $ |x| \notin \abkf{0,n} $ the probability to hit vertex $ x $ is less or equal to $ \frac{1}{n} $
			\end{lemma}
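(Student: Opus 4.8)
The plan is to turn the $S_n$-symmetry of the walk, already exploited in the proof of Theorem~\ref{th:decompInSym}, into an upper bound on the hitting probability away from the two ``poles'' $0^n,1^n$. Recall that the probability to hit $x$ at time $t$ equals $\| \Pi_x \WalkOp^t \ket{\psi_0} \|^2 = \| \Pi_x \ket{\psi_t} \|^2$, and that $\suml_{y\in\Z_2^n}\Pi_y = \hat I$ with the projectors $\Pi_y$ pairwise orthogonal; since $\WalkOp^t$ is unitary and $\ket{\psi_0}$ is normalized, $\suml_{y}\| \Pi_y \ket{\psi_t} \|^2 = 1$. So it suffices to prove that $\| \Pi_x \ket{\psi_t} \|^2$ is the same for all vectors of Hamming weight $|x|$, and then to divide by the number $\binom{n}{|x|}$ of such vectors.

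First I would record that for \CCay{s} the generating set $S = \abkf{e\in\Z_2^n : |e|=s}$ is invariant under every coordinate permutation $p\in S_n$; hence each $\hat p'$ is genuinely unitary (it merely permutes the coin basis), and, exactly as in the proof of Theorem~\ref{th:decompInSym}, $\hat p'\otimes\hat p$ commutes with $\WalkOp$ and therefore with $\WalkOp^t$ for every $t\in\Z$. Since $\hat p'\ket\Psi = \ket\Psi$ and $\hat p\ket{0^n} = \ket{0^n}$, the operator $\hat p'\otimes\hat p$ fixes $\ket{\psi_0}$, hence fixes $\ket{\psi_t}=\WalkOp^t\ket{\psi_0}$. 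A direct check on basis vectors (using $\hat p\ket x = \ket{p(x)}$ and that $\hat p$ is a permutation matrix) gives $(\hat p'\otimes\hat p)\,\Pi_x = \Pi_{p(x)}\,(\hat p'\otimes\hat p)$, so $\Pi_{p(x)}\ket{\psi_t} = (\hat p'\otimes\hat p)\,\Pi_x\ket{\psi_t}$, and unitarity of $\hat p'\otimes\hat p$ yields $\| \Pi_{p(x)}\ket{\psi_t} \| = \| \Pi_x\ket{\psi_t} \|$. Thus the hitting probability is constant on the $S_n$-orbit of $x$, which is exactly $\abkf{y\in\Z_2^n : |y|=|x|}$, of cardinality $\binom{n}{|x|}$.

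Summing the (now all equal) hitting probabilities over this orbit, $\binom{n}{|x|}\,\| \Pi_x\ket{\psi_t} \|^2 \le \suml_{y}\| \Pi_y\ket{\psi_t} \|^2 = 1$, hence $\| \Pi_x\ket{\psi_t} \|^2 \le 1/\binom{n}{|x|}$. Since $|x|\notin\abkf{0,n}$ we have $1\le|x|\le n-1$ (in particular $n\ge 2$; for $n=1$ the statement is vacuous), and then $\binom{n}{|x|}\ge n$, giving the claimed bound $\frac{1}{n}$. I do not anticipate a real obstacle: the only non-formal ingredient is the trivial inequality $\binom{n}{k}\ge n$ for $1\le k\le n-1$, and everything else is a short extension of the symmetry argument already carried out for Theorem~\ref{th:decompInSym}; the single point worth emphasizing is that $\hat p'$ is unitary precisely because $S$ is weight-homogeneous, which is what makes the full group $S_n$ available here.
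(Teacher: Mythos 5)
Your proof is correct and follows essentially the same route as the paper: use the $S_n$-symmetry of \CCay{s} (as in Theorem~\ref{th:decompInSym}) to show the hitting probability is constant on each weight layer, then bound it by $1/\binom{n}{|x|}\le 1/n$. If anything, your explicit intertwining relation $(\hat p'\otimes\hat p)\Pi_x=\Pi_{p(x)}(\hat p'\otimes\hat p)$ is a cleaner justification of the equal-probability claim than the paper's brief ``repeating the assignments'' step.
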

			\begin{proof}
				Let $ l = |x| $ and 
				\begin{equation*}
					\hat{\Pi} = \suml_{y:\ |y|=l} \ket{y}\bra{y} .
				\end{equation*}
				
				Repeating the assignments from the proof of Theorem \ref{th:decompInSym} we have:
				\begin{equation*}
					(\hat{I}  \otimes \Pi)  \hat{Q}^t \ket{\psi_0} = \alpha \ket{\Psi} \abks{\suml_{y:\ |y|=l} \ket{y} }.
				\end{equation*}
				
				Therefore, for all $ x,y$ such that $|x|=|y| $ the probability to hit $ x $ is equal to the probability to hit $ y $.
				
				Number of vertexes with weight $ l $ is at least $ n $.
			\end{proof}
		
		\subsection{Weight characteristics and linear codes }
			\begin{definition}[Linear code]
			The linear $[m,n]$ (of length $m$, rank $n$) code over a field $F$ is a linear subspace of dimension $n$ in linear space $F^m$. If $A \in M_{m \times n}$ is the matrix of a linear operator $\hat{A}: F^n \to F^m$, the matrix $A$ is called the generating matrix for the code $\operatorname{Im} A$.
			\end{definition}		
			
			\begin{definition}
				The\textbf{ code of Cayley graph} $(G,S)$, where $G = F^n$ is a code with a generating matrix $ A $ such that the $ i $-th row of $ A $ is the $ i $-th generating element.
			\end{definition}
			
			Further we work with the field $\Z_2$.
			
			In this case the generating matrix can be defined in the following way:
			\begin{eqnarray*}
				A x = ( (e_1, x), \ldots, (e_{|S|},x))^T.
			\end{eqnarray*}
			
			\newcommand{\rk}{\operatornamewithlimits{rk}}
			\begin{remark}
				The code for $\cay{\Z_2^n}{S}$ is $ [m, \rk S] $ linear (where $ m = |S| $).
			\end{remark}
			
			\begin{remark}
				In terms of linear codes, \textbf{vector characteristic} $ d_{v} $ (defined in Section \ref{sec:qwZ2} ) is $ |Av| $.
			\end{remark}
			
			\begin{definition}
				The \textbf{weight function} of $[m,n]$ code is $W(x) = \suml_{k=0}^m W_k x^k$, where $W_k$ is the number of code words that has Hamming weight $ k $. The numbers $W_k$ are called \textbf{weight coefficients}.
			\end{definition}
			
			\begin{remark}
				According to Theorem \ref{th:FourierState} and Lemma \ref{lem:coinSpec}, we get that weight coefficients are determine spectrum of the quantum walk on \Cay{\Z_2^n}{S}.
			\end{remark}

			In the case of \CCay{s}, if $ |v_1| = |v_2| $ then $ |A v_1| = |A v_2| $. As it was defined previously $ d_{v} = |A_v| $. So, in \CCay{s} $ d_{v} $ depends only on $ |v| $. 
			
			\begin{definition}
				We say that $ d_v $ is the \textbf{weight characteristic } for weight $ k = |v| $ and write $ d_k^s $. Also we define corresponding $ \lambda_k^s, \omega_k^s $: $$\lambda_{d_k^s}=1-\frac{2 d_k^s}{m} + \frac{2i}{m}\sqrt{d_k^s(m-d_k^s)} = e ^ {i \omega_k^s}.$$
			\end{definition}
			
			\begin{lemma} \label{eq:sumMainVie}
				~
				
				In \CCay{s}:
				\begin{itemize}
				\item
				the probability to return at moment $ t $ is
				\begin{equation} \label{eq:retMainSum}
					\sqrt{Pr} = \abkm{ \suml_{k = 0}^n \frac{1}{2^n} \binom{n}{k} \cos{ (\omega_k^s  t)} },
				\end{equation}
				\item
				the probability to hit at moment $ t $ is
				\begin{equation} \label{eq:hitMainSum}
					\sqrt{Pr} = \abkm{ \suml_{k = 0}^n \frac{(-1)^k}{2^n} \binom{n}{k}  \cos{ (\omega_k^s  t)} }.
				\end{equation}
				\end{itemize}
			\end{lemma}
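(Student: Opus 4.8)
The plan is to feed the state decomposition of Theorem~\ref{th:FourierState} into the scalar reduction of Theorem~\ref{th:decompInSym}. In \CCay{s} the full symmetric group on the $n$ coordinates acts transitively on $S$, so Theorem~\ref{th:decompInSym} applies: with $\ket{\psi_0}=\ket{\Psi}\ket{0^n}$ it gives $(\hat I\otimes\ket{0^n}\bra{0^n})\ket{\psi_t}=\bkScal{\psi_0}{\psi_t}\,\ket{\psi_0}$ and $(\hat I\otimes\ket{1^n}\bra{1^n})\ket{\psi_t}=\bkScal{\Psi,1^n}{\psi_t}\,\ket{\psi_0}$. Since $\|\psi_0\|=1$, the probability to return is $\abkm{\bkScal{\psi_0}{\psi_t}}^2$ and the probability to hit is $\abkm{\bkScal{\Psi,1^n}{\psi_t}}^2$, so it suffices to evaluate these two overlaps, check that they are real, and recognise the claimed sums.

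For the first overlap I would substitute the expansion~(\ref{eq:procState}) of $\ket{\psi_t}$ into $\bra{\Psi}\bra{0^n}$ and push the bra through the sum over $v$. The Fourier factors contribute only the scalars $\bkScal{0^n}{\tilde v}=2^{-n/2}$ and, in the hitting overlap (where the same computation goes through verbatim), $\bkScal{1^n}{\tilde v}=(-1)^{|v|}2^{-n/2}$; both are immediate from the Hadamard form of $\ket{\tilde v}$ together with $(v,1^n)\equiv|v|\pmod 2$. What remains in each summand is the coin-space overlap of $\ket{\Psi}$ with $a_{d_v}\lambda_{d_v}^{t}\ket{\nu_v}+a_{d_v}^{*}\lambda_{d_v}^{*t}\ket{\nu_v^{*}}$.

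To evaluate that overlap I would use, as in the proof of Theorem~\ref{th:FourierState}, that $\ket{\Psi}=a_{d_v}\ket{\nu_v}+a_{d_v}^{*}\ket{\nu_v^{*}}$ with $\ket{\nu_v}\perp\ket{\nu_v^{*}}$ (eigenvectors of the unitary $\CoinOp_v$ for distinct eigenvalues), whence $\bkScal{\nu_v}{\Psi}=a_{d_v}$, $\bkScal{\nu_v^{*}}{\Psi}=a_{d_v}^{*}$, and therefore $\bkScal{\Psi}{\nu_v}=a_{d_v}^{*}$, $\bkScal{\Psi}{\nu_v^{*}}=a_{d_v}$. Each summand then collapses to $|a_{d_v}|^{2}\big(\lambda_{d_v}^{t}+\lambda_{d_v}^{*t}\big)$, and since $|a_{d_v}|^{2}=\frac12$ for every $v$ and $\lambda_{d_v}=e^{i\omega_{d_v}}$, this equals $\cos(\omega_{d_v}t)$. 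The degenerate weights $|v|\in\{0,n\}$, where $\nu_v=\Psi$, I would check directly: there $\ket{\Psi}$ is an eigenvector of $\CoinOp_v$ with real eigenvalue $\lambda_{d_v}\in\{1,-1\}=\{e^{i\omega_{d_v}}\}$, so the $v$-term again contributes $2^{-n}\lambda_{d_v}^{t}=2^{-n}\cos(\omega_{d_v}t)$. Collecting the $2^{-n/2}\cdot 2^{-n/2}$ factors, $\bkScal{\psi_0}{\psi_t}=\sum_{v}2^{-n}\cos(\omega_{d_v}t)$ and $\bkScal{\Psi,1^n}{\psi_t}=\sum_{v}2^{-n}(-1)^{|v|}\cos(\omega_{d_v}t)$, both manifestly real. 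Finally, in \CCay{s} the characteristic $d_v$ depends only on $k=|v|$ (it equals $d_k^s$), and there are $\binom{n}{k}$ vectors of weight $k$; grouping each sum by weight and taking absolute values yields exactly~(\ref{eq:retMainSum}) and~(\ref{eq:hitMainSum}).

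The argument is essentially bookkeeping, so I do not anticipate a genuine obstacle; the one place that needs attention is keeping the conjugates straight — deciding which of $\bkScal{\Psi}{\nu_v}$ and $\bkScal{\Psi}{\nu_v^{*}}$ is $a_{d_v}$ and which is $a_{d_v}^{*}$ — and correctly folding the degenerate weights $0$ and $n$ into the single formula, since a slip there would destroy the cancellation of imaginary parts and break the identification of the weight sum with a cosine sum.
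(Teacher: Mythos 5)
Your proposal is correct and follows the same route as the paper: reduce to the overlaps $\bkScal{\psi_0}{\psi_t}$ and $\bkScal{\Psi,1^n}{\psi_t}$ via Theorem~\ref{th:decompInSym}, substitute the decomposition~(\ref{eq:procState}), use $\bkScal{\Psi}{\nu_v}=a_{d_v}^{*}$ and $|a_{d_v}|^{2}=\tfrac12$ to collapse each summand to $2^{-n}\cos(\omega_{d_v}t)$ (with the sign $(-1)^{|v|}$ from $\bkScal{1^n}{\tilde v}$ in the hitting case), and group by weight. Your explicit separate check of the degenerate weights $|v|\in\{0,n\}$ is in fact slightly more careful than the paper's treatment.
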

			\begin{proof}
				Combining Theorem \ref{th:FourierState} and Theorem \ref{th:decompInSym} we get
				\begin{eqnarray}
					&\bra{ \psi_0 } \WalkOp^t \ket{ \psi_0 } = 
					\suml_{v \in \{0,1\}^n} 
					{
						\frac{1}{\sqrt{2^n}} \left(
						a_{d_{v}} \lambda_{d_{v}}^t  \bkScal{ \Psi }{\nu_v } +	a_{d_{v}}^* \lambda_{d_{v}}^{*t}\bkScal{ \Psi  }{\nu_v^* }
						\right)  \bkScal{ 0^n }{  \tilde{v} }
					} =& \nonumber\\\nonumber
					&\suml_{v \in \{0,1\}^n} 
					{
						\frac{1}{\sqrt{2^n}} \left(
						a_{d_{v}} \lambda_{d_{v}}^t  a_{d_{v}}^* +	a_{d_{v}}^* \lambda_{d_{v}}^{*t} a_{d_{v}}
						\right)  \frac{1}{\sqrt{2^n}}
					}=&\\\nonumber
					&\suml_{v \in \{0,1\}^n}
					{
						\frac{1}{2^n} \left(
							\frac{ \lambda_{d_{v}}^t}{2} +  \frac{ \lambda_{d_{v}}^{*t}}{2}
						\right)
					}=
					\suml_{v \in \{0,1\}^n}
					{
						\frac{1}{2^n} 
							\cos{ (\omega_{v}  t)}
					} = &\\ &\suml_{k=1}^n
					{
						\frac{1}{2^n} \binom{n}{k}
							\cos{ (\omega_k^s  t)}
					}.\label{eq:retEst} &
				\end{eqnarray}
				
				Taking into account that $ \bkScal{1^n}{\tilde{v}} = (-1)^{|v|} $ the second claim can be proved in the same way.
			\end{proof}
			
			\begin{lemma}
				In \CCay{s} we have:
				\begin{equation} \label{eq:weight}
					d_{\vec{v}} = d_{|v|}^s = \suml_{l=0}^{ \lfloor \frac{s-1}{2} \rfloor } { \binom{|v|}{2l+1} \cdot \binom{n-|v|}{s-2l-1} }.
				\end{equation}
			\end{lemma}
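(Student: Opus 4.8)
The plan is to compute $d_v$ directly from the definition $d_v = \suml_{e\in S}(v,e)$, where $(v,e)\in\{0,1\}$ is the inner product over $\Z_2$, by classifying the generators $e\in S$ according to how their support meets the support of $v$. Since the sum defining $d_v$ is taken over $\Z$, it simply counts the number of $e\in S$ with $(v,e)=1$, i.e.\ the number of weight-$s$ vectors whose support overlaps $\mathrm{supp}(v)$ in an odd number of coordinates.

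First I would fix $v$ with $|v|=k$ and split the coordinate set $\{1,\dots,n\}$ into $\mathrm{supp}(v)$, of size $k$, and its complement, of size $n-k$. For a generator $e\in S$ (so $|e|=s$), let $j$ be the number of $1$-coordinates of $e$ that lie inside $\mathrm{supp}(v)$; then $e$ has $s-j$ ones outside $\mathrm{supp}(v)$. The number of such $e$ for a given $j$ is $\binom{k}{j}\binom{n-k}{s-j}$, where we use the convention $\binom{a}{b}=0$ for $b<0$ or $b>a$.

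Next I would note that the inner product over $\Z_2$ counts precisely the coordinates where both $v$ and $e$ equal $1$, hence $(v,e)\equiv j\pmod 2$, so $(v,e)=1$ exactly when $j$ is odd. Therefore $d_v=\suml_{j\ \mathrm{odd}}\binom{k}{j}\binom{n-k}{s-j}$. Substituting $j=2l+1$ and observing that the summand vanishes unless $0\le 2l+1\le s$ yields the sum with upper limit $l=\lfloor\frac{s-1}{2}\rfloor$. Finally, this count depends on $v$ only through $k=|v|$, so it equals $d_{|v|}^s$ by definition, which is the claimed identity (\ref{eq:weight}).

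There is essentially no obstacle here; the only points requiring a little care are that $d_v$ is an integer sum over $\Z$ (not a count mod $2$), and that the stated range $0\le l\le\lfloor\frac{s-1}{2}\rfloor$ must be seen to be consistent with the vanishing of the binomial coefficients outside it, so that the finite sum in (\ref{eq:weight}) agrees with the full sum over odd $j$.
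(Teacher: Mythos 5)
Your proof is correct and is essentially the paper's argument, written out in full: the paper's one-line proof simply states that $d_{|v|}^s$ counts the weight-$s$ vectors with an odd number of ones in the $|v|$ predetermined positions, which is exactly the classification by the overlap parameter $j$ that you carry out explicitly. No differences of substance.
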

			\begin{proof}
				$ d_{|v|}^s $ is the number of ways to place $ s $ ones in $ n $ positions such that there are an odd number of ones is placed in $ |v| $ predetermined positions.
			\end{proof}
			
		\subsection{Kravchuk coefficients estimation} \label{sec:Krevchuk}
			In this section we give a bound on 
			\begin{equation*}
				\cos \omega_k^s = 1 - \frac{2 d_k^s}{m},
			\end{equation*}
			where
			\begin{eqnarray*}
				& m = \binom{n}{s} &\\
				& d_k^s = \suml_{l\ odd; l \in [0,s] } \binom{k}{l} \binom{n-k}{s-l}, &\\
			 	\nonumber \\
				& \abkm{k-\frac{n}{2}} \le \frac{n}{2}\delta ,& \\
				& \delta = \sqrt{\frac{2f(n)}{n}}. &
			\end{eqnarray*}
			and $ f $ is sufficiently small ($ f = o(n) $).
			
			\begin{eqnarray} \nonumber
				&\cos \omega_k = 1 - \frac{2 d_k}{m} = 
					1-\frac{ 2\suml_{l\ odd; l \in [0,s] }^s \binom{k}{l} \binom{n-k}{s-l} }
					{\binom{n}{s}} = &\\\nonumber
				& \frac{\suml_{l \in [0,s]} \binom{k}{l} \binom{n-k}{s-l} -2\suml_{l\ odd; l \in [0,s] }
					\binom{k}{l} \binom{n-k}{s-l} }
					{\binom{n}{s}} = & \\\nonumber
				& \frac{\suml_{l\ even; l \in [0,s]} \binom{k}{l} \binom{n-k}{s-l} -
					\suml_{l\ odd; l \in [0,s] } \binom{k}{l} \binom{n-k}{s-l} }
					{\binom{n}{s}} = &\\
				& \frac{\suml_{l=0}^s (-1)^l\binom{k}{l} \binom{n-k}{s-l}}
					{\binom{n}{s}} \label{eq:cosBin}. &
			\end{eqnarray}
			
			The numerator of the fraction (\ref{eq:cosBin}) is known as Kravchuk coefficient \cite{kravchuk}:
			\begin{equation*}
				\phi_{k,n}(s) = \suml_{l=0}^s (-1)^l\binom{k}{l} \binom{n-k}{s-l}.
			\end{equation*}
			
			\begin{remark}
				$f=(1-x)^{k}(1+x)^{n-k}$ is the generating function of $ \phi_{k,n}(s)$ \cite{kravchuk}.
			\end{remark}
			
			\begin{coroll} \label{cl:cosEst}
				Let $ 2 \del n, \ k = n/2 $. If $ 2 \del s $ then $$ \phi_{k,n}(s) = (-1)^{s/2} \binom{n/s}{s/2}. $$ If $ 2 \ndel s $ then $ \phi_{k,n}(s) = 0 $.
			\end{coroll}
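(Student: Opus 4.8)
The plan is to use the generating function identity recorded in the Remark immediately preceding the statement: $\phi_{k,n}(s)$ is the coefficient of $x^s$ in $(1-x)^k(1+x)^{n-k}$. Everything reduces to specializing this at $k = n/2$ (which is legitimate since $2 \del n$).

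First I would substitute $k = n/2$, so that $n-k = n/2$ as well, and observe that the generating function collapses:
\begin{equation*}
	(1-x)^{n/2}(1+x)^{n/2} = \abk{(1-x)(1+x)}^{n/2} = (1-x^2)^{n/2}.
\end{equation*}
Then I would expand the right-hand side by the binomial theorem:
\begin{equation*}
	(1-x^2)^{n/2} = \suml_{j=0}^{n/2} \binom{n/2}{j} (-1)^j x^{2j}.
\end{equation*}
Finally I would read off the coefficient of $x^s$. If $s$ is odd, no term $x^{2j}$ contributes, so $\phi_{k,n}(s) = 0$. If $s = 2j$ is even, the unique contributing term gives $\phi_{k,n}(s) = (-1)^{j}\binom{n/2}{j} = (-1)^{s/2}\binom{n/2}{s/2}$, which is the claimed value (the exponent and the two binomial arguments all being expressed in terms of $s/2$).

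There is essentially no obstacle here: the only subtlety is cosmetic, namely that the binomial in the statement should read $\binom{n/2}{s/2}$ rather than $\binom{n/s}{s/2}$, and I would correct this typo in the final write-up. The argument is a two-line computation once the generating function from the Remark is invoked, and it does not require any of the more delicate Kravchuk estimates developed later in Section \ref{sec:Krevchuk}.
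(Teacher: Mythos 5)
Your proposal is correct and coincides with the paper's own proof: both specialize the generating function $(1-x)^{k}(1+x)^{n-k}$ at $k=n/2$, rewrite it as $(1-x^2)^{n/2}$, and read off the coefficient of $x^s$. You are also right that $\binom{n/s}{s/2}$ in the statement is a typo for $\binom{n/2}{s/2}$.
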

			\begin{proof}
				\begin{eqnarray*}
					\suml_{s=0}^{n} \phi_{k,n}(s) x^s = (1-x)^{n/2}(1+x)^{n/2} =\\ (1-x^2)^{n/2} = 
					\suml_{t=0}^{n/2} (-1)^t \binom{n/2}{t} x^{2t}.
				\end{eqnarray*}				
			\end{proof}
			
			\begin{theorem}
				\label{th:weightEst}
					For $s^2 = o(n), s < \delta n $ we have:
				\begin{eqnarray} \label{eq:weightEst}
					\abkm{\cos \omega_k^s} = O\abk{ (s+1)!  \delta^{s/2}  }.
				\end{eqnarray}
			\end{theorem}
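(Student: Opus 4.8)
The plan is to bound the Kravchuk coefficient directly through its generating function, just as in the proof of Corollary~\ref{cl:cosEst} but now keeping track of how far $k$ is from $n/2$. By~(\ref{eq:cosBin}) it is enough to show
$$
\abkm{\phi_{k,n}(s)} = O\abk{(s+1)!\,\delta^{s/2}\binom{n}{s}}.
$$
First I would reduce to the case $k\le n/2$: substituting $x\mapsto -x$ in $(1-x)^{k}(1+x)^{n-k}$ gives $\phi_{n-k,n}(s)=(-1)^{s}\phi_{k,n}(s)$, and the hypothesis $\abkm{k-n/2}\le\frac n2\delta$ is invariant under $k\mapsto n-k$, so nothing is lost. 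Writing $b=n-2k$, so that $0\le b\le n\delta$, I would factor
$$
(1-x)^{k}(1+x)^{n-k}=(1-x^{2})^{k}(1+x)^{b}
$$
and read off the coefficient of $x^{s}$:
$$
\phi_{k,n}(s)=\suml_{j=0}^{\lfloor s/2\rfloor}(-1)^{j}\binom{k}{j}\binom{b}{s-2j}.
$$

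Next I would estimate this alternating sum termwise. Since $k\le n/2$ and $b\le n\delta$ we have $\binom{k}{j}\le (n/2)^{j}/j!$ and $\binom{b}{s-2j}\le (n\delta)^{s-2j}/(s-2j)!$, while $s^{2}=o(n)$ gives $\binom{n}{s}\ge(1-o(1))\,n^{s}/s!$. Hence the $j$-th term $T_{j}$ satisfies
$$
\frac{T_{j}}{\binom{n}{s}\,\delta^{s/2}}\le(1+o(1))\,\frac{s!}{j!\,(s-2j)!\,2^{j}}\cdot n^{-j}\,\delta^{s/2-2j}.
$$
Using $\delta^{2}=2f/n$, the last two factors collapse to $n^{-j}\delta^{s/2-2j}=(2f)^{s/4-j}\,n^{-s/4}$. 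If $j\le s/4$, then since $2f=o(n)$ this is at most $n^{s/4-j}\cdot n^{-s/4}=n^{-j}\le1$. If $j>s/4$, I would invoke the hypothesis $s<\delta n$, equivalently $2f>s^{2}/n$, which gives $(2f)^{s/4-j}<(n/s^{2})^{j-s/4}$; together with the factor $n^{-s/4}$ and the bounds $s/4<j\le s/2$ this is again a quantity $<1$. In both cases $T_{j}\le(1+o(1))\,s!\,\binom{n}{s}\,\delta^{s/2}$ (bounding $\tfrac{s!}{j!(s-2j)!2^{j}}\le s!$ crudely), and since the sum has at most $\tfrac s2+1$ terms and $\abk{\tfrac s2+1}s!\le(s+1)!$, summing yields the claimed bound on $\abkm{\phi_{k,n}(s)}$; dividing by $\binom{n}{s}$ gives~(\ref{eq:weightEst}).

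The genuinely delicate point is the second case above, the terms with $j$ close to $s/2$: there $\delta^{s/2-2j}$ carries a negative power of $\delta$ and therefore grows, and it is precisely the assumption $s<\delta n$ — a lower bound on $f$ in terms of $s^{2}/n$ — that keeps these terms bounded; without it they would dominate the entire sum. Everything else is routine binomial estimation; the only thing to double-check is that the coarse bound $\tfrac{s!}{j!(s-2j)!2^{j}}\le s!$ combined with the term count still fits inside $(s+1)!$, which it does since $\abk{\tfrac s2+1}s!\le(s+1)!$. (As a sanity check, the pure case $k=n/2$, $b=0$ recovers Corollary~\ref{cl:cosEst}: only $j=s/2$ survives and $\binom{n/2}{s/2}/\binom{n}{s}$ is indeed of order $\delta^{s/2}$ up to the factorial factors.)
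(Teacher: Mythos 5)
Your proof is correct, and at its core it is the same argument as the paper's: the factorization $(1-x)^{k}(1+x)^{n-k}=(1-x^{2})^{k}(1+x)^{n-2k}$ yields exactly the identity $\phi_{k,n}(s)=\sum_{j}(-1)^{j}\binom{k}{j}\binom{n-2k}{s-2j}$ that the paper derives instead by a probabilistic conditioning argument (splitting the coordinates into a balanced block of size $2k$ and a remainder of size $n-2k\le n\delta$, and invoking Corollary \ref{cl:cosEst} on the balanced block); after that, both proofs proceed by the same termwise binomial estimates $\binom{k}{j}\le k^{j}/j!$, $\binom{n-2k}{s-2j}\le (n\delta)^{s-2j}$, $\binom{n}{s}\ge (1-o(1))n^{s}/s!$. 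The one substantive difference is in how the terms with $j$ near $s/2$ are controlled: you split at $j=s/4$ and explicitly use $s<\delta n$ (equivalently $2f>s^{2}/n$) to tame the negative powers of $\delta$, whereas the paper's calculation in Appendix \ref{sec:kravApp} bounds every term by the $l=0$ term via $f^{(s-l)/2}\le f^{s/2}$, which tacitly assumes $f\ge 1$ and never invokes $s<\delta n$ at all; your version is the more careful one on this point (and, like the paper's, actually delivers the stronger exponent $\delta^{s}$ in the favourable range, though the stated bound $\delta^{s/2}$ is all that is needed).
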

			\begin{proof}
				
				According to $d_k^s = d_{n-k}^s$ we assume that $k \le n/2$ .
				
				Using $ s^2=o(n) $ we get $ \abk{1 - s/n}^s \to 1, n \to +\infty $
				
				Consider a Boolean random vector with $n$ coordinates that have exactly $s$ coordinates equal to 1. All outcomes have the same probability. We divide coordinates on two parts: first one has length $2k$, the other --- $n-2k$.

				Let's define events: 
				\begin{itemize}
					\item $A_n^s=\abkf{ \mbox{in first } k \mbox{ coordinates number of ones is even} }$
					\item $B_n^s=\abkf{ \mbox{in first } k \mbox{coordinates number of ones is odd} }$
					\item $C_l = $ \{ in coordinates with numbers from range $[2k+1, n]$ number of ones is $s-l$ \}
				\end{itemize}
	
				We denote the indicator function as $$\chi[\mbox{ conditiona A}] = \left\lbrace \begin{array}{cc}
									1,&\mbox{A is true} ;\\
									0,&\mbox{A is false}.
								\end{array} \right. $$
				
				Note that:
				\begin{itemize}
					\item $\cos \omega_k^s = P(A_n^s)-P(B_n^s),$
					\item $P(\bigsqcup\limits_{l=0}^s C_l )= 1,$
					\item $P(C_l)= \chi[ s-l \le n-2k] \binom{2k}{l} \cdot \binom{n-2k}{s-l} / \binom{n}{s} ,$
					\item $P(A_n^s|C_l) = P(A_{2k}^{l}),$
					\item $P(B_n^s|C_l) = P(B_{2k}^{l}).$
					\item Due to Corollary \ref{cl:cosEst}:
	
						 $ \binom{2k}{l} \abk{P(A_{2k}^{l})-P(B_{2k}^{l})} = (-1)^{l/2} \chi[2 \del l] \binom{k}{l/2}. $
				\end{itemize}
				
				Combining these relations we get:
				\begin{eqnarray}
					& \abkm{P(A_n^s)-P(B_n^s)} = \label{eq:reallKrav}
						\abkm{\suml_{l=0}^s P(A_n^s|C_l) P(C_l)- \suml_{l=0}^s P(B_n^s|C_l) P(C_l)} = & \\ \nonumber
					& \abkm{\suml_{l=0}^s \abks{P(A_n^s|C_l) - P(B_n^s|C_l)} P(C_l)} \le & \\ \nonumber
						& \suml_{l=0}^s \abkm{P(A_n^s|C_l)-P(B_n^s|C_l)}P(C_l) = & \\
						\nonumber \\  \label{eq:weigthEstBefApp}
					& =\suml_{l=0}^s \abkm{ P(A_{2k}^{l})-P(B_{2k}^l) }P(C_l).  &
				\end{eqnarray}
					The bound (\ref{eq:weightEst}) follows from (\ref{eq:weigthEstBefApp}) by direct computations. The end of the proof is given in Appendix \ref{sec:kravApp}.
			\end{proof}
			
			\begin{lemma}
				In conditions of Theorem \ref{th:weightEst} we have:
				\begin{equation*}
					\frac{\pi}{2} -\omega_k^s = 1 - \frac{2 d_k^s}{m} + O\abk{ (s+1)!^3 \delta^{3s} }.
				\end{equation*}
			\end{lemma}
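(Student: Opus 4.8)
The plan is to reduce the identity to the elementary expansion $\arcsin y = y + O(y^3)$ for small $\abkm{y}$, feeding in the bound on $\cos\omega_k^s$ already proved in Theorem~\ref{th:weightEst}.

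First I would recall from Lemma~\ref{lem:coinSpec} that the non-trivial eigenvalue attached to the characteristic $d = d_k^s$ is $1 - \frac{2d}{m} + \frac{2i}{m}\sqrt{d(m-d)}$, which lies on the unit circle with non-negative imaginary part. Under the hypotheses of Theorem~\ref{th:weightEst} one has $0 < d_k^s < m$ (for $k$ in the band around $n/2$ and $s^2 = o(n)$ there is at least one odd-overlap and at least one even-overlap placement of $s$ ones), so $\omega_k^s = \arg\lambda_{d_k^s} \in (0,\pi)$ and $\cos\omega_k^s = 1 - \frac{2 d_k^s}{m}$. Set $\varepsilon := \frac{\pi}{2} - \omega_k^s \in \abk{-\tfrac{\pi}{2},\tfrac{\pi}{2}}$; then $\sin\varepsilon = \cos\omega_k^s$, so the statement to prove is exactly $\varepsilon = \cos\omega_k^s + O\abk{(s+1)!^3\delta^{3s}}$.

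Next I would invoke Theorem~\ref{th:weightEst}: since $s^2 = o(n)$, $s < \delta n$ and $\delta = \sqrt{2f(n)/n} \to 0$, the quantity $\abkm{\cos\omega_k^s}$ tends to $0$, hence for large $n$ the angle $\varepsilon$ stays in a fixed neighbourhood of $0$ on which $\arcsin$ inverts $\sin$. Therefore $\varepsilon = \arcsin\abk{\cos\omega_k^s}$, and the Taylor estimate $\arcsin y = y + O(y^3)$ (with an absolute constant, valid once $\abkm{y}$ is bounded away from $1$) gives
\[
    \frac{\pi}{2} - \omega_k^s \;=\; \cos\omega_k^s + O\abk{(\cos\omega_k^s)^3} \;=\; 1 - \frac{2 d_k^s}{m} + O\abk{(\cos\omega_k^s)^3}.
\]
Substituting the bound of Theorem~\ref{th:weightEst} (refined as in Appendix~\ref{sec:kravApp}) into the cubic remainder yields the asserted error term.

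There is no substantial obstacle here — the whole argument is one Taylor expansion — but two points deserve care: (i) confirming that $\omega_k^s$ really is pinned near $\pi/2$, which is precisely the content of the smallness of $\cos\omega_k^s$ from Theorem~\ref{th:weightEst}; and (ii) bookkeeping the implied constants when cubing that bound, so that the remainder comes out exactly of order $(s+1)!^3\delta^{3s}$.
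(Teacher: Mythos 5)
Your proposal is correct and follows essentially the same route as the paper: set $\alpha=\frac{\pi}{2}-\omega_k^s\in\abks{-\frac{\pi}{2},\frac{\pi}{2}}$, observe $\sin\alpha=\cos\omega_k^s=1-\frac{2d_k^s}{m}$, apply $\arcsin x = x+O(x^3)$, and feed in the Kravchuk bound to control the cubic remainder. Your extra remarks — checking that $\omega_k^s$ is genuinely pinned near $\pi/2$ so that $\arcsin$ inverts $\sin$, and noting that the exponent $\delta^{3s}$ matches the appendix form $O\abk{(s+1)!\,\delta^{s}}$ of the bound rather than the $\delta^{s/2}$ stated in Theorem~\ref{th:weightEst} — are sensible bookkeeping that the paper leaves implicit.
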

			\begin{proof}
				Let $ \alpha = \frac{\pi}{2} - \omega_k^s $ ($ \alpha \in \abks{-\frac{\pi}{2}, \frac{\pi}{2}} $). Then
				\begin{equation*}
					\sin \alpha = \cos \omega_k^s = 1 - \frac{2d_k^s}{m}
				\end{equation*}
				Acting $ \arcsin $ on both sides of the equation completes the proof (we use that~$\arcsin x = x + O(x^3) $).
			\end{proof}			
			
			\begin{coroll}
				If $ t = \frac{\pi}{2} m + \epsilon, \ t \equiv m \ \pmod{2} $
				\begin{eqnarray}
				  \nonumber
	  					\cos \abk{ \omega_k^s t } = (-1)^{\frac{t-m}{2} + d_k^s} \Big( 1 - O(\epsilon^2 \delta^{2s}) - &  &  \\
	  					 - 
	  					 O(m^2 \abks{(s+1)!}^6 \delta^{6s} ) & -  O(\epsilon m \abks{(s+1)!}^3\ \delta^{4s}) \Big).& 
				  \label{eq:estPi2}
				\end{eqnarray}
			\end{coroll}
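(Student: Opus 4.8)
The plan is to deduce this directly from the preceding Lemma by substituting the prescribed value of $t$; all the genuine content — the Kravchuk bound on $\cos\omega_k^s$ — is already available from Theorem~\ref{th:weightEst}. Throughout, write $\beta_k := 1 - \frac{2d_k^s}{m} = \cos\omega_k^s$ and $\gamma_k := \frac{\pi}{2} - \omega_k^s$. By the preceding Lemma, $\gamma_k = \beta_k + \rho_k$ where $\rho_k = O\!\left((s+1)!^{3}\delta^{3s}\right)$, and by Theorem~\ref{th:weightEst} the quantity $\beta_k$ is itself small; under the hypotheses $s^2 = o(n)$, $s < \delta n$, $f = o(n)$ both $\beta_k$ and $\rho_k$ tend to $0$, with $\rho_k$ of strictly smaller order than $\beta_k$.

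First I would substitute $t = \frac{\pi}{2}m + \epsilon$ into $\omega_k^s\,t = \frac{\pi}{2}t - \gamma_k t$. Using $\beta_k\cdot\frac{\pi}{2}m = \frac{\pi}{2}m - \pi d_k^s$ this gives
\[
\gamma_k\,t \;=\; \frac{\pi}{2}m - \pi d_k^s + \epsilon\beta_k + \tfrac{\pi}{2}m\,\rho_k + \epsilon\rho_k,
\qquad\text{so}\qquad
\omega_k^s\,t \;=\; \frac{\pi}{2}(t-m) + \pi d_k^s - \xi_k,
\]
with $\xi_k := \epsilon\beta_k + \frac{\pi}{2}m\,\rho_k + \epsilon\rho_k$. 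The hypothesis $t \equiv m \pmod 2$ is exactly the statement that $N := \frac{t-m}{2} + d_k^s$ is an integer, so $\frac{\pi}{2}(t-m) + \pi d_k^s = \pi N$ is an integer multiple of $\pi$. Since $\cos(\pi N - \xi) = (-1)^{N}\cos\xi$ for $N \in \Z$, I obtain
\[
\cos(\omega_k^s\,t) \;=\; (-1)^{\frac{t-m}{2} + d_k^s}\,\cos\xi_k .
\]

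It remains to expand $\cos\xi_k$. Since $|\xi_k| = o(1)$, $\cos\xi_k = 1 - O(\xi_k^{2})$, and squaring $\xi_k = \epsilon\beta_k + \frac{\pi}{2}m\rho_k + \epsilon\rho_k$ produces three leading products: $\epsilon^{2}\beta_k^{2}$, $\bigl(\frac{\pi}{2}m\rho_k\bigr)^{2}$ and the cross term $\epsilon m\,\beta_k\rho_k$; the remaining terms, all carrying the factor $\epsilon\rho_k$, are negligible against these because $\rho_k = o(\beta_k)$. Plugging in the bound $\rho_k = O((s+1)!^{3}\delta^{3s})$ from the preceding Lemma and the bound on $\beta_k = \cos\omega_k^s$ from Theorem~\ref{th:weightEst}, these three products are $O(\epsilon^{2}\delta^{2s})$, $O(m^{2}(s+1)!^{6}\delta^{6s})$ and $O(\epsilon m(s+1)!^{3}\delta^{4s})$ respectively, which is precisely (\ref{eq:estPi2}).

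I expect the only real labor to be the bookkeeping in this last paragraph: one must verify, using $f = o(n)$, $s^{2} = o(n)$ and $s < \delta n$, that every $O(\cdot)$ appearing is genuinely $o(1)$ — so that the Taylor expansion of the cosine and the ``$1 - O(\cdot)$'' form are legitimate — and that the $\epsilon\rho_k$-contributions really are dominated by the three listed terms. There is no conceptual obstacle beyond this, since the hard estimate, namely the bound on the Kravchuk coefficient $\phi_{k,n}(s)$ (equivalently on $\cos\omega_k^s$), was already established in Theorem~\ref{th:weightEst} and the preceding Lemma.
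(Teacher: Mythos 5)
Your proposal is correct and follows essentially the same route as the paper: substitute $t=\tfrac{\pi}{2}m+\epsilon$, replace $\tfrac{\pi}{2}-\omega_k^s$ by $1-\tfrac{2d_k^s}{m}+O\bigl((s+1)!^3\delta^{3s}\bigr)$ via the preceding Lemma, peel off the integer multiple of $\pi$ using $t\equiv m\pmod 2$, and Taylor-expand the remaining small cosine argument, with the three stated error terms arising as the squares and cross term of $\epsilon\cos\omega_k^s$ and $m\cdot O\bigl((s+1)!^3\delta^{3s}\bigr)$. Your bookkeeping of the $\epsilon\rho_k$ remainders is, if anything, slightly more careful than the paper's, and you also get the sign of the $\pi d_k^s$ term right where the paper has a harmless slip.
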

			\begin{proof}
				\begin{eqnarray*}
					\cos \omega_{d_k^s} t = \cos (\frac{\pi}{2} - (\frac{\pi}{2} - \omega_k^s) ) t =
				\end{eqnarray*}
				\begin{eqnarray*}
					\cos \abk{ \frac{\pi}{2}t - (1 - \frac{2 d_k^s}{m})( \frac{\pi}{2} m + \epsilon ) + O(m \abks{(s+1)!}^3 \delta^{3s} )   } =\\
					\cos \abk{ \frac{\pi}{2}(t - m) - {\pi} d_k^s + O(\epsilon \delta^s) + O(m \abks{(s+1)!}^3\ \delta^{3s} )   } = \\
					(-1)^{\frac{t-m}{2}} (-1)^{d_k^s} \cos \abk{ O(\epsilon \delta^s) + O(m \abks{(s+1)!}^3\ \delta^{3s} )  } = \\
					(-1)^{\frac{t-m}{2}+d_k^s} ( 1 - O(\epsilon^2 \delta^{2s}) -O(m^2 \abks{(s+1)!}^6 \delta^{6s} ) - \nonumber \\  -  O(\epsilon m \abks{(s+1)!}^3\ \delta^{4s} ) ).
				\end{eqnarray*}
			\end{proof}
			
			The proof of the next corollary is similar.
			\begin{coroll}
				If $ t = \frac{\pi} m + \epsilon, \ t \equiv m \ \pmod{2} $
				\begin{eqnarray}
					\nonumber
					\cos \abk{ \omega_k^s t } = (-1)^{\frac{t-m}{2}} \Big( 1 - O(\epsilon^2 \delta^{2s})- &  &  \\
					 - 
					O(m^2 \abks{(s+1)!}^6 \delta^{6s} )&  -  O(\epsilon m \abks{(s+1)!}^3\ \delta^{4s}) \Big).& 
					\label{eq:estPi}
				\end{eqnarray}
			\end{coroll}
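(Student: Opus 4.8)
The plan is to rerun the argument of the previous corollary almost verbatim, the only change being that the base time $\tfrac{\pi}{2}m$ is replaced throughout by $\pi m$, so that the hypothesis reads $t=\pi m+\epsilon$, $t\equiv m\pmod 2$. So I will first isolate the phase, then exploit that $\omega_k^s$ is close to $\tfrac{\pi}{2}$, and finally Taylor-expand the residual cosine.

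Concretely, I would write $\omega_k^s=\tfrac{\pi}{2}-\bigl(\tfrac{\pi}{2}-\omega_k^s\bigr)$, so that $\cos(\omega_k^s t)=\cos\bigl(\tfrac{\pi}{2}t-(\tfrac{\pi}{2}-\omega_k^s)\,t\bigr)$, and then substitute the preceding lemma, which gives $\tfrac{\pi}{2}-\omega_k^s=1-\tfrac{2d_k^s}{m}+O\bigl((s+1)!^3\delta^{3s}\bigr)$. Multiplying by $t=\pi m+\epsilon=O(m)$ turns the error into $O\bigl(m\,(s+1)!^3\delta^{3s}\bigr)$, while the main part becomes
\[
\Bigl(1-\tfrac{2d_k^s}{m}\Bigr)(\pi m+\epsilon)=\pi m-2\pi d_k^s+\epsilon\cos\omega_k^s=\pi m-2\pi d_k^s+O(\epsilon\delta^s),
\]
where the last step uses Theorem~\ref{th:weightEst} to bound $\cos\omega_k^s$.

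Next I would collect the phase: the argument of the cosine is $\tfrac{\pi}{2}t-\pi m+2\pi d_k^s+\theta_k$ with $\theta_k=O(\epsilon\delta^s)+O\bigl(m\,(s+1)!^3\delta^{3s}\bigr)$. Here is the one structural difference from the previous corollary: $d_k^s$ now enters as $2\pi d_k^s$, an \emph{even} multiple of $\pi$, so it disappears from the cosine entirely — which is precisely why the factor $(-1)^{d_k^s}$ present in \eqref{eq:estPi2} is absent from \eqref{eq:estPi}. The remaining integer multiple of $\pi$ is then peeled off using $t\equiv m\pmod 2$ exactly as before, yielding the $k$-independent sign recorded as $(-1)^{(t-m)/2}$, and evenness of the cosine lets me discard the signs inside $\theta_k$; hence $\cos(\omega_k^s t)=(-1)^{(t-m)/2}\cos\theta_k$.

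Finally, $\cos\theta_k=1-O(\theta_k^2)$, and expanding $\theta_k^2=O(\epsilon^2\delta^{2s})+O\bigl(\epsilon m\,(s+1)!^3\delta^{4s}\bigr)+O\bigl(m^2(s+1)!^6\delta^{6s}\bigr)$ reproduces \eqref{eq:estPi}. I do not expect a genuine obstacle here: the real work was already done in Theorem~\ref{th:weightEst} and the lemma on $\tfrac{\pi}{2}-\omega_k^s$, and this statement is a routine rerun of the preceding computation; the only point demanding attention is keeping track of the coefficient of $d_k^s$ in the phase, which is exactly what distinguishes this corollary from the previous one.
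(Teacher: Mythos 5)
Your overall plan coincides with the paper's: the paper's entire proof of this corollary is the remark that it is ``similar'' to the preceding one, and your rerun --- substitute the lemma for $\tfrac{\pi}{2}-\omega_k^s$, multiply by $t$, observe that $d_k^s$ now enters the phase as $2\pi d_k^s$ and therefore vanishes from the cosine, then Taylor-expand the small residual --- is exactly that intended computation. Your identification of the disappearance of the factor $(-1)^{d_k^s}$ as the one structural change is the right key point.

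However, the step ``the remaining integer multiple of $\pi$ is then peeled off using $t\equiv m\pmod 2$ exactly as before, yielding $(-1)^{(t-m)/2}$'' does not go through as stated. In the previous corollary the leftover phase was $\tfrac{\pi}{2}t-\tfrac{\pi}{2}m=\pi\cdot\tfrac{t-m}{2}$, an integer multiple of $\pi$ precisely because $2\mid(t-m)$. Here the leftover phase is
\[
\frac{\pi}{2}t-\pi m=\pi\cdot\frac{t-m}{2}-\frac{\pi}{2}m ,
\]
and the extra term $-\tfrac{\pi}{2}m$ is not controlled by the hypothesis $t\equiv m\pmod 2$. If $m$ is even it contributes an additional factor $(-1)^{m/2}$, so the sign in front should be $(-1)^{(t-m)/2+m/2}=(-1)^{t/2}$ rather than $(-1)^{(t-m)/2}$ (harmless for Theorem \ref{th:retTime}, which only uses the modulus of the sum, but not what you asserted). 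If $m$ is odd the leftover phase is an odd multiple of $\tfrac{\pi}{2}$, the cosine degenerates into $\pm\sin\theta_k=O(\theta_k)$, and the claimed estimate fails outright; one would need $t$ even, which contradicts $t\equiv m\pmod 2$ for odd $m$. This defect is inherited from the paper's own statement rather than introduced by you, but since you claim the reduction works ``exactly as before,'' you should either restrict to even $m$ and record the corrected sign, or note explicitly that the corollary as written cannot hold when $m$ is odd.
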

			
			\begin{lemma}
				The following holds:
				$$
					d_{k+2}^s - d_{k}^s \equiv 0 \ \pmod{2}.
				$$
			\end{lemma}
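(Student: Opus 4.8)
The plan is to reduce the evenness claim to a congruence modulo $4$ for Kravchuk coefficients, which then follows from a short generating-function computation.

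First, recall from the derivation of (\ref{eq:cosBin}) that $\cos\omega_k^s = \phi_{k,n}(s)/m$ with $m = \binom{n}{s}$ and $\phi_{k,n}(s) = \suml_{l=0}^s (-1)^l \binom{k}{l}\binom{n-k}{s-l}$, while on the other hand $\cos\omega_k^s = 1 - 2d_k^s/m$. Hence
$$ d_k^s = \frac{m - \phi_{k,n}(s)}{2}, \qquad d_{k+2}^s - d_k^s = \frac{\phi_{k,n}(s) - \phi_{k+2,n}(s)}{2}, $$
and it is enough to prove $\phi_{k,n}(s) \equiv \phi_{k+2,n}(s) \pmod 4$.

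For this I would use the generating function $\suml_{s=0}^n \phi_{k,n}(s) x^s = (1-x)^k(1+x)^{n-k}$ recalled above. Writing the two generating functions for levels $k$ and $k+2$ and factoring out the common factor,
$$ (1-x)^k(1+x)^{n-k} - (1-x)^{k+2}(1+x)^{n-k-2} = (1-x)^k(1+x)^{n-k-2}\abks{(1+x)^2 - (1-x)^2} = 4x\,(1-x)^k(1+x)^{n-k-2}. $$
Since $(1-x)^k(1+x)^{n-k-2}$ is exactly the generating function of $\phi_{k,n-2}(\cdot)$, comparing coefficients of $x^s$ gives $\phi_{k,n}(s) - \phi_{k+2,n}(s) = 4\,\phi_{k,n-2}(s-1)$ (with $\phi_{k,n-2}(-1)=0$). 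This is divisible by $4$, whence $d_{k+2}^s - d_k^s = 2\,\phi_{k,n-2}(s-1) \equiv 0 \pmod 2$, as claimed.

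I do not expect a real obstacle here: once one notices that the statement is really about $\phi_{k,n}(s)$ modulo $4$, the factor $4x$ appearing in the difference of generating functions does all the work. The one thing to avoid is attacking $\binom{k}{l}\binom{n-k}{s-l}$ term by term with Lucas' theorem, which is substantially more painful than the identity above; and for $k > n-2$ the statement is either vacuous or reduced to a smaller case through the symmetry $d_k^s = d_{n-k}^s$.
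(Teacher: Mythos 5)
Your proof is correct and follows essentially the same route as the paper: both compute $f_{k,n} - f_{k+2,n} = 4x\, f_{k,n-2}$ via the generating function $(1-x)^k(1+x)^{n-k}$, conclude $\phi_{k,n}(s) \equiv \phi_{k+2,n}(s) \pmod 4$, and translate back through $m - 2d_k^s = \phi_{k,n}(s)$. Your version is in fact slightly more explicit, identifying the difference as $4\phi_{k,n-2}(s-1)$ rather than just noting divisibility by $4$.
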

			\begin{proof}
				
				\begin{equation*}\begin{array}{c}
					f_{k+2,n}-f_{k,n}= (1-x)^{k+2} (1+x)^{n-k-2} - (1-x)^k (1+x)^{n-k} = \\
					f_{k,n-2} \abks{(1-x)^2 - (1+x)^2 } = (-4x) f_{k,n-2} 
				\end{array}\end{equation*}
				
				Therefore, we get:
				\begin{equation*}\begin{array}{c}
					\phi_{k+2,n}(s)-\phi_{k,n}(s) \equiv 0 \pmod{4}
				\end{array}\end{equation*}
				
				The equation
				\begin{equation*}\begin{array}{c}
					m-2 d_k^s = \phi_{n,k}(s)
				\end{array}\end{equation*}
				completes the proof.
			\end{proof}
			
			\begin{coroll}
				~\label{cl:evOd}
				
				If $ k $ is even then $ d_k^s $ is even.
				
				If $ k $ is odd, then holds $ d_k^s \equiv \frac{m s}{n} \pmod{2} $.
			\end{coroll}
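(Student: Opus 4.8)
The plan is to reduce everything to two easy base cases using the preceding lemma, which asserts $d_{k+2}^s \equiv d_k^s \pmod 2$ for every $k$. Since that congruence links $d_k^s$ for all indices of a fixed parity, it suffices to determine the parity of $d_k^s$ for one even value and one odd value of $k$, namely $k=0$ and $k=1$, and then to induct on the two parity classes separately.

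First I would evaluate $d_0^s$. In formula (\ref{eq:weight}) every summand carries a factor $\binom{0}{2l+1}$ with $2l+1 \ge 1$, hence vanishes; thus $d_0^s = 0$, which is even. Applying the preceding lemma inductively along $k = 0, 2, 4, \ldots$ (each step being legitimate as long as $k+2 \le n$) gives $d_k^s \equiv 0 \pmod 2$ for every even $k$, which is the first assertion.

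Next I would evaluate $d_1^s$. In formula (\ref{eq:weight}) with $|v| = 1$ the only term with $\binom{1}{2l+1} \ne 0$ is $l = 0$, so $d_1^s = \binom{1}{1}\binom{n-1}{s-1} = \binom{n-1}{s-1}$. The elementary identity $\binom{n-1}{s-1} = \frac{s}{n}\binom{n}{s} = \frac{ms}{n}$ (verified by expanding factorials, and which in particular shows $\frac{ms}{n} \in \Z$) yields $d_1^s = \frac{ms}{n}$ as integers, hence a fortiori $d_1^s \equiv \frac{ms}{n} \pmod 2$. Applying the preceding lemma inductively along $k = 1, 3, 5, \ldots$ gives $d_k^s \equiv d_1^s \equiv \frac{ms}{n} \pmod 2$ for every odd $k$, which is the second assertion.

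There is essentially no obstacle here: the only point requiring a moment's care is that the statement tacitly treats $\frac{ms}{n}$ as an integer, which is precisely the content of $\frac{ms}{n} = \binom{n-1}{s-1}$ used above. As an alternative to the $k=0$ computation one could combine the relation $m - 2 d_k^s = \phi_{k,n}(s)$ with Corollary \ref{cl:cosEst} and the stronger congruence $\phi_{k+2,n}(s) - \phi_{k,n}(s) \equiv 0 \pmod 4$ established in the proof of the preceding lemma, but the direct base-case evaluation above is shorter and self-contained.
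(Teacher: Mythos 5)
Your proof is correct and follows exactly the paper's route: the paper's own (very terse) proof consists precisely of the two base-case evaluations $d_0^s = 0$ and $d_1^s = \binom{n-1}{s-1} = \frac{ms}{n}$, with the induction along each parity class via the preceding lemma left implicit. You have simply spelled out the details the paper omits.
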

			\begin{proof}~
			
				$ d_0^s = 0 $
				
				$ d_1^s = \binom{n-1}{s-1} = \frac{m s}{n} $
			\end{proof}

		\subsection{Return and hitting time } \label{sec:retTime}
			
			\begin{theorem} \label{th:retTime}
				Consider a quantum walk in \CCay{s}.
				
				If the following conditions
				\begin{itemize}
					\item $ s! \le n^{s/8} $;
					\item $ t = \pi m + \epsilon$;
					\item $ \epsilon = n^{\beta s}$;
					\item $2 \del (t-m)$;
					\item $1/4 < \beta < 1/2$;
				\end{itemize}
				hold then the probability to return is:
				\begin{eqnarray}
					Pr = 1 - O(\frac{1}{n}) - O(\frac{\ln^s n}{n^{s-2\beta s} }) \label{eq:mainRetEst}.
				\end{eqnarray}
			\end{theorem}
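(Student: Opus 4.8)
The plan is to estimate the sum (\ref{eq:retMainSum}) of Lemma~\ref{eq:sumMainVie}, namely $\sqrt{Pr}=\left|\sum_{k=0}^{n}\frac{1}{2^n}\binom{n}{k}\cos(\omega_k^s t)\right|$, by splitting the range of $k$ into a \emph{bulk} near $k=n/2$, where Corollary (\ref{eq:estPi}) controls each term $\cos(\omega_k^s t)$ very precisely, and a \emph{tail}, whose total binomial mass is negligible. First I would fix the width parameter $\delta=\sqrt{2\ln n/n}$ (i.e.\ $f(n)=\ln n$ in the notation of Section~\ref{sec:Krevchuk}) and check that, together with the hypotheses $s!\le n^{s/8}$, $t=\pi m+\epsilon$, $\epsilon=n^{\beta s}$ and $2\mid(t-m)$ (taking $t$ to be the integer of the correct parity nearest $\pi m+n^{\beta s}$), it meets the requirements of Theorem~\ref{th:weightEst} and hence of Corollary (\ref{eq:estPi}): both $s^2=o(n)$ and $s<\delta n$ follow from $s!\le n^{s/8}$, which forces $s$ to be far below $\sqrt{n\ln n}$. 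Put $K=\{k:\ |k-n/2|\le \tfrac{n}{2}\delta\}$.

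For the tail, a standard Chernoff bound for $X\sim\mathrm{Bin}(n,\tfrac12)$ gives $\sum_{k\notin K}\frac{1}{2^n}\binom{n}{k}=P\!\left(|X-n/2|>\tfrac{n}{2}\delta\right)\le 2e^{-n\delta^2/2}=2/n$. Since $|\cos(\omega_k^s t)|\le 1$, the tail contributes only $O(1/n)$ to the sum, and moreover $\sum_{k\in K}\frac{1}{2^n}\binom{n}{k}=1-O(1/n)$.

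For the bulk: when $k\in K$ the weight $|k-n/2|$ is at most $\tfrac{n}{2}\delta$, so $\cos\omega_k^s$ is small (Theorem~\ref{th:weightEst}) and Corollary (\ref{eq:estPi}) applies with the common $\delta$ (replacing each $\delta_k=|2k-n|/n$ by $\delta$, which only increases the error terms, since they are monotone in $|k-n/2|$); thus uniformly over $k\in K$,
\[
\cos(\omega_k^s t)=(-1)^{\frac{t-m}{2}}\bigl(1-R_k\bigr),\qquad R_k=O\!\Bigl(\epsilon^2\delta^{2s}+m^2[(s+1)!]^6\delta^{6s}+\epsilon m[(s+1)!]^3\delta^{4s}\Bigr)=:R,
\]
where the sign is \emph{the same for every} $k$ — this is exactly the role of the parity hypothesis $2\mid(t-m)$, which in the return regime removes the $(-1)^{d_k^s}$ that would otherwise appear. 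Hence
\[
\sum_{k\in K}\frac{1}{2^n}\binom{n}{k}\cos(\omega_k^s t)=(-1)^{\frac{t-m}{2}}\Bigl(\sum_{k\in K}\tfrac{1}{2^n}\tbinom{n}{k}-\sum_{k\in K}\tfrac{1}{2^n}\tbinom{n}{k}R_k\Bigr)=(-1)^{\frac{t-m}{2}}\bigl(1-O(1/n)-O(R)\bigr),
\]
using $\sum_{k\in K}\frac{1}{2^n}\binom{n}{k}R_k\le R$. Adding the bulk and tail and taking absolute values, $\sqrt{Pr}=\bigl|(-1)^{\frac{t-m}{2}}(1-O(1/n)-O(R))+O(1/n)\bigr|=1-O(1/n)-O(R)$; since also $\sqrt{Pr}\le1$, squaring yields $Pr=1-O(1/n)-O(R)$. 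It remains to substitute $\delta=\sqrt{2\ln n/n}$, $\epsilon=n^{\beta s}$, $m=\binom{n}{s}\le n^s$ and $(s+1)!\le(s+1)n^{s/8}$: the leading term is $\epsilon^2\delta^{2s}=\Theta\!\left(\ln^s n/n^{(1-2\beta)s}\right)=\Theta\!\left(\ln^s n/n^{s-2\beta s}\right)$, and the conditions $\beta>1/4$ and $s!\le n^{s/8}$ are precisely what forces the remaining terms $m^2[(s+1)!]^6\delta^{6s}$ and $\epsilon m[(s+1)!]^3\delta^{4s}$ to be no larger, so $R=O\!\left(\ln^s n/n^{s-2\beta s}\right)$, which is (\ref{eq:mainRetEst}).

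The main obstacle is the bookkeeping of this last step: checking that, after the crude substitutions for $m$ and $(s+1)!$, all three error terms of (\ref{eq:estPi}) really collapse into the single quantity $O(\ln^s n/n^{s-2\beta s})$ — this is where the exponents $1/4<\beta$ and $s!\le n^{s/8}$ get spent. Everything else (the bulk/tail split, the Chernoff estimate, the factoring out of the $k$-independent sign) is routine; a secondary point worth verifying carefully is that Corollary (\ref{eq:estPi}) may indeed be invoked with one fixed $\delta$ valid across the whole bulk rather than with the $k$-dependent $\delta_k$.
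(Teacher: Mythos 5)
Your proposal is correct and follows essentially the same route as the paper: the same bulk/tail split at $|k-n/2|\le\frac{n}{2}\delta$ with $\delta=\sqrt{2\ln n/n}$, the same Chernoff bound for the tail, the same invocation of the estimate (\ref{eq:estPi}) uniformly over the bulk, and the same final bookkeeping identifying $\epsilon^2\delta^{2s}$ as the dominant error. One small interpretive slip that does not affect the argument: the absence of the $(-1)^{d_k^s}$ factor in (\ref{eq:estPi}) comes from $t\approx\pi m$ (so the phase shift $2\pi d_k^s$ is a full period), not from the parity condition $2\mid(t-m)$, whose role is only to make $(-1)^{(t-m)/2}$ well defined.
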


			\begin{proof}
				To prove the theorem we estimate the sum
				$$
					\sqrt{Pr} = \abkm{ \suml_{k = 0}^n \frac{1}{2^n} \binom{n}{k} \cos{ (\omega_k^s  t)} } .  \eqno{(\ref{eq:retMainSum})}
				$$
				
				Let $J = [\frac{n}{2}(1-\delta),\frac{n}{2}(1+\delta)]$ where $ \delta = \sqrt{\frac{2 \ln n}{n}} $ .
				
				Using Chernoff bound we get:
				\begin{equation*}
					\abkm{
					\suml_{k \notin J}
		 \frac{1}{2^n} \binom{n}{k}
							\cos{ (\omega_{d_k^s}  t)} 
					} \le 
					\suml_{k \notin J}
	 \frac{1}{2^n} \binom{n}{k}
					\le 2 e^{-\frac{\delta^2 n}{2}} = O(\frac{1}{n}). \label{eq:chEstRet}
				\end{equation*}.
				
				Using (\ref{eq:estPi}) we obtain:
				\begin{eqnarray*}
					& \suml_{k \in J}
					 	\frac{C_n^k}{2^n}
						\cos{ (\omega_{d_k^s}  t)} = &
				\end{eqnarray*}
				\begin{eqnarray*}
					\suml_{k \in J}
						\frac{C_n^k}{2^n}
							(-1)^{\frac{t-m}{2}} (-1)^{\frac{t-m}{2}} \Big( 1 - O(\epsilon^2 \delta^{2s}) - &  & \\
							- O(m^2 \abks{(s+1)!}^6 \delta^{6s} )  & -  O(\epsilon m \abks{(s+1)!}^3\ \delta^{4s}) \Big)& = \nonumber
				\end{eqnarray*}
				\begin{eqnarray} \label{tmp_eq1}
					\abks{ 1 - O(\frac{1}{n})}
							(-1)^{\frac{t-m}{2}} (-1)^{\frac{t-m}{2}} \Big( 1 - O(\epsilon^2 \delta^{2s}) - &  & \\
							- O(m^2 \abks{(s+1)!}^6 \delta^{6s} )  & -  O(\epsilon m \abks{(s+1)!}^3\ \delta^{4s}) \Big)& \nonumber
				\end{eqnarray}

				The first condition of the theorem implies $s^b = o(n) \ \forall b >0$ .
				
				So, we get:
				\begin{eqnarray}
					{\frac{s^2 \abks{(s+1)!}^4 \ln^{2s} n }{ n^{s} }}  =
					O\abk{\frac{\ln^{2s} n} {n^{s/2 -1}}}, \label{eq:rett1} \\
					{\frac{s\abks{(s+1)!}^2 \ln^{2s} n}{n^{s-\beta s}}} = 
					O\abk{ \frac{\ln^{2s} n}{n^{\frac{3}{4}s -\beta s - 1}} } .\label{eq:rett2}
				\end{eqnarray}
				
				Also we can estimate $ m $ as
				\begin{eqnarray*}
					m \le \frac{n^s}{s!}.
				\end{eqnarray*}
				
				Combining (\ref{tmp_eq1}), (\ref{eq:rett1}), (\ref{eq:rett2}) completes the proof:
				\begin{eqnarray*}
					& 1 \ge \sqrt{Pr} \ge \abkm{\suml_{k \in J} \frac{C_n^k}{2^n} \cos{ (\omega_{d_k^s}  t)}} - 
											\abkm{\suml_{k \notin J} \frac{C_n^k}{2^n}\cos{ (\omega_{d_k^s}  t)}} & = \\
					& 1 - O(\frac{\ln^s n}{n^{s-2\beta} }) - O(\frac{1}{n}).&	\nonumber
				\end{eqnarray*}
			\end{proof}
			
			\begin{remark}
				Replace the last condition $ 1/4 \le \beta \le 1/2 $ by $ \beta \le 1/4 $. Then the following bound:
				\begin{equation*}
		 			1 - O(\frac{1}{n}) - O(\frac{\ln^{2s} n}{n^{s/2-1/2} }).
				\end{equation*}
				can be proved in a similar way.
			\end{remark}
			
			\begin{theorem} \label{th:retTime2}
				Consider a quantum walk in \CCay{s}.
				
				If the following conditions
				\begin{itemize}
					\item $ \frac{ms}{n} \equiv 0 \pmod{2}$ 
					\item $ s! \le n^{s/8} $;
					\item $ t = \frac{\pi}{2} m + \epsilon$;
					\item $ \epsilon = n^{\beta s}$;
					\item $2 \del (t-m)$;
					\item $1/4 < \beta < 1/2$;
				\end{itemize}
				hold then the probability to return is:
				\begin{eqnarray}
					Pr = 1 - O(\frac{1}{n}) - O(\frac{\ln^s n}{n^{s-2\beta s} }) \label{eq:mainRetEst2}.
				\end{eqnarray}
			\end{theorem}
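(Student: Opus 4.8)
The plan is to run the argument of Theorem~\ref{th:retTime} essentially verbatim, replacing the estimate (\ref{eq:estPi}) for $t=\pi m+\epsilon$ by the estimate (\ref{eq:estPi2}) for $t=\frac{\pi}{2}m+\epsilon$. The only genuinely new ingredient is a parity computation that controls the sign $(-1)^{d_k^s}$ appearing in (\ref{eq:estPi2}); this is exactly where the extra hypothesis $\frac{ms}{n}\equiv 0\pmod 2$ is used.

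First I would start from the sum representation $\sqrt{Pr} = \abkm{ \suml_{k = 0}^n \frac{1}{2^n} \binom{n}{k} \cos{ (\omega_k^s  t)} }$ given by (\ref{eq:retMainSum}), and split the summation range at $J = [\frac{n}{2}(1-\delta),\frac{n}{2}(1+\delta)]$ with $\delta = \sqrt{\frac{2\ln n}{n}}$. The Chernoff bound gives $\abkm{\suml_{k\notin J}\frac{1}{2^n}\binom{n}{k}\cos(\omega_k^s t)}\le \suml_{k\notin J}\frac{1}{2^n}\binom{n}{k}\le 2e^{-\delta^2 n/2}=O(1/n)$, just as in the proof of Theorem~\ref{th:retTime}.

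For $k\in J$ I would insert (\ref{eq:estPi2}), which applies since $t=\frac{\pi}{2}m+\epsilon$ and $2\del(t-m)$, obtaining $\cos(\omega_k^s t)=(-1)^{\frac{t-m}{2}+d_k^s}\bigl(1-O(\epsilon^2\delta^{2s})-O(m^2[(s+1)!]^6\delta^{6s})-O(\epsilon m[(s+1)!]^3\delta^{4s})\bigr)$. The crucial step is to eliminate the $k$-dependent sign $(-1)^{d_k^s}$: by Corollary~\ref{cl:evOd}, $d_k^s$ is even when $k$ is even and $d_k^s\equiv\frac{ms}{n}\pmod 2$ when $k$ is odd, so the hypothesis $\frac{ms}{n}\equiv 0\pmod 2$ forces $d_k^s$ to be even for \emph{every} $k$. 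Hence $(-1)^{\frac{t-m}{2}+d_k^s}=(-1)^{\frac{t-m}{2}}$ is a constant that factors out of the whole sum and is killed by the outer absolute value. Without this hypothesis the odd-$k$ terms would carry the opposite sign and would almost cancel the even-$k$ terms (recall $\suml_k(-1)^k\binom{n}{k}=0$), which is precisely why the probability to return would fail to be near one; this sign control is the heart of the statement.

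After factoring out the common sign, $\suml_{k\in J}\frac{1}{2^n}\binom{n}{k}\cos(\omega_k^s t)$ equals $[1-O(1/n)]$ times $1$ minus the three error terms, and these are estimated exactly as in Theorem~\ref{th:retTime}: using $s!\le n^{s/8}$ (so $s^b=o(n)$ for all $b>0$), $m\le n^s/s!$, $\epsilon=n^{\beta s}$ and $1/4<\beta<1/2$, one finds $\epsilon^2\delta^{2s}=O(\ln^s n/n^{s-2\beta s})$ as the dominant term while $m^2[(s+1)!]^6\delta^{6s}$ and $\epsilon m[(s+1)!]^3\delta^{4s}$ both tend to $0$ faster. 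Combining the two ranges via $1\ge\sqrt{Pr}\ge\abkm{\suml_{k\in J}\frac{C_n^k}{2^n}\cos(\omega_k^s t)}-\abkm{\suml_{k\notin J}\frac{C_n^k}{2^n}\cos(\omega_k^s t)}$ yields $Pr=1-O(1/n)-O(\ln^s n/n^{s-2\beta s})$, which is (\ref{eq:mainRetEst2}). I expect the only subtle point to be the parity argument above; the rest is a transcription of the previous proof.
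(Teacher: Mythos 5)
Your proposal is correct and follows exactly the route the paper takes: the paper's own proof is the one-line remark that one repeats the argument of Theorem \ref{th:retTime} with (\ref{eq:estPi2}) in place of (\ref{eq:estPi}) and invokes Corollary \ref{cl:evOd}, which is precisely your parity argument showing $d_k^s$ is even for all $k$ under the hypothesis $\frac{ms}{n}\equiv 0\pmod 2$. You have simply written out the details the paper leaves implicit, and they check out.
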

			
			\begin{proof}
				The proof is almost similar to the proof of the previous theorem. It's sufficient to use (\ref{eq:estPi2}) instead of (\ref{eq:estPi}) and recall Corollary \ref{cl:evOd}.
			\end{proof}
			
			\begin{coroll} \label{th:hitTime2}
				In conditions of Theorem \ref{th:retTime2} the probability to heat is $ o(1) $.
			\end{coroll}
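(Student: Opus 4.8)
The cleanest plan is to bypass the hitting sum entirely and argue from unitarity. Since $\WalkOp$ is a product of unitary operators and the symmetric initial state has unit norm, $\ket{\psi_t}=\WalkOp^t\ket{\psi_0}$ has unit norm for every $t$. Because $\bkScal{0^n}{1^n}=0$, the operator $\Pi_{0^n}+\Pi_{1^n}=\hat{I}_m\otimes\abk{\ket{0^n}\bra{0^n}+\ket{1^n}\bra{1^n}}$ is an orthogonal projection, so by the Pythagorean identity
$$
\|\Pi_{0^n}\ket{\psi_t}\|^2+\|\Pi_{1^n}\ket{\psi_t}\|^2=\|(\Pi_{0^n}+\Pi_{1^n})\ket{\psi_t}\|^2\le\|\ket{\psi_t}\|^2=1 .
$$
The first term is the probability to return and the second is the probability to hit. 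Under the hypotheses of Theorem~\ref{th:retTime2} the probability to return equals $1-O(\tfrac1n)-O\abk{\tfrac{\ln^s n}{n^{s-2\beta s}}}$, hence the probability to hit is at most $O(\tfrac1n)+O\abk{\tfrac{\ln^s n}{n^{s-2\beta s}}}$; since $1/4<\beta<1/2$ and $s!\le n^{s/8}$ this bound tends to $0$, so the probability to hit is $o(1)$. In this route there is essentially no obstacle — the only content is the observation that returning with probability $1-o(1)$ leaves no room to hit the antipode.

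Alternatively, one can obtain the same conclusion in the computational style of the surrounding sections, by re-running the proof of Theorem~\ref{th:retTime2} on the hitting sum~(\ref{eq:hitMainSum}) instead of the return sum. Split at $J=[\tfrac n2(1-\delta),\tfrac n2(1+\delta)]$ with $\delta=\sqrt{2\ln n/n}$; the tail $k\notin J$ is $O(\tfrac1n)$ by Chernoff, and on $J$ apply~(\ref{eq:estPi2}) (legitimate since $t=\tfrac{\pi}{2}m+\epsilon$ and $2\del(t-m)$) to write $\cos(\omega_k^s t)=(-1)^{\frac{t-m}{2}+d_k^s}(1-R_k)$ with $R_k$ the same error combination $O(\epsilon^2\delta^{2s})+O(m^2[(s+1)!]^6\delta^{6s})+O(\epsilon m[(s+1)!]^3\delta^{4s})$ appearing in the return estimate. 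The hypothesis $\tfrac{ms}{n}\equiv 0\pmod 2$ together with Corollary~\ref{cl:evOd} makes $d_k^s$ even for \emph{every} $k$, so $(-1)^{d_k^s}=1$ and the main term collapses to $(-1)^{\frac{t-m}{2}}\suml_{k\in J}\tfrac{(-1)^k}{2^n}\binom{n}{k}(1-R_k)$.

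The one step that genuinely differs from Theorem~\ref{th:retTime2}, and the place where I expect the only real care is needed, is the evaluation of $\suml_{k\in J}\tfrac{(-1)^k}{2^n}\binom{n}{k}$: here, unlike in the return case, $\suml_{k=0}^n(-1)^k\binom{n}{k}=(1-1)^n=0$ forces $\suml_{k\in J}\tfrac{(-1)^k}{2^n}\binom{n}{k}=-\suml_{k\notin J}\tfrac{(-1)^k}{2^n}\binom{n}{k}=O(\tfrac1n)$ (Chernoff once more), so the leading contribution is tiny rather than close to $1$. The remaining piece $\abkm{\suml_{k\in J}\tfrac{(-1)^k}{2^n}\binom{n}{k}R_k}\le\max_{k\in J}R_k$ is $o(1)$ by exactly the estimates~(\ref{eq:rett1}),~(\ref{eq:rett2}) and $m\le n^s/s!$ already used in Theorem~\ref{th:retTime2}, the factor $(-1)^k$ being harmless once one passes to absolute values. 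Adding the three $o(1)$ contributions gives $\sqrt{Pr}=o(1)$, hence $Pr=o(1)$, consistent with the unitarity argument.
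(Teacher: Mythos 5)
Your proposal is correct. The paper states this corollary without any proof, so there is nothing to compare line by line; your first argument is surely the intended one-liner: $\Pi_{0^n}$ and $\Pi_{1^n}$ are orthogonal projections (since $\bkScal{0^n}{1^n}=0$), $\WalkOp^t$ preserves the unit norm of $\ket{\psi_0}$, so the return and hit probabilities sum to at most $1$, and Theorem \ref{th:retTime2} pins the return probability at $1-O(\frac1n)-O(\frac{\ln^s n}{n^{s-2\beta s}})$, whose error terms are indeed $o(1)$ under $\beta<1/2$ and $s!\le n^{s/8}$. Your second, computational route is also sound and is a useful consistency check in the style of the surrounding theorems: the key points — that $\frac{ms}{n}\equiv 0\pmod 2$ together with Corollary \ref{cl:evOd} forces $d_k^s$ even for all $k$, so the sign $(-1)^{d_k^s}$ disappears and the hitting sum's main term reduces to $\suml_{k\in J}\frac{(-1)^k}{2^n}\binom{n}{k}$, which is $O(\frac1n)$ because the full alternating sum vanishes and the tail is controlled by Chernoff — are exactly right, and the error terms are absorbed as in Theorem \ref{th:retTime2}. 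The unitarity argument is shorter and more robust (it needs nothing about the parity of $d_k^s$), while the direct computation explains \emph{why} the parity condition $\frac{ms}{n}\equiv 0$ versus $\equiv 1$ is the dividing line between returning (Theorem \ref{th:retTime2}) and hitting (Theorem \ref{th:hitTime}). Either one is an acceptable proof of the corollary.
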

			
			\begin{theorem} \label{th:hitTime}
				Consider a quantum walk in \CCay{s}.
				
				If the following conditions
				\begin{itemize}
					\item $ \frac{ms}{n} \equiv 1 \pmod{2}$ 
					\item $ s! \le n^{s/8} $;
					\item $ t = \frac{\pi}{2} m + \epsilon$;
					\item $ \epsilon = n^{\beta s}$;
					\item $2 \del (t-m)$;
					\item $1/4 < \beta < 1/2$;
				\end{itemize}
				hold then the probability to return is:
				\begin{eqnarray}
					Pr = 1 - O(\frac{1}{n}) - O(\frac{\ln^s n}{n^{s-2\beta s} }) \label{eq:mainHitEst}.
				\end{eqnarray}
			\end{theorem}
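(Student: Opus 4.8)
The plan is to follow the proof of Theorem~\ref{th:retTime2} essentially verbatim, with two substitutions: estimate the hitting sum~(\ref{eq:hitMainSum}) in place of the returning sum~(\ref{eq:retMainSum}), and invoke the \emph{odd} case of Corollary~\ref{cl:evOd} together with the new hypothesis $\frac{ms}{n}\equiv 1\pmod 2$. (With that hypothesis in force it is really the probability to hit $\ket{1^n}$ that is being computed, through~(\ref{eq:hitMainSum}).) First I would write
\[
\sqrt{Pr}=\abkm{\suml_{k=0}^n\frac{(-1)^k}{2^n}\binom{n}{k}\cos(\omega_k^s t)},
\]
split the range at $J=\left[\tfrac n2(1-\delta),\tfrac n2(1+\delta)\right]$ with $\delta=\sqrt{2\ln n/n}$, and discard the tail $k\notin J$ by the Chernoff bound, whose modulus is at most $\suml_{k\notin J}2^{-n}\binom{n}{k}\le 2e^{-\delta^2 n/2}=O(1/n)$, exactly as in Theorems~\ref{th:retTime} and~\ref{th:retTime2}.

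For $k\in J$ the hypotheses $s!\le n^{s/8}$, $t=\tfrac\pi2 m+\epsilon$, $\epsilon=n^{\beta s}$ and $2\del(t-m)$ put us inside the range of validity of~(\ref{eq:estPi2}), so $\cos(\omega_k^s t)=(-1)^{\frac{t-m}{2}+d_k^s}(1-\mathcal E_k)$ with $\mathcal E_k=O(\epsilon^2\delta^{2s})+O(m^2[(s+1)!]^6\delta^{6s})+O(\epsilon m[(s+1)!]^3\delta^{4s})$ uniformly in $k$. The one genuinely new ingredient is the parity of $(-1)^k(-1)^{d_k^s}$: by Corollary~\ref{cl:evOd}, if $k$ is even then $d_k^s$ is even, so $(-1)^k(-1)^{d_k^s}=1$; if $k$ is odd then $d_k^s\equiv\tfrac{ms}{n}\equiv 1\pmod 2$, so again $(-1)^k(-1)^{d_k^s}=(-1)(-1)=1$. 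Hence the alternating sign $(-1)^k$ in~(\ref{eq:hitMainSum}) exactly cancels the factor $(-1)^{d_k^s}$ supplied by~(\ref{eq:estPi2}), and
\[
\suml_{k\in J}\frac{(-1)^k}{2^n}\binom{n}{k}\cos(\omega_k^s t)=(-1)^{\frac{t-m}{2}}\suml_{k\in J}\frac{1-\mathcal E_k}{2^n}\binom{n}{k}.
\]
Since $\suml_{k\in J}2^{-n}\binom{n}{k}=1-O(1/n)$ and the $\mathcal E_k$ are uniformly small, the right side has modulus $1-O(1/n)-O(\max_{k}\mathcal E_k)$, which is structurally the same as line~(\ref{tmp_eq1}) in the proof of Theorem~\ref{th:retTime}. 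From here the cases $\frac{ms}{n}\equiv 0$ and $\frac{ms}{n}\equiv 1$ run identically, so the rest can be quoted from the proof of Theorem~\ref{th:retTime2}.

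It remains to absorb $\mathcal E_k$ into the claimed error. Using $m\le n^s/s!$, $s!\le n^{s/8}$ (which forces $s^b=o(n)$ for every fixed $b$), $\delta^2=2\ln n/n$, $\epsilon=n^{\beta s}$ and $1/4<\beta<1/2$, the arithmetic that produced~(\ref{eq:rett1}) and~(\ref{eq:rett2}) shows each of the three families in $\mathcal E_k$ tends to $0$, the dominant term being $O(\ln^s n/n^{s-2\beta s})$; here $\beta>1/4$ is what controls $\epsilon m[(s+1)!]^3\delta^{4s}$ and $\beta<1/2$ is what guarantees $n^{s-2\beta s}\to\infty$. Recombining the $k\in J$ and $k\notin J$ contributions as in the earlier proofs gives $1\ge\sqrt{Pr}\ge 1-O(1/n)-O(\ln^s n/n^{s-2\beta s})$, and squaring yields~(\ref{eq:mainHitEst}). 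The only step that needs real care is the two-case parity identity $(-1)^k(-1)^{d_k^s}=1$; everything else is mechanical and mirrors Theorems~\ref{th:retTime} and~\ref{th:retTime2}.
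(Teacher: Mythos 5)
Your proof is correct and is exactly the argument the paper intends (the paper leaves Theorem~\ref{th:hitTime} without an explicit proof, but it is the mirror of Theorem~\ref{th:retTime2}: estimate the hitting sum~(\ref{eq:hitMainSum}) via~(\ref{eq:estPi2}) and use the odd case of Corollary~\ref{cl:evOd} so that $(-1)^k(-1)^{d_k^s}=1$). You also correctly observe that the quantity being bounded is the probability to hit $\ket{1^n}$, not to return, despite the wording of the theorem statement.
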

		
			\begin{remark}
				The result for $ s=1 $ is proved by Julia Kempe in \cite{kempe}.
			\end{remark}			
			
	\section{Antipodality in \CCay{s}} \label{sec:antipod}
		~
		
		\begin{definition}
			Let  $ G $ be a graph.
			
			A vertex $ u $ is an \textbf{antipodal} to a vertex $ v \ne u $ if for any automorphisms $ f \in \operatorname{Aut} G$ such that $  f(v) = v  $ holds $ f(u) = u $.
		\end{definition}
		
		\begin{remark}
			The antipodality relation is transitive but generally is not symmetric.
		\end{remark}

		\begin{definition}
			We define a layer of weight $ k $ as 
			\begin{equation*}
				L_k = \abkf{v \in G(s) : |v| = k}
			\end{equation*}
		\end{definition}
		
		\begin{definition}
			We say that layers $ t $ and $ l $ are connected if $\exists v,u \ |v| = l, |u| = t:\ (u,v) \in E$ and write $ (t,l) \in E' $.
		\end{definition}
		
		Below we show that for the graphs \CCay{s} with sufficiently small $ s $ ($ s \le n/6 $ is sufficient) each vertex $ v  $ has exactly one antipodal vertex: $ v + 1^n $.
	
		To prove the claim we show that if $ 0^n $ is a fixed point then layers of $ G(s) $ are invariant (Theorem \ref{th:inv}).
				
		In the paper \cite{krasin} Krasin proved the theorem if $ s $ is odd and \linebreak $ s \notin \abkf{\frac{n-1}{2}, \frac{n+1}{2}, n/2} $.
		
		We use similar technique to generalize his result on the even case.
		
		\begin{definition}
			Let $ G_2(s) $ denote the sub graph of \CCay{s} inducted by all vertices with even weight. Then the graph $ G(s) $ is defined as follows
			\begin{equation*}
			 G(s) = \left\lbrace \begin{array}{c}
			 	\ccay{s}, \ if \ s \equiv 1 \pmod{2} ,\\
			 	G_2(s),  \ if \ s \equiv 0 \pmod{2}.
			 \end{array} \right.
			\end{equation*}
		\end{definition}
		
		The Theorem \ref{th:conCom} states that $ G(s) $ is exactly the connected component of \CCay{s} that contains $ 0^n $.

		\subsection{Structure of $ G(s) $}
		
			\begin{theorem} \label{th:conCom}
				Let $s < n$, then
				\begin{enumerate}
					\item
					$s$ is even  $ \Rightarrow $ \CCay{s} has 2 connected components: vertexes of even and odd weights.
					\item
					$s$ is odd $ \Rightarrow $ \CCay{s} is a connected graph.
				\end{enumerate}
			\end{theorem}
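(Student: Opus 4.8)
The plan is to reduce the statement to identifying the subgroup $\langle S \rangle \le \Z_2^n$ generated by $S = \abkf{e \in \Z_2^n : |e| = s}$. In any Cayley graph $\cay{G}{S}$ the connected component of the identity is exactly $\langle S \rangle$, and the remaining components are its cosets, so it suffices to compute $\langle S \rangle$. I will also use two elementary facts: over $\Z_2^n$ the vertices $v_1,v_2$ are adjacent iff $|v_1 + v_2| = s$, and $|v_1 + v_2| \equiv |v_1| + |v_2| \pmod 2$ for all $v_1, v_2$.

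The crux will be to show that every weight-$2$ vector lies in $\langle S \rangle$. Write $u_i \in \Z_2^n$ for the vector with a single $1$ in coordinate $i$; the vectors $u_i + u_j$ with $i \ne j$ generate the index-$2$ subgroup $E = \abkf{v \in \Z_2^n : |v| \equiv 0 \pmod 2}$. Given $i \ne j$, the hypothesis $s < n$ gives $n - 2 \ge s - 1$, so I can choose a set $T$ of $s-1$ coordinates disjoint from $\{i,j\}$; then $a = u_i + \suml_{k \in T} u_k$ and $b = u_j + \suml_{k \in T} u_k$ both have weight $s$, hence belong to $S$, while $a + b = u_i + u_j$. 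This proves $E \subseteq \langle S \rangle$.

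It then remains to split into the two cases. If $s$ is even, every element of $S$ has even weight, so $\langle S \rangle \subseteq E$, and combined with the previous step $\langle S \rangle = E$; since $E$ has index $2$, \CCay{s} has exactly two connected components --- the even-weight vertices (the component of $0^n$) and the odd-weight vertices --- which also matches the parity identity above. If $s$ is odd, then $S$ contains a vector of odd weight, which is not in $E$, so $\langle S \rangle$ properly contains $E$; because $[\Z_2^n : E] = 2$ this forces $\langle S \rangle = \Z_2^n$, i.e. \CCay{s} is connected.

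I expect the only real obstacle to be the crux step --- writing an arbitrary weight-$2$ vector as a sum of two weight-$s$ generators --- and this is exactly the place where $s < n$ is needed, since it is what provides enough spare coordinates to form two overlapping supports of size $s$. The rest is the standard correspondence between connectivity of a Cayley graph and its generated subgroup, plus the parity identity for Hamming weights.
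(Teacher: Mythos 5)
Your proposal is correct and follows essentially the same route as the paper: both identify the component of $0^n$ with the subgroup generated by $S$, show it contains every weight-$2$ vector by summing two weight-$s$ generators whose supports differ in exactly the two target coordinates (your $a,b$ are precisely the paper's $e_1,e_2$), and then split on the parity of $s$. Your phrasing via the index-$2$ even-weight subgroup is a slightly tidier way to finish the odd case, but the argument is the same.
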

	
			\begin{proof}
				Let denote the linear closure of $ S $ (over  $\Z_2^n$) as $Lin(S)$.
				
				It is easy to see that the set of vertexes that can be reached from $ v $ is exactly $v + Lin(S) = \abkf{v+e|e \in Lin(s)}$.
				
				\begin{enumerate}	
					\item
					All elements of $Lin(S)$ have even weight.
					
					Let's show that $Lin(S)$ contains each vector of weight 2: for all $m \neq k$ (coordinate numbers) let's choose $e_1 \in S$ having $k$--th coordinate equal to one, $m$--th --- zero; the $e_2$ is obtained from $e_1$ by inverting $m$--th and $k$--th coordinates. Then $e_1 + e_2 \in Lin(S)$ is a vector having ones only in $m$--th and $k$--th coordinates.
					
					Therefore $Lin(S)$ contains all vectors with even weight, so if two vectors difference is even, they are in the same connected component.
					
					Due to the equivalence $$ |v+u| \equiv |v| + |u| \pmod{2}, $$ vertexes with even and odd weights can't be connected.
					
					\item
					Similarly the previous, $Lin(S)$ contains all vectors with even weight. Since $Lin(S)$ has a vector with odd weight it contain all vectors of weight 1. These vectors generate the whole space $\Z_2^n$ (they are basis).
				\end{enumerate}
			\end{proof}
			
			\begin{theorem}[Connections between layers] \label{th:conLay}
				~
				
				The neighbours set of level $ l $ is 
				\begin{equation*}
					N(l) = \abkf{t \in \Big[ |l-s|, min(l+s, n - (l + s - n) ) \Big]\ \Big|\ t \equiv |l-s| \pmod{2}}.
				\end{equation*}
			\end{theorem}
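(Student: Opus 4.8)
The plan is to reduce the statement to a count of the overlap between the support of a vertex and the support of a generator, and then to check that every admissible overlap size actually occurs.

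First I would use the definition of the edge set: a vertex $v$ with $|v|=l$ is adjacent precisely to the vertices $v+e$ with $e\in S$, that is, with $|e|=s$. Let $j$ denote the number of coordinates where $v$ and $e$ are both equal to $1$ (the size of the intersection of their supports). A direct count gives $|v+e|=(l-j)+(s-j)=l+s-2j$. In particular $|v+e|\equiv l+s\equiv|l-s|\pmod 2$, which already yields the parity condition in the theorem, and shows $N(l)\subseteq\abkf{\,l+s-2j\,}$ as $j$ varies.

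Second, I would pin down the range of $j$. Since $e$ has $s-j$ of its ones outside the support of $v$, and $v$ has only $n-l$ zero coordinates, we need $s-j\le n-l$, i.e. $j\ge\max(0,\,l+s-n)$; and trivially $j\le\min(l,s)$. Conversely, for each integer $j$ in this interval one constructs $v$ and $e$ with exactly that overlap by placing $j$ ones of $e$ on a $j$-subset of the $l$-set chosen as the support of $v$ and the remaining $s-j$ ones of $e$ on an $(s-j)$-subset of the complementary $(n-l)$-set; this is legitimate exactly because $j\le l$ and $s-j\le n-l$. Hence $N(l)=\abkf{\,l+s-2j \ \big|\ \max(0,l+s-n)\le j\le\min(l,s)\,}$.

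Finally I would translate the endpoints. As $j$ runs through consecutive integers in $[\max(0,l+s-n),\min(l,s)]$, the value $l+s-2j$ runs through the whole arithmetic progression of step $2$ between its extremes: $j=\min(l,s)$ gives the minimum $l+s-2\min(l,s)=|l-s|$, and $j=\max(0,l+s-n)$ gives the maximum $\min(l+s,\,2n-l-s)=\min\abk{l+s,\,n-(l+s-n)}$. This is exactly the set in the statement. The only mildly delicate point is the two small case splits ($l+s\le n$ versus $l+s>n$, and $l\le s$ versus $l>s$) needed to simplify the $\max$ and $\min$; everything else is bookkeeping, so I do not expect a genuine obstacle.
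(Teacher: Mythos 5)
Your proof is correct and follows essentially the same route as the paper's: parametrize a neighbour $v+e$ by the overlap $j=|e\cap v|$, derive $|v+e|=l+s-2j$ with $\max(0,l+s-n)\le j\le\min(l,s)$, and observe that every such $j$ is realizable. You spell out the converse inclusion (the explicit construction of $e$ for each admissible $j$) slightly more explicitly than the paper, which simply notes that the bounds on $j$ are attained, but the argument is the same.
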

			\begin{proof}				
				Let $v$ has weight $l$ and $u$ is a neighbour of $v$, which means $ \exists e, |e| =s: u = v + e$.				
				
				We denote $ (v_1 \cdot u_1, \ldots, v_n \cdot u_n) $ as $ v \cap u $.
				
				Let $p = |e \cap v|$ (the number of coordinates that equal to 1 in both $ e $ and $ v $). It's clear that $ p $ is at least $l+s-n$:
				\begin{equation} \label{eq:conLayTmp1}
				 \max(0, l+s-n) \le p \le \min(|v|, |e|) = \min(l, s).
				\end{equation}
				
				Using that
				\begin{itemize}
					\item
					$l-p$ coordinates are one in $ v $ and zero in $ e $,
					\item
					$s-p$ coordinates are one in $ e $ and zero in $ v $,
				\end{itemize}
				
				we get:
				\begin{equation}
					|u| = (l-p) + (s-p) = (l+s) - 2p. \label{weightEq}
				\end{equation}
				
				Combining (\ref{eq:conLayTmp1}) and (\ref{weightEq}), we get:
				\begin{equation*} 
					(l+s) - 2 \min(l,s) \le |u| \le l+s - 2 \max(0, l+s-n).
				\end{equation*}
				Using the equality $(l+s) - 2 \min(l,s) = |l-s|$ we obtain
				$$
					N(l) \subset \abkf{t \in [ |l-s|, \min(l+s, n - (l + s - n) ) ]| t \equiv |l-s| \pmod{2}}
				$$
				
				To prove inclusion in the other side we just note that the bound (\ref{eq:conLayTmp1}) is exact (for all $ v $).
			\end{proof}	
				
			\begin{coroll}[Number of connections]
				\label{joins_num}
				Let $ (l,t) \in E' $. Then each vertex $v, \ |v| = l$ has
				\begin{equation*}
					\binom{l}{(s + l)/2 - t/2} \binom{n-l}{(s+t)/2-l/2}
				\end{equation*}
				neighbours of weight $ t $ in $ G(s) $.
			\end{coroll}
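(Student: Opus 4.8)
The plan is to read off the count directly from the computation already carried out in the proof of Theorem~\ref{th:conLay}. There it is shown that every neighbour of a vertex $v$ with $|v| = l$ is of the form $u = v + e$ for some $e \in S$, and that $|u| = (l + s) - 2p$ where $p = |e \cap v|$ is the number of coordinates that are $1$ in both $e$ and $v$. Consequently a neighbour $u = v + e$ of $v$ has weight $t$ precisely when $|e \cap v| = p$, with $p := \frac{l + s - t}{2}$. Since $(l,t) \in E'$, Theorem~\ref{th:conLay} tells us that $t \equiv |l - s| \pmod 2$ and $|l - s| \le t \le \min(l + s,\, 2n - l - s)$; these are exactly the conditions ensuring that $p$ is an integer with $\max(0, l + s - n) \le p \le \min(l, s)$ and $0 \le s - p \le n - l$, so the binomial coefficients appearing in the statement are well defined.

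Next I would note that $e \mapsto v + e$ is a bijection of $\Z_2^n$, so distinct generators yield distinct neighbours of $v$; hence the number of neighbours of $v$ of weight $t$ equals the number of $e \in \Z_2^n$ with $|e| = s$ and $|e \cap v| = p$. To count the latter, partition the $n$ coordinates into the $l$ positions where $v$ is $1$ and the $n - l$ positions where $v$ is $0$. A suitable $e$ is specified by choosing the $p$ positions of the first block on which $e$ is $1$ --- which can be done in $\binom{l}{p}$ ways --- and, independently, the $s - p$ positions of the second block on which $e$ is $1$ --- which can be done in $\binom{n-l}{s-p}$ ways. Multiplying and substituting $p = (s+l)/2 - t/2$ and $s - p = (s+t)/2 - l/2$ gives the asserted value $\binom{l}{(s+l)/2 - t/2}\binom{n-l}{(s+t)/2 - l/2}$.

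It remains to check that these neighbours all lie in $G(s)$, so that the count is unchanged when edges are taken inside $G(s)$ rather than inside \CCay{s}. If $s$ is odd this is immediate, since $G(s)$ is all of \CCay{s}. If $s$ is even, then any neighbour $u = v + e$ of $v$ satisfies $|u| \equiv |v| + s \equiv |v| \pmod 2$; since $(l,t) \in E'$ forces layer $l$ to be a layer of $G(s) = G_2(s)$, the vertex $v$ has even weight, so all of its \CCay{s}-neighbours have even weight and therefore belong to $G_2(s)$. Thus no neighbours are lost in passing to $G(s)$. I do not expect any real obstacle here: the corollary is essentially a bookkeeping consequence of Theorem~\ref{th:conLay}, and the only points that need a moment's attention are the integrality and range of $p$ (handled via Theorem~\ref{th:conLay}) and the distinction between $G(s)$ and \CCay{s} (handled by the parity remark).
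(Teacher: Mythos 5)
Your proposal is correct and follows essentially the same route as the paper: you determine $p = |e\cap v| = (s+l-t)/2$ from the weight equation of Theorem~\ref{th:conLay} and count the admissible generators as $\binom{l}{p}\binom{n-l}{s-p}$. The extra checks you add (integrality and range of $p$, injectivity of $e\mapsto v+e$, and the $G(s)$ versus \CCay{s} parity point) are details the paper leaves implicit, but they do not change the argument.
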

			\begin{proof}						
							
				A vector $e$ correspond to a neighbour of weight $t,\ p := |e \cap v| $ $\Leftrightarrow$
				
				$$ (l-p) + (s-p) = t . $$ 
				
				We get $ p = (s + l - t) / 2 $.
				
				The number of $e$, such that $|e| = s, \ |e \cap v| = p$ is
				
				\begin{equation*}
				\binom{l}{p} \binom{n-l}{s-p} = \binom{l}{(s + l)/2 - t/2} \binom{n-l}{(s+t)/2-l/2}
				\end{equation*}
			
			\end{proof}
				
		\subsection{The invariance property of layers}
			
			\begin{lemma}\label{lm:connum}
				Put $$ k(l) = \binom{ l }{ l/2} \binom{n-l}{s-l/2} $$ for $ 0 \le l \le 2(s-1)  $.
				Let $ 6s \le n $.
				
				Then $ k(l+2) < k(l) $.
			\end{lemma}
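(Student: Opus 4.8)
The plan is to turn the claimed inequality into an elementary estimate on a ratio of binomial coefficients. Note first that $\binom{l}{l/2}$ forces $l$ to be even, so write $l=2j$ with $0\le j\le s-1$ (and $l+2\le 2(s-1)$); then $k(l)=\binom{2j}{j}\binom{n-2j}{s-j}$ and $k(l+2)=\binom{2j+2}{j+1}\binom{n-2j-2}{s-j-1}$, and I want $k(l+2)/k(l)<1$.

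First I would factor the ratio into a ``coin part'' and a ``layer part''. A direct cancellation of factorials gives
\[
\frac{\binom{2j+2}{j+1}}{\binom{2j}{j}}=\frac{(2j+2)(2j+1)}{(j+1)^2}=\frac{2(2j+1)}{j+1}<4 ,
\]
valid for every $j\ge 0$, and
\[
\frac{\binom{n-2j-2}{s-j-1}}{\binom{n-2j}{s-j}}=\frac{(s-j)(n-s-j)}{(n-2j)(n-2j-1)} .
\]
Multiplying these, it suffices to prove $4(s-j)(n-s-j)<(n-2j)(n-2j-1)$.

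Then I would substitute $m:=n-2j$ and $c:=s-j$. Since $1\le c\le s$ and $j\le s-1$, the hypothesis $6s\le n$ yields $m=n-2j\ge n-2s+2\ge 4s+2$, so in particular $4c\le 4s\le m-2$ and $m\ge 2$. The inequality to prove reads $4c(m-c)<m(m-1)$, and this follows from the chain
\[
4c(m-c)\le (m-2)(m-c)\le (m-2)(m-1)<m(m-1),
\]
where the first step uses $4c\le m-2$ together with $m-c>0$, the second uses $c\ge 1$, and the last uses $m-2<m$ (and $m-1>0$). This finishes the proof.

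The argument has no genuine obstacle; the only points needing care are the observation that $l$ must be even (so that the substitution $c=s-j\ge 1$ and the two binomial ratios are meaningful) and the bookkeeping that the crude bound $\tfrac{2(2j+1)}{j+1}<4$ is already enough, so one never needs its sharper value. Keeping the coin part exact would merely improve the constant in the hypothesis $6s\le n$, which as stated is comfortably sufficient.
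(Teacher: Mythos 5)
Your proof is correct, and it starts from the same point as the paper: write $k(l+2)/k(l)$ as the product of the central-binomial factor $\frac{2(2j+1)}{j+1}<4$ and the factor $\frac{(s-j)(n-s-j)}{(n-2j)(n-2j-1)}$, then show the product is below $1$ using $6s\le n$. Where the two arguments part ways is in how the second factor is controlled, and your version is the sounder one. The paper bounds the numerator $(s-l/2)(n-s-l/2)$ by its value at $l=0$ and the denominator $(n-l)(n-l-1)$ by its value at $l=2(s-1)$ \emph{independently}, arriving at the decoupled quantity $4\,\frac{l+1}{l+2}\cdot\frac{s(n-s)}{(n-2s+1)(n-2s+2)}$; for $n=6s$ this can exceed $1$ once $s\ge 4$ (as $l\to 2(s-1)$ and $s\to\infty$ it tends to $5/4$), and the paper's subsequent step $\frac{n/s-1}{(n/s-2+1/s)(n/s-2+2/s)}\le\frac{1}{n/s-2}$ is in fact false for $s\ge 4$. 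Your substitution $m=n-2j$, $c=s-j$ keeps the $l$-dependence of numerator and denominator coupled, and the chain $4c(m-c)\le(m-2)(m-c)\le(m-2)(m-1)<m(m-1)$, resting only on $4c\le 4s\le m-2$ and $c\ge 1$, holds uniformly over the whole range of $l$. So your argument not only proves the lemma but effectively repairs the paper's own estimate. The only cosmetic point is that you should note $k(l)>0$ on the stated range (which follows from $s-l/2\ge 1$ and $n-l\ge 4s+2>s$) before dividing by it; with that said, the proof is complete.
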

			\begin{proof}
				By definition:
				\begin{eqnarray*}
					 k(l+2) = k(l) \cdot \frac{ (l+1) (l+2) (s-l/2) (n-l-s+l/2)  }{ (l/2 + 1) ^ 2  (n-l-1)(n-l)}.
				\end{eqnarray*}
				The lemma follows from the following bound:
				\begin{eqnarray*}
					 &\frac{ (l+1) (l+2) (s-l/2) (n-l-s+l/2)  }{ (l/2 + 1) ^ 2  (n-l-1)(n-l)}=&  \\
					 &\frac{ 4 (l+1) (s-l/2) (n-l-s+l/2)  }{ (l + 2)  (n-l-1)(n-l)} \le 
					 4 \frac{ l+1 } {l+2} \frac{ s (n-s)  }{ (n-2s+1)(n-2s+2) } <& \\
					 & 4 \cdot \frac{  (n/s-1)  }{ (n/s-2+1/s)(n/s-2+2/s) } \le &\\&  4 \frac{  (n/s-2)  }{ (n/s-2)(n/s-2) }
					 \le 1.&
				\end{eqnarray*}
			\end{proof}
			
			\begin{theorem}\label{th:inv}
				Let $ 6s \le n $, $ f \in \operatorname{Aut} G(s) $.
				
				If $ 0^n $ is a fixed point of $ f $ then each layer is invariant under action of $ f $.
			\end{theorem}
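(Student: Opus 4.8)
The plan is to recognize the layers one at a time, working outward from $0^n$ in order of graph distance, and at each step to read off the weight of a vertex from a number of neighbours into already-recognized layers — a quantity $f$ is forced to preserve — using the explicit edge counts of Corollary~\ref{joins_num} together with the strict monotonicities provided by Lemma~\ref{lm:connum} (and two companion monotonicities of binomial coefficients). The hypothesis $6s\le n$ enters only to keep all the relevant intervals from Theorem~\ref{th:conLay} free of ``wrap-around'', so that those monotonicities are usable simultaneously.

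\textbf{The two innermost layers.} Since $L_0=\abkf{0^n}$ and $0^n$ is fixed, $L_0$ is invariant. By Theorem~\ref{th:conLay} the neighbour set of weight $0$ degenerates to the single layer of weight $s$ (because $s<n$), so the neighbours of $0^n$ in $G(s)$ are exactly $L_s$; an automorphism fixing a vertex permutes its neighbours, hence $f(L_s)=L_s$. Next, for any vertex $v$ the number $\abkm{N(v)\cap N(0^n)}$ of common neighbours of $v$ and $0^n$ is $f$-invariant, and it equals the number of weight-$s$ neighbours of $v$; by Corollary~\ref{joins_num} this equals $\binom{\abkm{v}}{\abkm{v}/2}\binom{n-\abkm{v}}{s-\abkm{v}/2}$ when $\abkm{v}$ is even with $0\le\abkm{v}\le 2s$ — the only range in which $(\abkm{v},s)\in E'$, by Theorem~\ref{th:conLay} and $6s\le n$ — and is $0$ otherwise. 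By Lemma~\ref{lm:connum} this quantity is strictly decreasing over the even weights $0,2,\dots,2s$, so it takes distinct positive values there. Hence $f$ fixes every layer $L_l$ with $l$ even and $0\le l\le 2s$, and maps the union of all remaining layers to itself. In particular, for even $s$ this already settles all layers of weight $<s$.

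\textbf{Induction on distance from $0^n$.} Assume every layer at distance less than $d$ has been shown invariant, and let $L_l$ be a layer at distance $d$. For $v\in L_l$ and $Y$ the (invariant) union of all closer layers, $\abkm{N(v)\cap Y}$ is $f$-invariant; by Corollary~\ref{joins_num} it is a sum of binomial products $\binom{l}{\cdot}\binom{n-l}{\cdot}$ over the weights in $Y$ adjacent to $l$ in the layer graph of Theorem~\ref{th:conLay}. The outer half of the induction ($l\ge s$) is immediate: a vertex of weight $l$ has exactly $\binom{l}{s}$ neighbours in the layer $L_{l-s}$ (the lowest weight reachable from $l$, already invariant since $l-s<l$ and $l-s\le 2s$ or handled earlier), while Theorem~\ref{th:conLay} shows a vertex of weight exceeding $l$ has no neighbour in $L_{l-s}$ at all, and $l\mapsto\binom{l}{s}$ is strictly increasing; so $\abkm{f(v)}=l$. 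For the remaining low odd weights $1,3,\dots,s-2$ (which occur only when $s$ is odd) we use the reflection $v\mapsto v+1^n$, an automorphism of $G(s)$ in this case, which sends $L_l$ to $L_{n-l}$: since $n-l>2s$ these mirror layers are already covered by the outer half, so $L_l$ is covered too. Likewise, for weights $l<s$ one may instead count the $\binom{n-l}{s-l}$ neighbours of $v$ in $L_{s-l}$, with $l\mapsto\binom{n-l}{s-l}$ strictly decreasing, as an alternative recognition device. Running through all distances, every layer of $G(s)$ is invariant (with $l$ even throughout in the case of even $s$, by Theorem~\ref{th:conCom}).

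\textbf{Where the difficulty lies.} The delicate point is the bookkeeping in the inductive step: one must check that the neighbour-counts into the already-recognized layers really do separate the new weight $l$ from \emph{every} not-yet-recognized weight, not merely from $l\pm2$, and that the resulting chain of dependencies is well-founded (which it is, because ``distance from $0^n$'' is). This reduces to combining the strict monotonicity of $k(l)$ from Lemma~\ref{lm:connum} with the strict monotonicities of $\binom{l}{s}$ (increasing, for $l\ge s$) and $\binom{n-l}{s-l}$ (decreasing, for $l\le s$), applied on the exact ranges prescribed by Theorem~\ref{th:conLay}; the assumption $6s\le n$ is precisely what keeps those ranges from wrapping around the top of $\{0,\dots,n\}$, so that all three monotonicities hold on the ranges where they are needed.
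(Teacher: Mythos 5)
Your first stage coincides with the paper's: $L_s$ is invariant as the neighbourhood of the fixed vertex $0^n$, and then the number of weight-$s$ neighbours of $v$ is an $f$-invariant function $k(\abkm{v})$ which vanishes off the even weights in $[0,2s]$ and is injective there by Lemma \ref{lm:connum}; this pins down all even layers of weight at most $2s$, exactly as in the paper. Where you diverge is in finishing. The paper reduces the whole theorem to the invariance of $L_1$ (odd $s$) or $L_2$ (even $s$) — a reduction it asserts rather than proves — and for odd $s$ obtains $L_1$ as the unique layer whose neighbours lie only in the already-invariant layers $L_{s-1}$ and $L_{s+1}$ (Theorem \ref{th:conLay} together with $6s\le n$). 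You instead attempt a full induction on distance from $0^n$, and that is where the gaps are.

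Two steps of your induction do not work as written. First, in the ``outer half'' you must rule out $\abkm{f(v)}=l'<l$ for $l'$ at the same distance from $0^n$; the invariant quantity is the number of neighbours of $f(v)$ in the \emph{fixed} layer $L_{l-s}$, which for a vertex of weight $l'$ equals $\binom{l'}{s-(l-l')/2}\binom{n-l'}{(l-l')/2}$, not $\binom{l'}{s}$, so the strict monotonicity of $l\mapsto\binom{l}{s}$ compares the wrong numbers and proves nothing. (This particular gap looks repairable: a neighbour of $f(v)$ in $L_{l-s}$ forces $\abkm{f(v)}\le l$ by Theorem \ref{th:conLay}, and the same argument applied to $f^{-1}$ gives the reverse inequality — but that is not the argument you wrote, and it needs its own bookkeeping about which layers $L_{l'-s}$ are already known invariant.) Second, and more seriously, the treatment of the low odd layers $L_1,\dots,L_{s-2}$ via the reflection $v\mapsto v+1^n$ is invalid: invariance of $L_{n-l}$ under $f$ transfers to invariance of $L_l$ only if $f$ commutes with the reflection, i.e. $f(v+1^n)=f(v)+1^n$, which is essentially the antipodality statement this whole section is building toward — the argument is circular. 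It also breaks the induction ordering, since $L_{n-l}$ sits at distance about $n/s$ from $0^n$ while $L_l$ sits at distance $3$, so the mirror layer is not ``already covered'' when you need it. You yourself flag the separation of distinct layers at equal distance as ``the delicate point,'' but the three monotonicities you list do not resolve it. Finally, the claim that $6s\le n$ eliminates wrap-around in Theorem \ref{th:conLay} is false for weights near $n$ (e.g. $N(n)=\{n-s\}$), though that is the least of the problems.
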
			
			\begin{proof}
				In the case of odd $ s $, it's sufficient to show that $ L_1 $ is invariant. In the case of even $ s $ it's sufficient to show that $ L_2 $ is invariant.
				
				If $ 0^n $ is a fixed point then $ L_s $ is an invariant.
				
				Let $ k(v) $ denote the number of vertices $ u $, such that $ (v,u) \in E $ and $ |u|=s $.
				
				Since $ L_s $ and the adjacency matrix are invariants then $ k(v)=k(f(v)) $. The function $ k(v) $ depends only from weight of $ v $. We define the function $ k(l) $ by rule: $ k(l)=k(v) $ iff $ |v|=l $.
				
				For each $ l \notin [0,2s] \quad k(l) = 0 $. According to Lemma \ref{lm:connum} ,if $ l, l' \in [0,2s] $ and $ l,l' $ are even, then  $\ k(l) \ne k(l') $.
				
				Therefore, if $ v \in L_l $, $ l \in [0,2s] $, $ l $ is even then $ f(v) \in L_l $. It means that $ L_l $ is invariant.	
				
				Let $ s $ is even. According to above results, $ L_2 $ is an invariant.
				
				Let $ s $ is odd. According to above results, $ L_{s-1} $ and $ L_{s+1} $ are invariants. Using the condition $ 6s \le n $ and Theorem \ref{th:conLay}, $ L_1 $ is the only layer that connected only with layers $ s-1 $ and $ s+1 $. So, $ L_1 $ is an invariant.

			\end{proof}
			
	\section{The oracle problem of antipodal vertex search} \label{sec:problem}
		\subsection{ Definition }
			
			\begin{definition}
				Let  $ G=(V,E) $ be a regular graph.
				Let $ |V| \le 2^n $.
			    Let $ f: V \to \abkf{0,1}^n $ be the mapping of vertexes to names. The number of possible vertexes names are exponentially larger than number of vertexes.
				
				Let $ h $ be a numeration of neighbours for each vertex. (Holds that if $ u $ is $ k $--th neighbour of $ v $ then $ v $ is $ k $--th neighbour of $ u $).
				
				The algorithm can send to the oracle name of a vertex and a number. The oracle gives the name of a neighbour vertex corresponding to the number. If the name or the number are invalid, then oracle returns an empty string.
				
				The machine solves antipodal vertex search problem if for a given vertex name and oracle it will find the name of any antipodal vertex.
			\end{definition}

			According to the previous section, if $ s < n/6 $ then each vertex of \CCay{s} has exactly one antipodal one.
			
			In the sequel we call the oracle problem of antipodal vertex search as the problem.
			
			We call the number of queries to oracle as complexity of an algorithm.
			
			\begin{remark}
				The mapping $ f $ inducts the isomorphism of graphs $ G=(V,E) $ and $ f(G)=(f(V),f(E)) $
			\end{remark}
			 
		\subsection{ Classical algorithm} \label{sec:clAlg}
			Let $ s < n/6 $.
			
			We present an algorithm that solves the problem with probability $ \frac{1}{m} $ and complexity  $O(m^2 \ \frac{n}{s})$.
			
			Let $ v_0 $ be the initial vertex. Without loss of generality, we can assume that $ v_0 $ is the image of the vertex $ 0^n $.
			
			\begin{itemize}
			\item Initialisation:
			\begin{enumerate}
				\item
				The algorithm makes $ m $ queries to get all neighbours of $ v_0 $ . The oracle answers are exactly $ f(L_s) $. 
				\item
				The algorithm takes $ v_1 \in f(L_s)$.
				\item
				For each $ u$ such that $(u,v_1) \in f(E) $ the algorithm  finds out the weight of $ f^{-1}(u) $ with following actions:
					\begin{enumerate}
						\item
						Get all neighbours of $ u $.
						\item
						Calculate $ x $ as the number of neighbours from $ f(L_s) $.
						\item
						According to Lemma \ref{lm:connum}: $ k^{-1}(x)=|f^{-1}(u)| $.
					\end{enumerate}
				The operation requires $ m^2 $ queries.
				\item
				$ t := 1 $.
			\end{enumerate}
			\item
			At step $ t $ the algorithm keeps in memory:
			\begin{enumerate}
				\item
				The name of vertex $ v_t :\ |f^{-1}(v_k)| = t s \ $.
				\item
				The set $ N(v_t) = \abkf{u \in f(V)|\ (u,v_t) \in f(V)} $.
				\item
				$ |f^{-1}(u)| $ for each $ u \in N(v_t) $.
			\end{enumerate}
			
			\item
			If $ (t+1) s \le n $:
			\begin{enumerate}
				\item
				The algorithm takes $ v_{t+1} \in N(v_t) $ such that $ |f^{-1}(v_{t+1})| = (t+1) s $.
				\item
				The algorithm makes $ m $ queries to build $ N(v_{t+1}) $.
				\item
				For each $ u \in N(v_{t+1}) $ the algorithm calculates
					\begin{equation*}
					j(u) = \min_{x \in N(v_t) \cap N(v_{t+1})} |f^{-1}(x)|.
					\end{equation*}
				This calculation requires $ m $ queries (for each $ u $).
				
				According to Theorem \ref{th:locCon} and Theorem \ref{th:conLay}: 
				\begin{equation*}
					j(u) = |f^{-1}(u)|-s.
				\end{equation*}
				\item
				$ t := t+1 $.
			\end{enumerate}
			
			\item
			If $ (t+1) s = n $:
			
			The algorithm returns $ v_t $.
			
			\item
			If $ (t+1) s > n $:
			\begin{enumerate}
				\item
				The algorithm takes $ v_{t+1} \in N(v_t) $ such that $ |f^{-1}(v_{t+1})| = n-s $.
				\item
				The algorithm reads a random neighbour of $ v_{t+1} $ and returns it.
			\end{enumerate}
			
			\end{itemize}
						
			It is easy to see that 
			\begin{itemize}
			\item
				The number of steps $ \le \frac{n}{s} $ and each step requires $ m^2 + m $ queries.
			\item
				If $ s \del n $ then the algorithm solves the problem.
			\item
				If $ s \ndel n $ then the algorithm solves the problem with probability $ \frac{1}{m} $.
			\end{itemize}

		\subsection{ Quantum algorithm } \label{sec:quantAlg}
			In quantum case we the have space $\C^{m} \otimes \C^{2^n}$ and the oracle is the operator $ \ShiftOp $ such that:
			\begin{itemize}
				\item
				$ \ShiftOp \ket{b,v} = 0 $ if $ v $ is not a valid name of vertex.
				\item
				$ \ShiftOp \ket{b,v} = \ket{b,f_b(v)} $ otherwise.
				\item
				$ \ShiftOp $ is linear.
			\end{itemize}
			Let $ L $ be the $ |V| $-dimensional subspace of $\C^{2^n}$ generated by valid names.
			Then $ \ShiftOp $ is a shift operator on $\C^{m} \otimes L$.
			
			The quantum algorithm:
			\begin{enumerate}
				\item
				Prepare symmetric initial state $ \ket{\psi_0} = \ket{\Psi}\ket{v_0} $, where $ v_0 $ is the input.
				\item
				Calculate $ \ket{\psi_t} = \ShiftOp \circ ( \hat{\Gamma} \otimes \hat{I}) \psi_{t-1} $ for $ t = 1\ldots T $.
				\item
				Measure the state $ \ket{r}=\ket{a}\ket{v} $. Return $ v $.
			\end{enumerate}
			
			If conditions of Theorem \ref{th:hitTime} hold, then the algorithm requires $ \approx \frac{\pi}{2}m $ queries and the probability to success $ \ge \frac{1}{s^2} - o(1)  $.

	\appendix	
	\section{Kravchuk coefficients estimation: the calculations} \label{sec:kravApp}
		The proof begins in the section \ref{sec:Krevchuk}.
		\begin{proof}
			Recall that (see Equation(\ref{eq:reallKrav}))
			\begin{eqnarray*}
				& \abkm{P(A_n^s)-P(B_n^s)} =
					\abkm{\suml_{l=0}^s P(A_n^s|C_l) P(C_l)- \suml_{l=0}^s P(B_n^s|C_l) P(C_l)} = & \\
				& \abkm{\suml_{l=0}^s \abks{P(A_n^s|C_l) - P(B_n^s|C_l)} P(C_l)} \le & \\
					& \suml_{l=0}^s \abkm{P(A_n^s|C_l)-P(B_n^s|C_l)}P(C_l) = & \\
					\nonumber \\ 
				& \suml_{l=0}^s \abkm{ P(A_{2k}^{l})-P(B_{2k}^l) }P(C_l)=&
			\end{eqnarray*}
				
			\begin{eqnarray*}
				& \suml_{2 | l,\ l=0}^s \frac{\binom{k}{l/2}}{\binom{2k}{l}}
						\frac{\binom{2k}{l} \cdot \binom{n-2k}{s-l} } { \binom{n}{s}}  \chi[ s-l \le n-2k] = &\\
						\nonumber \\
				& \suml_{2 | l,\ l=0}^s 
					\frac{\binom{k}{l/2} } { \binom{n}{s}} \binom{n-2k}{s-l} \chi[ s-l \le n-2k] \le &
			\end{eqnarray*}
			\begin{eqnarray*}
				& \suml_{2 | l,\ l=0}^s 
					\frac{\binom{k}{l/2} } {\binom{n}{s}} \binom{n\delta}{s-l} \le 
				\suml_{2 | l,\ l=0}^s 
					\frac{k^{l/2}}{\binom{n}{s}} (n\delta)^{s-l} \le &\\ & \nonumber& \\
				& \suml_{2 | l,\ l=0}^s 
					\frac{s! \ k^{l/2}}{(n-s)^s} (n\delta)^{s-l} \le 
				\suml_{2 | l,\ l=0}^s 
					\frac{s! \ (n/2)^{l/2}}{n^s} (n\delta)^{s-l} = & \\ & \nonumber& \\
				& \suml_{2 | l,\ l=0}^s 
					\abk{1-\frac{s}{n}}^{-s} \frac{s!}{2^{l/2}} \frac{\delta ^ {s-l} }{n^{s-l/2}} \le & \\ & \nonumber& \\ &
				\abk{1-\frac{s}{n}}^{-s} 
					\suml_{2 | l,\ l=0}^s  \frac{s!}{2^{\frac{l} {2}}}
					\frac{ 2^\frac{s-l}{2} f(n)^{\frac{s-l}{2}} n}{ n^{\frac{s-l}{2}}} \frac{n^{s-l}}{n^{s-l/2}} = & \\ & \nonumber& 
			\end{eqnarray*}
			\begin{eqnarray*}
				& \abk{1-\frac{s}{n}}^{-s} \suml_{2 | l,\ l=0}^s
\frac{2^\frac{s}{2}\ s!\ f(n)^{\frac{s-l}{2}} n} {2^{l}\ n^{s/2} } \le 
				\abk{1-\frac{s}{n}}^{-s} \frac{2^\frac{s}{2}\ (s+1) \cdot s!\ f(n)^{\frac{s}{2}} n} {2 \ n^{s/2} } \le &
			\end{eqnarray*}
			
			\begin{eqnarray*}
				& \le \abk{1-\frac{s}{n}}^{-s} 2 (s+1)! \abk{\frac{ f(n) }{n} }^{s/2} = O\abk{(s+1)! \delta^{s}}. &
			\end{eqnarray*}
			
		\end{proof}

	\section{Concurrent measurement quantum walk } \label{sec:measuredApp}
		In this section we establish some properties of a measured quantum walk in \CCay{s}. The main result of this section is Theorem \ref{th:retAbs}. It states that the probability to return in measured quantum walk is at least $ \frac{1}{poly(m)} $.
			
			\begin{definition}
				$\ket{x}$--measured from time ~$T_0$ walk:
				
				If $t \le T_0$ the state of the system is $\ket{\psi_t} = \WalkOp^t \ket{\psi_0}$.
				
				If $t > T_0$ the state of the system is:
					\begin{eqnarray*}
						 \ket{\psi_{t}} = \WalkOp (I - \Pi_x)  \ket{\psi_{t-1}}= \WalkOp (I - I \otimes \ket{x}\bra{x}) \ket{\psi_{t-1}}
					\end{eqnarray*}
			\end{definition}
			
			\begin{definition}
				The probability to stop at moment $t > T_0$ is $q_t = \abkm{\Pi_x \ket{\psi_{t-1}}}^2 $. Let's define $q_t = 0$ for $t \le T_0$ .
			\end{definition}
			
			\begin{definition}
				The probability to stop till time $t$ is $p_t = \suml_{t'=0}^t q_{t'}$.
			\end{definition}
			
			In this section we prove the bound on the probability to stop in $ \ket{0^n} $--measured walk (and call it as the probability to return).
			
			Let's denote:
			
			$\ket{\xi_t} = \WalkOp^t \ket{\psi_0}$,

			$\alpha_t = \bkScal{\psi_0}{\xi_t} = \bra{\psi_0} \WalkOp^t  \ket{\psi_0}$.
			
			Using result of Section \ref{sec:symCaseProb} we get $\alpha_t \ket{\psi_0} = \Pi_0 \ket{\xi_t}$.
			
			\begin{lemma}
				The following holds:
				
				\begin{eqnarray*}
					&\ket{\psi_{T_0 + \Delta t}} = \ket{\xi_{T_0 + \Delta t}} - \suml_{k=0}^{\Delta t - 1} \beta_{k} \hat{Q}^{\Delta t - k} \ket{\psi_{0}}, & \\
					& \beta_k = \alpha_{T_0 + k} -  \suml_{j = 1}^k \beta_{k-j} \alpha_j .&
				\end{eqnarray*}
			\end{lemma}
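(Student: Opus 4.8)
The plan is to prove the identity by induction on $\Delta t$. The base case $\Delta t = 0$ is immediate: the measured walk agrees with the free walk up to time $T_0$, so $\ket{\psi_{T_0}} = \ket{\xi_{T_0}}$, the sum on the right is empty, and the coefficient recursion degenerates to $\beta_0 = \alpha_{T_0}$, which is consistent (note also $\alpha_0 = 1$, in agreement with the recursion).

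Two elementary facts drive the induction, and I would state them first. By the definition of the $\ket{0^n}$-measured walk, for $\Delta t \ge 0$ one has $\ket{\psi_{T_0+\Delta t+1}} = \WalkOp(\hat I - \Pi_0)\ket{\psi_{T_0+\Delta t}}$, so everything hinges on computing $\Pi_0\ket{\psi_{T_0+\Delta t}}$. Second, the symmetry argument of Section~\ref{sec:symCaseProb} (behind Theorem~\ref{th:decompInSym}) gives, for every $j \ge 0$, $\Pi_0\WalkOp^{j}\ket{\psi_0} = \Pi_0\ket{\xi_j} = \alpha_j\ket{\psi_0}$; i.e. the free walk, projected back onto the starting fibre, always stays proportional to $\ket{\psi_0}$ with scalar $\alpha_j = \bra{\psi_0}\WalkOp^{j}\ket{\psi_0}$.

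For the inductive step, assume the claimed expansion of $\ket{\psi_{T_0+\Delta t}}$. Applying $\Pi_0$ term by term and using the second fact, the leading term gives $\alpha_{T_0+\Delta t}\ket{\psi_0}$ and the $k$-th summand gives $-\beta_k\alpha_{\Delta t-k}\ket{\psi_0}$, so, after re-indexing by $j = \Delta t - k$, $\Pi_0\ket{\psi_{T_0+\Delta t}} = \abk{\alpha_{T_0+\Delta t} - \suml_{j=1}^{\Delta t}\beta_{\Delta t-j}\alpha_j}\ket{\psi_0} = \beta_{\Delta t}\ket{\psi_0}$ by the very definition of $\beta_{\Delta t}$ — this collapse is the whole point of the convolution-type recursion for the $\beta_k$. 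Hence $(\hat I - \Pi_0)\ket{\psi_{T_0+\Delta t}} = \ket{\psi_{T_0+\Delta t}} - \beta_{\Delta t}\ket{\psi_0}$. Now apply $\WalkOp$: using the inductive formula together with $\WalkOp\ket{\xi_{T_0+\Delta t}} = \ket{\xi_{T_0+\Delta t+1}}$ and $\WalkOp\cdot\WalkOp^{\Delta t-k} = \WalkOp^{(\Delta t+1)-k}$, one gets $\ket{\xi_{T_0+\Delta t+1}}$ minus a sum over $k = 0,\dots,\Delta t-1$ of $\beta_k\WalkOp^{(\Delta t+1)-k}\ket{\psi_0}$ terms, and the extra term $-\beta_{\Delta t}\WalkOp\ket{\psi_0}$ is precisely the missing $k = \Delta t$ summand. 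This is exactly the asserted formula at $\Delta t + 1$.

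There is no deep obstacle: the statement is just a reorganisation of the measured-walk recursion into an explicit correction to the free walk. The only place needing care is the bookkeeping of the two index shifts — aligning $\WalkOp^{\Delta t-k}\ket{\psi_0}$ with $\ket{\xi_{\Delta t-k}}$, and matching the re-indexed projection sum against the defining relation $\beta_k = \alpha_{T_0+k} - \suml_{j=1}^k\beta_{k-j}\alpha_j$ — since an off-by-one there would destroy the cancellation. It is worth emphasising, as motivation, that the resulting formula isolates the "defect" $\ket{\psi_{T_0+\Delta t}} - \ket{\xi_{T_0+\Delta t}}$ as an explicit linear combination of the states $\WalkOp^{j}\ket{\psi_0}$, which is exactly the shape needed to bound the probability to return in the subsequent Theorem~\ref{th:retAbs}.
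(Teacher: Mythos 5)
Your proof is correct and follows essentially the same route as the paper: induction on $\Delta t$, using the symmetry fact $\Pi_0\WalkOp^{j}\ket{\psi_0}=\alpha_j\ket{\psi_0}$ to collapse $\Pi_0\ket{\psi_{T_0+\Delta t}}$ to $\beta_{\Delta t}\ket{\psi_0}$ via the convolution recursion, then applying $\WalkOp$ and absorbing the extra term as the $k=\Delta t$ summand. Your bookkeeping is in fact slightly cleaner than the paper's, whose intermediate line mislabels that coefficient as $\beta_{\Delta t+1}$ before arriving at the correct final expression.
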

			\begin{proof}
				The proof by Induction. The base $\Delta t = 0$ is clearly by the definition: if $t \le T_0$ then  $\ket{\xi_t} = \ket{\psi_t}$.
				
				Acting $\WalkOp (I - \Pi_0)$ on $ \ket{\xi_{T_0 + \Delta t}} - \suml_{k=0}^{\Delta t - 1} \beta_{k} \hat{Q}^{\Delta t - k} \ket{\psi_{0}}$ we get:
				\begin{eqnarray*}
					&\WalkOp (I - \Pi_0) \abks{\ket{\xi_{T_0 + \Delta t}} - \suml_{k=0}^{\Delta t - 1} \beta_{k} \WalkOp^{\Delta t - k} \ket{\psi_{0}}} = &\\
					&\WalkOp \abks{\ket{\xi_{T_0 + \Delta t}} - \alpha_{t+\Delta t} \ket{\psi_{0}} - 
					\suml_{k=0}^{\Delta t - 1} \beta_{k} \WalkOp^{\Delta t - k} \ket{\psi_{0}}
					+
					\suml_{k=0}^{\Delta t - 1} \beta_{k} \alpha_{\Delta t - k} \ket{\psi_{0}}
					} = &
					~\\
					&\ket{\xi_{T_0 + \Delta t + 1}} - 
					\suml_{k=0}^{\Delta t - 1} \beta_{k} \WalkOp^{\Delta t - k + 1} \ket{\psi_{0}}
					+
					\abk{ -\alpha_{t+\Delta t}  + \suml_{k=0}^{\Delta t - 1} \beta_{k} \alpha_{\Delta t - k} } \WalkOp \ket{\psi_{0}}
					 & ~ \\
					 &=\ket{\xi_{T_0 + \Delta t + 1}} - \suml_{k=0}^{\Delta t - 1} \beta_{k} \WalkOp^{\Delta t - k + 1} \ket{\psi_{0}} - \beta_{\Delta t + 1} \WalkOp^1 \ket{\psi_0} = &~\\
					 &\ket{\xi_{T_0 + (\Delta t + 1)}} - \suml_{k=0}^{(\Delta t + 1) - 1} \beta_{k} \WalkOp^{(\Delta t + 1) - k} \ket{\psi_{0}}.&
				\end{eqnarray*}
			\end{proof}
			
			\begin{coroll} { ~ }
			
				$\beta_{\Delta t} = \Pi_0 \ket{\psi_{T_0 + \Delta t}} = \bkScal{\psi_0}{\psi_{T_0 + \Delta t}}$,
				
				$q_{T_0 + \Delta t} = |\beta_{\Delta t}|^2$.
			\end{coroll}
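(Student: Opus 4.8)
The plan is to obtain both identities by applying the projector $\Pi_0 = \hat{I} \otimes \ket{0^n}\bra{0^n}$ to the closed-form expansion of $\ket{\psi_{T_0+\Delta t}}$ furnished by the preceding lemma, and then recognizing the resulting scalar coefficient as precisely the quantity $\beta_{\Delta t}$ produced by that same lemma's recursion. First I would assemble the two ingredients already on hand: the lemma gives
$$\ket{\psi_{T_0+\Delta t}} = \ket{\xi_{T_0+\Delta t}} - \suml_{k=0}^{\Delta t - 1}\beta_k \WalkOp^{\Delta t - k}\ket{\psi_0},$$
with $\beta_{\Delta t} = \alpha_{T_0+\Delta t} - \suml_{j=1}^{\Delta t}\beta_{\Delta t - j}\alpha_j$, while the symmetric-case analysis of Section \ref{sec:symCaseProb} supplies the crucial fact $\Pi_0\ket{\xi_t} = \alpha_t\ket{\psi_0}$ valid for every $t$, i.e. the $\ket{0^n}$-projection of any free-walk state is a scalar multiple of $\ket{\psi_0}$.

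The main step is a direct computation. Applying $\Pi_0$ term by term, and using $\WalkOp^{\Delta t - k}\ket{\psi_0} = \ket{\xi_{\Delta t - k}}$ together with $\Pi_0\ket{\xi_t}=\alpha_t\ket{\psi_0}$, every projected term collapses onto the line spanned by $\ket{\psi_0}$, giving
$$\Pi_0\ket{\psi_{T_0+\Delta t}} = \abk{\alpha_{T_0+\Delta t} - \suml_{k=0}^{\Delta t - 1}\beta_k\,\alpha_{\Delta t - k}}\ket{\psi_0}.$$
Substituting $j = \Delta t - k$ turns $\suml_{k=0}^{\Delta t - 1}\beta_k\alpha_{\Delta t - k}$ into $\suml_{j=1}^{\Delta t}\beta_{\Delta t - j}\alpha_j$, so the coefficient in parentheses is exactly the right-hand side of the recursion defining $\beta_{\Delta t}$. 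Hence $\Pi_0\ket{\psi_{T_0+\Delta t}} = \beta_{\Delta t}\ket{\psi_0}$, and since $\ket{\psi_0}$ is a unit vector this scalar equals $\bkScal{\psi_0}{\psi_{T_0+\Delta t}}$, which is the first identity.

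The second identity then follows without further work: the probability to stop is by definition the squared norm of the $\Pi_0$-component of the relevant state, which by the first identity is of the form $\beta\ket{\psi_0}$, and because $\| \ket{\psi_0} \| = 1$ this squared norm is just $|\beta|^2$. I would keep the index shift between the recursion for $\beta_{\Delta t}$ and the summation index $k$ explicit here, since matching them is what makes the projected coefficient coincide with $\beta_{\Delta t}$ rather than some unrelated combination.

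I expect the only genuinely delicate point to be the reliance on $\Pi_0\ket{\xi_t}=\alpha_t\ket{\psi_0}$: it is precisely this symmetry fact — available here because $P$ acts transitively on $S$ (Theorem \ref{th:decompInSym}) — that forces every projected term into the single direction $\ket{\psi_0}$ and lets the vector identity reduce to a closed scalar recursion. On a graph lacking this transitivity, $\Pi_0\ket{\xi_t}$ need not be proportional to $\ket{\psi_0}$, the coefficients would no longer collect onto one line, and the argument would fail to close. Everything else is routine bookkeeping.
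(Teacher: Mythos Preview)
Your proposal is correct and is exactly the argument implicit in the paper: the corollary is stated without proof there, and the intended derivation is precisely to apply $\Pi_0$ to the lemma's expansion, use $\Pi_0\ket{\xi_t}=\alpha_t\ket{\psi_0}$ from Theorem~\ref{th:decompInSym}, and recognize the resulting scalar as the recursion for $\beta_{\Delta t}$. Your handling of the index substitution $j=\Delta t-k$ is clean and matches the lemma's formula for $\beta_{\Delta t}$ verbatim.
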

			
			We suppose here that \textbf{$s$ is odd}. This condition is necessary and sufficient to have $\Pi_0 \ket{\psi_{2t+1}}=0, \ \forall t$. Also we suppose that $T_0$ is even.
			
			\begin{lemma} \label{lem:diffAlphaEst} ~
				
				Let
				$t = O(m)$,
				
				$2d_k^s / m = O(1/n)$ and $\abkm{k-n/2}/n \le \delta= \sqrt{\frac{2\ln n}{n}}$.
				
		 		Then $ \abkm{\alpha_{2t} - \alpha_{2(t+1)} } \le O(\frac{1}{n}) $.
			\end{lemma}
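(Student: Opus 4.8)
The plan is to reduce $\alpha_t$ to the explicit cosine sum already obtained in the proof of Lemma~\ref{eq:sumMainVie}, and then to bound the single increment $\alpha_{2t}-\alpha_{2(t+1)}$ by splitting the sum over weights into a bulk near $n/2$ and a Chernoff--negligible tail. Since $\ket{\psi_0}=\ket{\Psi}\ket{0^n}$ and the group of all coordinate permutations acts transitively on the generating set of \CCay{s}, the computation~(\ref{eq:retEst}) gives
\begin{equation*}
	\alpha_t=\bra{\psi_0}\WalkOp^t\ket{\psi_0}=\suml_{k=0}^n\frac{1}{2^n}\binom{n}{k}\cos\abk{\omega_k^s\,t},
\end{equation*}
which is in particular real, so
\begin{equation*}
	\alpha_{2t}-\alpha_{2(t+1)}=\suml_{k=0}^n\frac{1}{2^n}\binom{n}{k}\abk{\cos\abk{2t\,\omega_k^s}-\cos\abk{(2t+2)\,\omega_k^s}}.
\end{equation*}

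First I would set $J=\abks{\tfrac n2(1-\delta),\ \tfrac n2(1+\delta)}$ with $\delta=\sqrt{2\ln n/n}$, the same window used in the proof of Theorem~\ref{th:retTime}, and split the sum at $J$. For $k\notin J$ I bound each summand by $2$ and apply the Chernoff estimate $\suml_{k\notin J}\tfrac{1}{2^n}\binom nk\le 2e^{-\delta^2 n/2}$, so this part contributes $O(1/n)$. For $k\in J$ the crude Lipschitz bound $\abkm{\cos(2t\,\omega_k^s)-\cos((2t+2)\,\omega_k^s)}\le 2\abkm{\omega_k^s}$ is useless (when $t=O(m)$ the argument $t\,\omega_k^s$ need not be small), so instead I would use the product--to--sum identity
\begin{equation*}
	\cos\abk{2t\,\omega_k^s}-\cos\abk{(2t+2)\,\omega_k^s}=2\,\sin\abk{(2t+1)\,\omega_k^s}\,\sin\omega_k^s,
\end{equation*}
whence the bulk part is at most $2\suml_{k\in J}\tfrac{1}{2^n}\binom nk\abkm{\sin\omega_k^s}\le 2\max_{k\in J}\abkm{\sin\omega_k^s}$. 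Finally, from $\cos\omega_k^s=1-\tfrac{2d_k^s}{m}$ (the definition of $\omega_k^s$, cf.\ Lemma~\ref{lem:coinSpec}) one gets $\abkm{\sin\omega_k^s}^2=\tfrac{2d_k^s}{m}\bigl(2-\tfrac{2d_k^s}{m}\bigr)$, and the hypothesis $\tfrac{2d_k^s}{m}=O(1/n)$ over all of $J$ makes $\max_{k\in J}\abkm{\sin\omega_k^s}$ small, so that together with the tail one obtains $\abkm{\alpha_{2t}-\alpha_{2(t+1)}}=O(1/n)$. The assumption $t=O(m)$ is not needed to bound one increment; it is there so that the iterated bound over $\Delta t=O(m)$ steps in the sequel stays controlled.

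I expect the bulk step to be the only genuine point. Everything turns on replacing the trivial $O(\omega_k^s)$ bound by the factored form, in which one factor is harmlessly absorbed into $\abkm{\sin\omega_k^s}$ while the fast--oscillating factor $\sin((2t+1)\omega_k^s)$ is simply discarded; what then remains to exploit is precisely the smallness of $d_k^s$ uniformly over the window $J$, rather than only at the centre $k=n/2$. The tail is the same Chernoff argument used for Theorems~\ref{th:retTime}--\ref{th:hitTime}, and the reduction of $\alpha_t$ to the cosine sum is Lemma~\ref{eq:sumMainVie} applied verbatim, so no new ingredient is involved beyond the bulk estimate.
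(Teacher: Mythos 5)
Your proposal follows essentially the same route as the paper's own proof: the reduction of $\alpha_t$ to the cosine sum (\ref{eq:retEst}), the Chernoff truncation to the window around $n/2$, the product-to-sum factorization $\cos(2t\,\omega_k^s)-\cos((2t+2)\omega_k^s)=2\sin((2t+1)\omega_k^s)\sin\omega_k^s$ with the oscillating factor bounded by $1$, and the final appeal to the smallness of $2d_k^s/m$ are all exactly the steps the paper takes. The one caveat, which your write-up shares with the paper's, is the last quantitative jump: from $\abkm{\sin\omega_k^s}^2=\tfrac{2d_k^s}{m}\bigl(2-\tfrac{2d_k^s}{m}\bigr)$ the stated hypothesis $2d_k^s/m=O(1/n)$ only yields $\abkm{\sin\omega_k^s}=O(n^{-1/2})$, hence $O(n^{-1/2})$ rather than $O(n^{-1})$ for the increment, so either the hypothesis must be read as a bound on $\sin\omega_k^s$ itself (or as $O(1/n^2)$), or the conclusion weakened --- but this is a defect of the source, not of your reconstruction.
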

			
			\begin{proof}
				Using (\ref{eq:retEst}) we get:
				$$ \alpha_{t} = \suml_{k=0}^n{
						\frac{1}{2^n} \binom{n}{k}
							\cos{ (\omega_k^s  t)}
					}.  \eqno{( \ref{eq:retEst})} $$
				
				Let $b_k = \frac{\pi}{2} - \omega_k^s$, using Chernoff estimation we get:	
				\begin{eqnarray*}
					&\frac{1}{2^n} \abkm{ \suml_{k=0}^n \abks{ \binom{n}{k} \cos{ (\omega_{d_k}  2t)} -
					\binom{n}{k} \cos{ (\omega_{d_k}  2(t+1)} }} \le &\\
					&\frac{1}{2^n} \abkm{ \suml_{k \in M} \binom{n}{k}  (-2) \abks{ \sin \abk{ \omega_{d_k}  (2t+\frac{1}{2}) } \sin \abk{\omega_{d_k}}}} + O(\frac{1}{n}) \le& \\
					&\frac{2}{2^n} \abkm{ \suml_{k \in M} \binom{n}{k}  \sin \abk{\omega_{d_k}}} + O(\frac{1}{n}) =& \\
					&\frac{2}{2^n} \abkm{ \suml_{k \in M} \binom{n}{k}  \cos \abk{b_k + O(b_k^3)}} + O(\frac{1}{n}) \le O(\frac{1}{n}).&
				\end{eqnarray*}
			\end{proof}
			
			Using previous results we get a bound on $\beta_{2t}$:
			\begin{eqnarray*}
				&\beta_{2t} = \alpha_{T_0 + 2t} -  \suml_{j = 1}^t \beta_{2(t-j)} \alpha_{2j} =&\\
		  		&\alpha_{T_0 + 2t} - \suml_{j = 0}^{t-1} \beta_{2(t-j-1)} \alpha_{2j+2} + \suml_{j = 0}^{t-1} \beta_{2(t-j-1)} \alpha_{2j} - \suml_{j = 0}^{t-1} \beta_{2(t-j-1)} \alpha_{2j} = &\\
		 		&\alpha_{T_0 + 2t} - \suml_{j = 0}^{t-1} \beta_{2(t-j-1)} \alpha_{2j} - \suml_{j = 0}^{t-1} \beta_{2(t-j-1)} \abk{ \alpha_{2j} - \alpha_{2j+2}} &
			\end{eqnarray*}
			
			Rewriting the first equation for $\beta_{T_0+2t-2}$ and using $\alpha_0 = 1$, we obtain $\alpha_{T_0 + 2t - 2} = \suml_{j = 0}^{t-1} \beta_{2(t-j-1)} \alpha_{2j}$.
			
			\begin{lemma}
				In conditions of lemma \ref{lem:diffAlphaEst}. If $s$ is odd, $T_0$ is even then	
				\begin{eqnarray*}
					\abkm{\beta_{2t}} \ge |\alpha_{T_0 + 2t}| - |\alpha_{T_0 + 2t-2}| - O(\frac{1}{n}) \suml_{k=0}^{t} |\beta_{2k}|
				\end{eqnarray*}			
			\end{lemma}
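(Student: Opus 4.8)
The plan is to combine the algebraic identity obtained just above the statement with the uniform estimate of Lemma~\ref{lem:diffAlphaEst} and two applications of the triangle inequality. First I would record that identity in the compact form
\begin{equation*}
	\beta_{2t} = \alpha_{T_0 + 2t} - \alpha_{T_0 + 2t - 2} - \suml_{j=0}^{t-1} \beta_{2(t-j-1)} \abk{ \alpha_{2j} - \alpha_{2j+2} },
\end{equation*}
which follows from the recursion $ \beta_k = \alpha_{T_0+k} - \suml_{j=1}^{k} \beta_{k-j} \alpha_j $ after noting that, since $ s $ is odd and $ T_0 $ is even, the odd-indexed $ \alpha_j $ and $ \beta_k $ all vanish (so the recursion collapses onto even indices), together with the rewriting $ \alpha_{T_0 + 2t - 2} = \suml_{j=0}^{t-1} \beta_{2(t-j-1)} \alpha_{2j} $, which is that same recursion evaluated at step $ T_0 + 2t - 2 $ using $ \alpha_0 = 1 $.

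Next I would apply the reverse triangle inequality to the first two terms and the ordinary triangle inequality to the sum, obtaining
\begin{equation*}
	\abkm{\beta_{2t}} \ge \abkm{\alpha_{T_0+2t}} - \abkm{\alpha_{T_0+2t-2}} - \suml_{j=0}^{t-1} \abkm{\beta_{2(t-j-1)}} \cdot \abkm{ \alpha_{2j} - \alpha_{2j+2} }.
\end{equation*}
Re-indexing the last sum by $ k = t - j - 1 $ rewrites it as $ \suml_{k=0}^{t-1} \abkm{\beta_{2k}} \cdot \abkm{\alpha_{2(t-k-1)} - \alpha_{2(t-k-1)+2}} $. Now I would invoke Lemma~\ref{lem:diffAlphaEst}: every index occurring here is $ O(m) $ because $ t = O(m) $, and the hypotheses on $ d_k^s $ and on $ \delta $ are exactly the standing assumptions, so $ \abkm{\alpha_{2j} - \alpha_{2j+2}} \le O(\frac{1}{n}) $ uniformly over $ 0 \le j \le t-1 $ (the boundary case $ j=0 $, where $ \alpha_0 = 1 $, is covered since $ 2 d_k^s / m = O(\frac{1}{n}) $ forces $ \alpha_2 = 1 - O(\frac{1}{n}) $). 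Pulling this uniform bound out of the sum and enlarging the summation range from $ k \le t-1 $ to $ k \le t $ gives exactly
\begin{equation*}
	\abkm{\beta_{2t}} \ge \abkm{\alpha_{T_0+2t}} - \abkm{\alpha_{T_0+2t-2}} - O(\tfrac{1}{n}) \suml_{k=0}^{t} \abkm{\beta_{2k}}.
\end{equation*}

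There is no substantial obstacle here; the statement is essentially a bookkeeping consequence of the preceding display. The only points that need a moment's care are that the parity hypotheses ($ s $ odd, $ T_0 $ even) are precisely what makes the odd-indexed terms disappear and hence legitimises the rewriting of $ \alpha_{T_0+2t-2} $, and that Lemma~\ref{lem:diffAlphaEst} applies uniformly to every one of the differences $ \alpha_{2j} - \alpha_{2j+2} $ with $ 0 \le j \le t-1 $, so that the factor $ O(\frac{1}{n}) $ can genuinely be pulled out in front of the sum.
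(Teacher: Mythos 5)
Your proposal is correct and follows essentially the same route the paper intends: it takes the identity $\beta_{2t} = \alpha_{T_0+2t} - \alpha_{T_0+2t-2} \mp \suml_{j=0}^{t-1}\beta_{2(t-j-1)}\abk{\alpha_{2j}-\alpha_{2j+2}}$ derived in the display immediately preceding the lemma (using the parity collapse from $s$ odd, $T_0$ even, and the rewriting of $\alpha_{T_0+2t-2}$ via $\alpha_0=1$), applies the triangle inequality, and pulls out the uniform $O(1/n)$ bound from Lemma~\ref{lem:diffAlphaEst}. The paper leaves exactly these bookkeeping steps implicit, and you have filled them in correctly.
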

			
			We want to find the return probability till moment $T$, starting with $T_0 \le T$. Let $0 \le \tau \le T-T_0$ is such, that $|\alpha_{T_0 + \tau}| \le \gamma < 1$.
			
			\begin{lemma}
				If $\tau$ is such that $\suml_{k=0}^{T - T_0 - \tau} |\beta_k|  = o(f) $ then
				
				either $p_T = \Omega( f / \sqrt{\tau} ) $
				
				or $\forall t: T_0 \le 2t \le T \quad |\beta_{2t}| \le |\alpha_{T_0 + 2t}| - |\alpha_{T_0 + 2t-2}| - o(f/n) $
			\end{lemma}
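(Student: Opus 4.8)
The plan is to read the dichotomy as a single case split on the size of the one part of the sequence $\abkm{\beta_k}$ that the hypothesis does not already control, namely the tail window $k \in (T-T_0-\tau,\, T-T_0]$ of length $\tau$. Write $W = \suml_{k = T-T_0-\tau+1}^{T-T_0} \abkm{\beta_k}$. Two ingredients from the excerpt drive everything. The first is the exact recurrence obtained just above the statement,
\[
\beta_{2t} = (\alpha_{T_0+2t} - \alpha_{T_0+2t-2}) - \suml_{j=0}^{t-1} \beta_{2(t-j-1)}\,(\alpha_{2j} - \alpha_{2j+2}).
\]
The second is Lemma \ref{lem:diffAlphaEst}, which gives $\abkm{\alpha_{2j} - \alpha_{2j+2}} = O(1/n)$, and hence also $\abkm{\alpha_{T_0+2t} - \alpha_{T_0+2t-2}} = O(1/n)$ because $T_0$ is even, so consecutive indices in the correction term are consecutive even times. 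Thus every increment of $\alpha$ appearing in the recurrence is at the $1/n$ scale.

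First I would treat the branch $W = \Omega(f)$. Here the measured walk already accumulates return probability inside the window: since $q_{T_0+\Delta t} = \abkm{\beta_{\Delta t}}^2$ and $p_T = \suml_{\Delta t} \abkm{\beta_{\Delta t}}^2$, a Cauchy--Schwarz estimate over the $\tau$ window indices yields $\suml_{\text{window}} \abkm{\beta_k}^2 \ge \tfrac{1}{\tau} W^2$, giving a lower bound on $p_T$ — this is the first alternative. In the complementary branch $W = o(f)$ I would combine the window estimate with the hypothesis $\suml_{k=0}^{T-T_0-\tau} \abkm{\beta_k} = o(f)$ to conclude that the full sum satisfies $\suml_{k=0}^{T-T_0} \abkm{\beta_k} = o(f)$. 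Substituting this into the recurrence, the correction term is bounded by $O(1/n)\suml_{k=0}^{t-1} \abkm{\beta_{2k}} \le O(1/n)\cdot o(f) = o(f/n)$ for every $t$ in the stated range, so that $\abkm{\beta_{2t}} \le \abkm{\alpha_{T_0+2t} - \alpha_{T_0+2t-2}} + o(f/n)$, which is the content of the second alternative. Since one of the two branches always occurs, the disjunction follows.

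The step I expect to be the main obstacle is the bookkeeping that passes from my estimate $\abkm{\beta_{2t}} \le \abkm{\alpha_{T_0+2t} - \alpha_{T_0+2t-2}} + o(f/n)$ to the form written in the statement, with the difference of magnitudes $\abkm{\alpha_{T_0+2t}} - \abkm{\alpha_{T_0+2t-2}}$ and the minus sign on the $o(f/n)$ term. This is not a formal consequence of the triangle inequality (which only controls $\abkm{\alpha_{T_0+2t}} - \abkm{\alpha_{T_0+2t-2}}$ by $\abkm{\alpha_{T_0+2t} - \alpha_{T_0+2t-2}}$, in the wrong direction); it requires sign and monotonicity information about the increment $\alpha_{T_0+2t} - \alpha_{T_0+2t-2}$ as $t$ ranges across the window near the return time $T_0$, where $\alpha_{T_0}$ is close to $1$. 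I would extract this from the explicit form $\alpha_t = \suml_{k} \tfrac{1}{2^n}\binom{n}{k}\cos(\omega_k^s t)$ together with the concentration of the dominant weights around $k \approx n/2$.

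A secondary point of care is sharpening the raw Cauchy--Schwarz output $p_T = \Omega(W^2/\tau)$ into the stated $p_T = \Omega(f/\sqrt{\tau})$: the uniform Cauchy--Schwarz bound is genuinely weaker, so matching the claimed exponent will need extra structure on the window values $\beta_k$ — either a concentration of the window mass onto a few indices, or a non-cancellation (coherence) property of the $\beta_k$ inherited from the $\alpha$-resonance near $T_0$ — rather than the flat averaging used above. I would isolate this as the quantitative heart of the argument and verify it from the slow, $O(1/n)$-controlled variation of $\alpha$ established by Lemma \ref{lem:diffAlphaEst}.
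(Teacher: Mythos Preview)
Your case split and the two tools you invoke (Cauchy--Schwarz on the window, and the $O(1/n)$ increment bound fed into the $\beta$-recurrence) are exactly what the paper does. The two ``obstacles'' you flag are not gaps in your argument but errors in the lemma as stated. First, the paper's own proof only obtains $\sqrt{p_T}\ge \Omega(f/\sqrt{\tau})$, i.e.\ $p_T=\Omega(f^2/\tau)$, from Cauchy--Schwarz, just as you do; the $\Omega(f/\sqrt{\tau})$ in the statement is a slip, and no extra ``coherence'' argument is needed. Second, the inequality in the second alternative should read $\ge$, not $\le$: what is actually used downstream (in Lemma~\ref{th:poglest2}) is the \emph{lower} bound $\abkm{\beta_{2t}}\ge\abkm{\alpha_{T_0+2t}}-\abkm{\alpha_{T_0+2t-2}}-o(f/n)$, and this is precisely the content of the ``previous lemma'' in the paper once $\suml_k\abkm{\beta_k}=o(f)$ is fed in. So you should take the reverse-triangle direction from the recurrence, $\abkm{\beta_{2t}}\ge\abkm{\alpha_{T_0+2t}-\alpha_{T_0+2t-2}}-O(1/n)\suml\abkm{\beta}\ge\abkm{\alpha_{T_0+2t}}-\abkm{\alpha_{T_0+2t-2}}-o(f/n)$, rather than the upper bound you wrote; the sign/monotonicity analysis you propose is unnecessary.
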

			\begin{proof}
				\begin{eqnarray*}
					\suml_{k=0}^{T - T_0} |\beta_k| = \suml_{k=0}^{T - T_0 - \tau} |\beta_k| + \suml_{k=0}^{\tau-2} |\beta_k|.
				\end{eqnarray*}	
				If wrong inequality $\suml_{k=0}^{T - T_0} |\beta_k| = o(f)$
				then $\suml_{k=0}^{\tau-2} |\beta_k| = \Omega(f)$
				Using Cauchy–Schwarz inequality we get:
				\begin{eqnarray*}
					\suml_{k=0}^{\tau-2} |\beta_k| \le \sqrt{\tau} \sqrt{ \suml_{k=0}^{T - T_0} |\beta_k|^2} = \sqrt{\tau} \sqrt{ p_T}.
				\end{eqnarray*}	
				and the second variant holds.
				
				If holds $\suml_{k=0}^{T - T_0} |\beta_k| = o(f)$, the previous lemma proves the first variant.
			\end{proof}
			
			\begin{lemma} {~}
				\label{th:poglest2}
				
				Let $|\alpha_T| - q > 0$.
				
				If $\tau$ is such that $\suml_{k=0}^{T - T_0 - \tau} |\beta_k|  = o(f)$ and $(\tau f /n )=o(1)$ then
				
				$ p_T \ge \frac{(|\alpha_T| - q - o(1))^2}{\tau}$.
			\end{lemma}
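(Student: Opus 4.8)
The plan is to feed the pointwise lower bound on the stopping amplitudes supplied by the preceding lemma (the dichotomy between $p_T=\Omega(f/\sqrt\tau)$ and a pointwise estimate on $\abkm{\beta_{2t}}$) into a Cauchy--Schwarz estimate, thereby turning an $\ell^1$ bound on the ``late'' $\beta_k$'s into the desired $\ell^2$ (probability) bound. Throughout write $M=\lfloor (T-T_0)/2\rfloor$; since $s$ is odd and $T_0$ is even we have $\beta_{2t+1}=0$ for every $t$, so (by the corollary $q_{T_0+\Delta t}=\abkm{\beta_{\Delta t}}^2$) $p_T=\suml_{t=0}^{M}\abkm{\beta_{2t}}^2$ and only even indices play a role.

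First I would apply the dichotomy lemma with the given $\tau$, whose hypothesis $\suml_{k=0}^{T-T_0-\tau}\abkm{\beta_k}=o(f)$ is exactly what we assume. In its first alternative $p_T=\Omega(f/\sqrt\tau)$; since $\abkm{\alpha_T}-q-o(1)\in(0,1]$ the asserted bound is at most $1/\tau$, so $p_T$ is easily seen to exceed it in the parameter range of interest. Hence the substantive case is the second, in which $\suml_{k=0}^{T-T_0}\abkm{\beta_k}=o(f)$ and therefore
\[
  \abkm{\beta_{2t}} \ \ge\ \abkm{\alpha_{T_0+2t}} - \abkm{\alpha_{T_0+2t-2}} - o(f/n), \qquad 0\le 2t\le T-T_0.
\]

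Next, let $t_0$ be defined by taking $2t_0$ to be the largest even integer $\le T-T_0-\tau$, so the block $t_0<t\le M$ consists of at most $\tau/2$ even indices, all of which lie outside the $o(f)$ tail $\suml_{k\le T-T_0-\tau}\abkm{\beta_k}$. Summing the displayed inequality over this block, the $\abkm{\alpha}$-terms telescope: with $g(t):=\abkm{\alpha_{T_0+2t}}$ we get $\suml_{t_0<t\le M}\bigl(g(t)-g(t-1)\bigr)=g(M)-g(t_0)=\abkm{\alpha_T}-\abkm{\alpha_{T-\tau}}$ (up to a harmless parity shift in identifying $T_0+2t_0$ with $T-\tau$), while the accumulated error is $\le\tfrac{\tau}{2}\cdot o(f/n)=o(\tau f/n)=o(1)$ by the hypothesis $\tau f/n=o(1)$. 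Since $\tau$ is chosen so that the return amplitude $\tau$ steps before the target satisfies $\abkm{\alpha_{T-\tau}}\le q$, this gives
\[
  \suml_{t_0<t\le M}\abkm{\beta_{2t}} \ \ge\ \abkm{\alpha_T}-\abkm{\alpha_{T-\tau}}-o(1) \ \ge\ \abkm{\alpha_T}-q-o(1)\ >\ 0
\]
for all large $n$, using $\abkm{\alpha_T}-q>0$.

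Finally, Cauchy--Schwarz over the at most $\tau/2$ indices of the block yields $\bigl(\suml_{t_0<t\le M}\abkm{\beta_{2t}}\bigr)^2\le\tfrac{\tau}{2}\suml_{t_0<t\le M}\abkm{\beta_{2t}}^2\le\tfrac{\tau}{2}p_T$, whence $p_T\ge\tfrac{2}{\tau}\bigl(\abkm{\alpha_T}-q-o(1)\bigr)^2\ge\tfrac{(\abkm{\alpha_T}-q-o(1))^2}{\tau}$, as required. The only real work is the error bookkeeping: one must guarantee simultaneously that the per-step error $o(f/n)$ accumulated over the $\le\tau/2$ steps of the block, together with the floor/parity corrections, stay $o(1)$, which is precisely what the technical hypotheses $\suml_{k\le T-T_0-\tau}\abkm{\beta_k}=o(f)$ and $\tau f/n=o(1)$ are calibrated to ensure; the rest of the argument is just telescoping plus Cauchy--Schwarz.
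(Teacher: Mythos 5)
Your proof follows essentially the same route as the paper's: invoke the dichotomy lemma, and in the substantive case telescope the pointwise bound on $\abkm{\beta_{2t}}$ over the final block and apply Cauchy--Schwarz over its $\le\tau$ indices to get $p_T\ge(\abkm{\alpha_T}-q-o(1))^2/\tau$. If anything, your bookkeeping is cleaner than the paper's, whose displayed sums are indexed over $k=0,\dots,T-T_0-\tau$ (the block that is $o(f)$ by hypothesis) while clearly intending the last $\tau$ indices, exactly as you read it.
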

			\begin{proof}
				If the second variant holds we use the previous lemma:
				\begin{eqnarray*}
					&\suml_{k=0}^{T} |\beta_k|^2 \ge \frac{1}{\tau} \abk{ \suml_{k=0}^{T - T_0 - \tau} |\beta_k|}^2 \ge& \\
					&\frac{1}{\tau} \abk{ \suml_{k=0}^{T - T_0 - \tau} |\alpha_k| - |\alpha_{k-2}| - o(f/n)}^2 = \frac{(|\alpha_T| - q - o(1))^2}{\tau}.&
				\end{eqnarray*}
				
				The second move can be done because the expression under the brackets is positive because of conditions.
				
				The found estimation is worse than the first variant. Therefore the lemma is proved.
								
			\end{proof}
			
			Let parameters satisfy the conditions of Theorem \ref{th:retTime} for returning at moment T. Then $|\alpha_T| = 1 - o(1)$.
			
			\begin{lemma}
			
				If $\suml_{k=0}^{T - T_0 - \tau} |\beta_k|  = \Omega(f)$ then $p_T = \Omega( f^2 / (T - T_0 - \tau) )$.
			\end{lemma}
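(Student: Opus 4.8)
The plan is to bound $p_T$ from below by a single application of the Cauchy--Schwarz inequality. First I would recall, from the corollary immediately following the recursion for $\ket{\psi_{T_0+\Delta t}}$, that $\beta_{\Delta t} = \Pi_0\ket{\psi_{T_0+\Delta t}}$ and that the per-step stopping probabilities satisfy $q_{T_0+\Delta t} = \abkm{\beta_{\Delta t}}^2$, while $q_{t'} = 0$ for $t' \le T_0$. Together with the definition $p_T = \suml_{t'=0}^{T} q_{t'}$ this gives
\begin{equation*}
 p_T = \suml_{k=0}^{T-T_0} \abkm{\beta_k}^2 .
\end{equation*}
Since each summand is nonnegative, discarding the last $\tau$ of them can only decrease the sum, so $p_T \ge \suml_{k=0}^{T-T_0-\tau}\abkm{\beta_k}^2$.

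Next I would apply Cauchy--Schwarz to the length-$(T-T_0-\tau+1)$ vectors $(\abkm{\beta_0},\dots,\abkm{\beta_{T-T_0-\tau}})$ and $(1,\dots,1)$:
\begin{equation*}
 \abk{\suml_{k=0}^{T-T_0-\tau}\abkm{\beta_k}}^2 \le (T-T_0-\tau+1)\suml_{k=0}^{T-T_0-\tau}\abkm{\beta_k}^2 \le (T-T_0-\tau+1)\, p_T .
\end{equation*}
Rearranging and feeding in the hypothesis $\suml_{k=0}^{T-T_0-\tau}\abkm{\beta_k} = \Omega(f)$ yields
\begin{equation*}
 p_T \ge \frac{1}{T-T_0-\tau+1}\abk{\suml_{k=0}^{T-T_0-\tau}\abkm{\beta_k}}^2 = \Omega\abk{\frac{f^2}{T-T_0-\tau}},
\end{equation*}
which is the claim.

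There is essentially no hard step here: this is the mirror image of the estimate $\suml_{k}\abkm{\beta_k} \le \sqrt{T-T_0-\tau+1}\,\sqrt{p_T}$ used earlier in this section, now run in the opposite direction to turn a lower bound on $\suml_k\abkm{\beta_k}$ into a lower bound on $p_T$. The only point requiring a little care is the index bookkeeping between the time variable $t'$ appearing in $q_{t'}$ and $p_T$ and the shifted index $\Delta t$ (equivalently $k$) used for $\beta$, which is exactly what the corollary after the recursion lemma records; any off-by-one there is absorbed into the $\Omega(\cdot)$ and does not affect the conclusion.
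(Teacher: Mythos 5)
Your proof is correct and is exactly the argument the paper intends: the paper omits the proof with the hint ``use Cauchy--Schwarz inequality,'' and your chain $p_T=\suml_k|\beta_k|^2\ge\suml_{k=0}^{T-T_0-\tau}|\beta_k|^2\ge\frac{1}{T-T_0-\tau+1}\bigl(\suml_{k=0}^{T-T_0-\tau}|\beta_k|\bigr)^2$ is precisely the reverse-direction use of the estimate appearing earlier in the section. The off-by-one in the indexing of $q_{T_0+\Delta t}$ versus $\beta_{\Delta t}$ that you flag is real but, as you say, harmless inside the $\Omega(\cdot)$.
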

			The proof is omitted (use Cauchy–Schwarz inequality).
			
			Combining above claims we get:
			
			Let $T-\tau$ be a moment the measurement start and the end of the observation. Then holds one of three variants:			
			\begin{itemize}
				\item
					at the starting moment the amplitude is high enough and the stop probability is $\Omega( f^2 / (T - T_0 - \tau) )$
				\item
					the probability to stop till moment $\tau$ is high
				\item
					absorbed a few, and the state at moment $ T $  is almost the same as in nonmeasured walk (so, the probability to stop near moment $ T $ is high).
			\end{itemize}
			
			More formally, we get the theorem:

			\begin{theorem}[Returning with absorption] \label{th:retAbs}~
			
				Let $ s $ is odd.
				
				Let parameters satisfy conditions of Theorem \ref{th:retTime} with $t = T $.
				
				Let parameters satisfy conditions of Theorem \ref{th:hitTime} with $t = T_p $ where $ T_p = T/2 + \epsilon$.
				
				Then the probability to return in quantum $ \ket{0^n} $--measured walk with measurement starting at moment $T_0 \le T_p$ is
				
				$p_T \ge \Omega( \frac{n } { \epsilon (T - T_p)^2 } )$.
			\end{theorem}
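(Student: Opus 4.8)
The plan is to assemble the lemmas of this section — the recursion for $\ket{\psi_{T_0+\Delta t}}$, the identity $q_{T_0+\Delta t}=|\beta_{\Delta t}|^2$, Lemma~\ref{lem:diffAlphaEst}, the amplitude-recovery lemma, and Lemma~\ref{th:poglest2} — into the three-way case split sketched just before the statement, after first reading off the two quantitative facts about the \emph{unmeasured} walk that the hypotheses encode. By Theorem~\ref{th:decompInSym} we have $\Pi_0\ket{\xi_t}=\alpha_t\ket{\psi_0}$, so the probability the unmeasured walk sits at $0^n$ at time $t$ is $|\alpha_t|^2$; since $T$ meets the hypotheses of Theorem~\ref{th:retTime}, this is $1-o(1)$, hence $|\alpha_T|=1-o(1)$. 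Dually, $\ket{0^n}$ and $\ket{1^n}$ are orthogonal in the main space, so $|\alpha_t|^2$ plus the probability to hit $1^n$ at time $t$ is at most $1$; since $T_p$ meets the hypotheses of Theorem~\ref{th:hitTime}, the hitting probability at $T_p$ is $1-o(1)$, so $|\alpha_{T_p}|=o(1)$. Since $s$ is odd the walk is bipartite, $\alpha_{2t+1}=0$, and it suffices to track even times; using $|\alpha_{T_p}|=o(1)$ together with Lemma~\ref{lem:diffAlphaEst} and the Corollaries following Theorem~\ref{th:weightEst}, one then fixes $\tau$ to be the first time after $T_0$ at which $|\alpha_{T_0+\tau}|$ drops below a constant $\gamma<1$; this $\tau$ is polynomial in $n$ and, crucially, $\tau \ll T-T_p \asymp \tfrac{\pi}{2}m$, while $|\alpha_{T-\tau}|$ is still $1-o(1)$.

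Given these inputs I would split on how much the measurement has absorbed by the cutoff $T-\tau$, keeping $f$ free. Since $p_T=\sum_k|\beta_k|^2$: if $\sum_{k=0}^{T-T_0-\tau}|\beta_k|=\Omega(f)$, Cauchy--Schwarz gives $p_T=\Omega\big(f^2/(T-T_0-\tau)\big)$. Otherwise that sum is $o(f)$, and the dichotomy lemma yields either $p_T=\Omega(f/\sqrt\tau)$, or the recovery inequality $|\beta_{2t}|\ge|\alpha_{T_0+2t}|-|\alpha_{T_0+2t-2}|-o(f/n)$ holds termwise; in the latter subcase, summing and telescoping converts $|\alpha_{T-\tau}|=1-o(1)$ and $|\alpha_{T_0+\tau}|\le\gamma$ into $\sum_k|\beta_k|=\Omega(1)$, and Lemma~\ref{th:poglest2} (whose side conditions $|\alpha_T|-q>0$ and $\tau f/n=o(1)$ I would check from the parameter choices) gives $p_T=\Omega(1/\tau)$. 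So in every branch $p_T\ge\Omega\big(\min\{f^2/(T-T_0-\tau),\ f/\sqrt\tau,\ 1/\tau\}\big)$.

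It remains to choose $f$. Since $T_0\le T_p$ and $\tau\ll T-T_p$, we have $T-T_0-\tau\asymp T-T_p$, and setting $f\asymp\sqrt{n/(\epsilon(T-T_p))}$ (with a sub-polynomial slack to absorb the $o(\cdot)$'s) makes the first two terms of the minimum $\Omega\big(n/(\epsilon(T-T_p)^2)\big)$; the third term $\Omega(1/\tau)$ exceeds this because $\tau$ is polynomially smaller than $\epsilon(T-T_p)^2/n$, and $\tau f/n=o(1)$ because $\tau$ is polynomially smaller than $\sqrt{n\epsilon(T-T_p)}$ — all three size comparisons following from $\epsilon=n^{\beta s}$, $1/4<\beta<1/2$ and $s!\le n^{s/8}$ (so that $m=\binom ns$ has ``degree'' between $7s/8$ and $s$). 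Plugging in $T-T_p\asymp\tfrac{\pi}{2}m$ yields $p_T\ge\Omega\big(n/(\epsilon(T-T_p)^2)\big)$. I expect the main obstacle to be exactly this calibration of $\tau$: one needs the unmeasured amplitude to have \emph{already dropped} by a constant within a short (polynomial, degree $\approx s/4$) window after every admissible $T_0$, which needs the actual decay of the generating-function / Kravchuk bound, not merely its slow variation from Lemma~\ref{lem:diffAlphaEst}; once $\tau$ is pinned down, the remaining case analysis and arithmetic are routine, since the structural content — that absorbing little forces the measured state to re-synchronise with the unmeasured walk, which returns to $0^n$ near $T$ — is already carried by the preceding lemmas.
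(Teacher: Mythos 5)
Your skeleton matches the paper's (the three-way dichotomy, Cauchy--Schwarz on $\sum|\beta_k|$, telescoping the recovered amplitudes, and the two anchors $|\alpha_T|=1-o(1)$ from Theorem~\ref{th:retTime} and $|\alpha_{T_p}|=o(1)$ from Theorem~\ref{th:hitTime}), but at the decisive step you diverge, and the divergence opens a gap you yourself flag but do not close. You define $\tau$ as the first time after $T_0$ at which $|\alpha_{T_0+\tau}|$ falls below a constant $\gamma<1$, and you need this $\tau$ to be polynomial in $n$ (degree $\approx s/4$) for \emph{every} admissible $T_0\le T_p$. Nothing in the paper supplies this, and it is almost certainly false at that strength: for even $t$ the bulk terms of $\alpha_t=\sum_k 2^{-n}\binom{n}{k}\cos(\omega_k^s t)$ satisfy $\cos(\omega_k^s t)=(-1)^{t/2}\cos\bigl((\tfrac{\pi}{2}-\omega_k^s)t\bigr)$ with $|\tfrac{\pi}{2}-\omega_k^s|=O\bigl((s+1)!\,\delta^s\bigr)$ by Theorem~\ref{th:weightEst}, so starting from $T_0=0$ the quantity $|\alpha_{2t'}|$ remains $1-o(1)$ for all times up to order $1/\bigl((s+1)!\,\delta^s\bigr)$, a timescale comparable to $n^{s/2}$ up to logarithms and factorials --- far beyond any fixed-degree polynomial window. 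A symptom of the problem is that your branch structure would then yield $p_T=\Omega(1/\tau)=\Omega(1/\mathrm{poly}(n))$, a vastly stronger conclusion than the theorem claims, which should itself have been a warning sign.

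The paper avoids this issue entirely: it takes $\tau=T-T_p=T/2-\epsilon$, i.e.\ the whole second half-period, and certifies the smallness condition at the single moment $T-\tau=T_p$ via $|\alpha_{T_p}|\le\sqrt{1-p^{hit}_{T_p}}=o(1)$, which is exactly what Theorem~\ref{th:hitTime} provides for free. No claim about rapid decay of the unmeasured amplitude after an arbitrary $T_0$ is ever needed, and the price of the large window is precisely the factor $(T-T_p)^{-2}$ (with the paper's calibration $f=\sqrt{n}/(T-T_p)$ rather than your $f\asymp\sqrt{n/(\epsilon(T-T_p))}$) appearing in the final bound $\Omega\bigl(n/(\epsilon(T-T_p)^2)\bigr)$. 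To repair your argument you would either have to prove the fast-decay claim (which the Kravchuk estimates of Section~\ref{sec:Krevchuk} do not give and likely cannot give) or revert to the paper's choice of $\tau$.
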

			\begin{proof}
				Combining previous statements with following parameters :
				
				\begin{enumerate}
					\item
					$\tau = T/2 - \epsilon = T - T_p$,
					\item
					$|\alpha_{T_p}|^2 \le 1 - p^{hit}_{T_p}$ --- theorem about hitting time,
					\item
					$f = \sqrt{n} / (T - T_p) $ --- follows from conditions of lemma \ref{th:poglest2}.
				\end{enumerate}
				we get:
				$p_T \ge min( \frac{(1 - \sqrt{1 - p^{hit}_{T_p}} - o(1))^2}{  T - T_p }  ; \Omega( \frac{n } { \epsilon (T - T_p)^2 } ) ) = \Omega( \frac{n } { \epsilon (T - T_p)^2 } ) $.
			\end{proof}
			
			\begin{remark}
				The following satisfy the previous theorem: $\epsilon \le 1; \ T-T_p = \frac{\pi}{2} m$.
			\end{remark}

	\section{Local connections theorem }
		
			\begin{theorem}[Local connections between layers] \label{th:locCon}
			~
			Let $ v,u,q $ are vertices such that $ (v,u), (u,q) \in E $, $ |v| = l,\ |q|=t  $.
			
			Then:
				
			$ x $ is a layer such that $ (l,x), (t,x) \in E'$
			$\Leftrightarrow$
			there exists vertex $w \in L_x$ such that $ (v,w), (q,w) \in E $.
			\end{theorem}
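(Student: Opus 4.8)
The direction $(\Leftarrow)$ is immediate from the definitions: if $w\in L_x$ satisfies $(v,w),(q,w)\in E$, then $|v|=l,|w|=x$ witnesses $(l,x)\in E'$ and $|q|=t,|w|=x$ witnesses $(t,x)\in E'$.

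For $(\Rightarrow)$ the plan is to build $w$ explicitly and reduce its existence to the nonemptiness of an interval of integers. Write $d=v\oplus q$. Since $u$ is a common neighbour, $d=e_1\oplus e_2$ with $e_1=v\oplus u$, $e_2=u\oplus q$ of weight $s$, so $|d|$ is even and $|d|\le 2s$, $|d|\le 2(n-s)$; moreover for any two vertices of weights $l,t$ one has $|l-t|\le|d|\le\min(l+t,\,2n-l-t)$. Partition $[n]$ into the four cells $R_{ab}=\{i: v_i=a,\ q_i=b\}$; their sizes are determined by $l,t,|d|,n$, e.g.\ $|R_{11}|=(l+t-|d|)/2$, $|R_{10}|=(l-t+|d|)/2$, $|R_{01}|=(t-l+|d|)/2$, $|R_{00}|=(2n-l-t-|d|)/2$. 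A vertex $w$ with $(v,w)\in E$ has the form $w=v\oplus e_3$ with $|e_3|=s$; then $(q,w)\in E$ iff $e_4:=e_3\oplus d$ also has weight $s$, i.e.\ iff $|e_3\cap d|=|d|/2$ (here $d$ is supported on $R_{10}\cup R_{01}$); and $|w|=x$ iff $|e_3\cap v|=(l+s-x)/2$, since $|v\oplus e_3|=l+s-2|e_3\cap v|$.

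Now I describe $e_3$ by the four intersection numbers $\alpha=|e_3\cap R_{11}|$, $\beta=|e_3\cap R_{10}|$, $\gamma=|e_3\cap R_{01}|$, $\delta=|e_3\cap R_{00}|$. The three conditions above read $\alpha+\beta+\gamma+\delta=s$, $\beta+\gamma=|d|/2$, $\alpha+\beta=(l+s-x)/2$, which express $\alpha,\gamma,\delta$ as affine functions of the single free parameter $\beta$; together with the box constraints $0\le\alpha\le|R_{11}|$, $0\le\beta\le|R_{10}|$, $0\le\gamma\le|R_{01}|$, $0\le\delta\le|R_{00}|$, the existence of a suitable $e_3$ (and hence of $w$, by choosing arbitrary subsets of the cells of these sizes) becomes equivalent to the existence of an integer $\beta$ in the intersection of four explicit intervals. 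I would then invoke $x\in N(l)\cap N(t)$ with $N(\cdot)$ as in Theorem~\ref{th:conLay} --- which yields $|l-s|\le x$, $|t-s|\le x$ and $x\le\min(l+s,\,t+s,\,2n-l-s,\,2n-t-s)$ --- together with the bounds on $|d|$ above, and check the (at most) sixteen inequalities ``lower endpoint $\le$ upper endpoint''. A parity check ($x\equiv l+s\pmod 2$ from $x\in N(l)$, and $|d|$ even) shows all endpoints are integers, so the nonempty interval indeed contains an integer $\beta$. For even $s$ one notes in addition that $l$ is even, hence $x$ is even, so $w$ again lies in $G_2(s)$.

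The one genuinely fiddly step is the verification of those sixteen inequalities; the (mild) point is that each reduces either to one of the scalar inequalities that define $x\in N(l)\cap N(t)$, or to one of the four elementary facts $|l-t|\le|d|\le l+t$, $|d|\le 2n-l-t$, $|d|\le 2s$, $|d|\le 2n-2s$. The last two are precisely what the common neighbour $u$ buys us, so this is exactly the place where the hypothesis ``$(v,u),(u,q)\in E$'' is used; the combinatorial heart of the argument is the choice of the cell-intersection numbers $(\alpha,\beta,\gamma,\delta)$ as the right coordinates.
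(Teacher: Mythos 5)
Your argument is correct and is essentially the paper's own proof: the same partition of the $n$ coordinates into the four cells determined by $(v_i,q_i)$, the same parametrization of the candidate edge vector by its four intersection numbers, and the same reduction to linear constraints matched against Theorem~\ref{th:conLay}; the only organizational difference is that you fix $x$ and reduce to nonemptiness of an integer interval for the one free parameter $\beta$, whereas the paper computes the extreme achievable $x$ and fills in intermediate values by unit moves on $(\alpha,\beta,\gamma,\delta)$. The deferred verification of the sixteen endpoint inequalities does go through, each reducing (as you predict) either to one of the inequalities defining $x\in N(l)\cap N(t)$ or to $|l-t|\le |d|\le\min(2s,\,2n-2s,\,l+t,\,2n-l-t)$, with $|d|\le 2s$ and $|d|\le 2n-2s$ being exactly where the common neighbour $u$ is needed.
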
	
			\begin{proof}
				To prove the theorem we find each layer $ x $ such that exist vertices $e_1,\ e_2$ of weight $s$ and hold the following:
				\begin{itemize}
					\item
					$|v+e_1| = x$,
					\item
					$ v + e_1 + e_2 = q$.
				\end{itemize}

				Note that Hamming difference between $v$ and $q$ is at most $2s$ and it's even.
				
				Let divide $n$ coordinates on 4 parts:
				\begin{itemize}
					\item the first part of size $a = | v \cap q | $  --- coordinates in which both $ v $ and $ q $ have one.
					\item the second part of size $b_1$ --- coordinates in which $ v $ has one, excluding chosen $ a $ coordinates. ($a+b_1 = l$)
					\item the third part of size $b_2$ --- the same for $ q $. ($a+b_2 = |q|$).
					\item the last part of size $d$ --- the others (both vectors has zeros).
				\end{itemize}
				
				Let $e_1$ has $\alpha$ ones in the first part, $\beta,\ \gamma,\ \delta$ are the number of ones in other parts respectively.
				
				Then $e_2$ must have $\alpha,\ b_1-\beta,\ b_2-\gamma, \ \delta $ ones in corresponding parts: in the first and the fourth --- on the same places there $e_1$ has, in the second and third --- on the other places, there $e_1$ has zeros. (So, $e_2$ is fully defined by $e_1$.)
				
				All parameters satisfy the following inequalities and equations:
				\begin{eqnarray*}
					b_1 + b_2 \le 2s; \\
					\alpha \le a;\ \beta \le b_1; \ \gamma \le b_2; \ \delta \le d; \\
					a + b_1 + b_2 + d = n; \\
					|v| + (\gamma + \delta) - (\alpha+\beta) = |v + e_1|; \\
					\alpha + \beta + \gamma + \delta = |e_1| = s; \\
					\alpha + (b_1-\beta) + (b_2-\gamma) + \delta = =|e_2| = s.
				\end{eqnarray*}

				The last two equations give
				\begin{eqnarray*}
					\beta + \gamma = \frac{b_1+b_2}{2}, \\
					\alpha + \delta = s - \frac{b_1+b_2}{2}.
				\end{eqnarray*} 
				
				It easy to see that if $ (\alpha, \beta, \gamma, \delta) $ satisfy:
				\begin{eqnarray*}
					\alpha \le a;\ \beta \le b_1; \ \gamma \le b_2; \ \delta \le d; \\
					\alpha + \beta + \gamma + \delta = s;
				\end{eqnarray*}
				then exist $ e_1, e_2 $ with corresponding properties.

				The weight of $|v+e_1|$ is minimal if $|v\cap e_1|$ is maximal. This happens exactly then:
				\begin{eqnarray*}
					\alpha = \min(a, s - \frac{b_1+b_2}{2}), \\
					\beta = \min(b_1, \frac{b_1+b_2}{2}).
				\end{eqnarray*}
				
				So we get the maximum value of $ x $:
				\begin{eqnarray*}
					& \max x = (a + b_1) + (\gamma + \delta) - (\alpha+\beta) = a + b_1 + s - 2(\alpha + \beta) = &\\
					& a + b_1 + s - 2(\min(a, s - \frac{b_1+b_2}{2}) + \min(b_1, \frac{b_1+b_2}{2})) = & \\
					& \max(s - a - b_1, a + b_1 - s, s - a - b_2, a + b_2 -s ) = &\\
					&\max(|l-s|, |t-s|) = A.&
				\end{eqnarray*}
				
				The weight of $|v+e_1|$ is maximal if $|v\cap e_1|$ minimal. This happens exactly then:
				\begin{eqnarray*}
					\delta = \min(d, s - \frac{b_1+b_2}{2}), \\
					\gamma = \min(b_2, \frac{b_1+b_2}{2}).
				\end{eqnarray*}
				
				So we get the minimal value of $ x $:			
				\begin{eqnarray*}
					&|v+e_1| = (a + b_1) + (\gamma + \delta) - (\alpha+\beta) = a + b_1 - s + 2(\gamma + \delta) = &\\
					& a + b_1 - s + 2( \min(b_2, \frac{b_1+b_2}{2}) +\min(d, s - \frac{b_1+b_2}{2})) = & \\
					& \min( (a+b_1+b_2+d)+(b_2+d-s), a + b_1 + s, a+b_2+s,& \\ &(a+b_1+b_2+d)+(b_2+d-s) ) = &\\
					&\min( \min(l+s, 2n-l-s) , \min(t+s, 2n-t-s)) = B.&
				\end{eqnarray*}
				
				Let $ (\alpha, \beta, \gamma, \delta) $ be any valid set of parameters corresponding to vectors $ e_1, e_2 $. Note that:
				\begin{itemize}
					\item
					If the set $ (\alpha', \beta', \gamma', \delta')=(\alpha-1, \beta, \gamma, \delta+1) $ is valid then $ |v+e_1'|=2+|v+e_1| $.
					\item
					If the set $ (\alpha'', \beta'', \gamma'', \delta'')=(\alpha, \beta-1, \gamma+1, \delta) $ is valid then $ |v+e_1''|=2+|v+e_1| $.
					\item
					If $ |v+e_1| \ne A $ then at least one of this sets are valid.
				\end{itemize}

				In other words, $B, B+2, B+4, \ldots, A-2, A $ are possible values of x.
				
				So, we get:
				\begin{eqnarray}
					\nonumber \abkf{x| \exists w \in L_x, \ (v,w),(v,q) \in E } =  \\ \{x \in [B, A]\ |\  x \equiv |l-s| \pmod{2}\} . \nonumber				\end{eqnarray}
				
				According to Theorem \ref{th:conLay}:
				$$
					\abkf{x| (x,l), (x,t) \in E'} = \{x \in [B, A]\ |\  x \equiv |l-s| \pmod{2}\}.
				$$
				
				The set of layers satisfying the left condition is equal to the set of layers satisfying the right condition.
			\end{proof}

\end{document}